\def\dOi{13(4:16)2017}
\newcommand{\tvi}[0]{\vrule height 12pt depth 5pt width 0pt}
\newcommand{\FF}[0]{\mathcal{F}}
\newcommand{\FFs}[1]{\mathcal{F}_{#1}}
\newcommand{\FFgen}[1]{\overline{\FF}}
\newcommand{\XX}[0]{\mathcal{X}}
\newcommand{\RR}[0]{\mathcal{R}}
\newcommand{\RRM}[0]{{\mathcal{R}_{\mathrm M}}}
\newcommand{\SO}[0]{\mathcal{S}} 
\newcommand{\sst}[0]{s} 
\newcommand{\cmark}{\ding{51}}%
\newcommand{\xmark}{\ding{55}}%
\newcommand{\ie}[0]{\textit{i.e.}\xspace}
\newcommand{\eg}[0]{\textit{e.g.}\xspace}
\newcommand{\wrt}[0]{\textit{w.r.t.}\xspace}
\newcommand{\ande}[0]{\texttt{and}}
\newcommand{\ore}[0]{\texttt{or}}
\newcommand{\note}[0]{\texttt{not}}
\newcommand{\impe}[0]{\texttt{imp}}
\newcommand{\atom}[1]{\texttt{#1}}
\newcommand{\atoms}[0]{\texttt{at}}
\newcommand{\atomc}[1]{\atoms(#1)}
\newcommand{\dnf}[0]{\texttt{DNF}}
\newcommand{\dnfr}[0]{\texttt{DNFr}}
\newcommand{\eqdef}[0]{=}
\newcommand{\ra}[0]{\rightarrowtriangle}
\newcommand{\sarrow}[0]{\,\mapsto\,}
\newcommand{\vpn}[1]{\varphi_{#1}}
\newcommand{\phiseq}[0]{\vpn{\seqsym}}
\newcommand{\phich}[0]{\vpn{\gets\!\!\!\!+}}%
\newcommand{\vpg}[0]{\vpn{ }}
\newcommand{\vpS}[0]{\vpn{S}}
\newcommand{\vpo}[0]{\vpn{S_1}}
\newcommand{\vpt}[0]{\vpn{S_2}}
\newcommand{\psif}[1]{\varphi_{f_{#1}}}
\newcommand{\psiF}[0]{\varphi_{f}}
\newcommand{\dummy}[0]{{\bot(\any)}}
\newcommand{\bott}[1]{\bot}
\newcommand\Vtextvisiblespace[1][.5em]{%
  \mbox{\kern.06em\vrule height.3ex}%
  \vbox{\hrule width#1}%
  \hbox{\vrule height.3ex}}
\newcommand{\arty}[0]{ar}
\newcommand{\any}[0]{\rule{5pt}{0.6pt}\kern.08em}
\newcommand{\ap}[0]{{\mbox{\Large!}}}
\newcommand{\apt}[1]{{\mbox{\Large!}^{#1}}\!}
\newcommand{\at}[0]{\mathbin{@}}
\newcommand{\eqto}[0]{=}
\newcommand{\sigsep}{,\;}
\newcommand{\sigd}{\,:\,}
\newcommand{\prosep}{{\times}}
\newcommand{\ctxPlusZero}[0]{\Gamma_{pz}}
\newcommand{\tryShort}[0]{\Try(\seq{\Plus(\Zero,x)\ra x}{X})}
\newcommand{\subj}[0]{\Plus(\Zero,\Plus(\Zero,\Succ(\Zero)))}
\newcommand{\FFmeta}{\FF_\rawappl}
\newcommand{\TFmeta}{\mathcal{T}(\FFmeta)}
\newcommand{\TFXmeta}{\mathcal{T}(\FFmeta, \XX)}
\newcommand{\trmeta}[1]{\mbox{\textopencorner} #1 \mbox{\textcorner}}
\newcommand{\rawappl}{\mathtt{appl}}
\newcommand{\appl}[2]{\mathtt{\rawappl}({#1},{#2})}
\newcommand{\cons}[2]{{#1}::{#2}}
\newcommand{\conss}[0]{::}
\newcommand{\nil}[0]{\mathtt{nil}}
\newcommand{\symb}[1]{\trmeta{#1}}
\newcommand{\rawbotlist}{\bot_{list}}
\newcommand{\botlist}[1]{\rawbotlist(#1)}
\newcommand{\rawrconcat}{\mathtt{rconcat}}
\newcommand{\rawrev}{\mathtt{rev}}
\newcommand{\rawappend}{\mathtt{append}}
\newcommand{\rawpropag}{\mathtt{propag}}
\newcommand{\rconcat}[2]{\rawrconcat({#1},{#2})}
\newcommand{\reverse}[1]{\rawrev({#1})}
\newcommand{\append}[2]{\rawappend({#1},{#2})}
\newcommand{\propag}[1]{\rawpropag({#1})}
\newcommand{\f}[0]{{f}}
\newcommand{\head}[0]{{h}}
\newcommand{\tail}[0]{{q}}
\newcommand{\args}[0]{{args}}
\newcommand{\revtried}[0]{{r\_tried}}
\newcommand{\revdone}[0]{{r\_done}}
\newcommand{\todo}[0]{{todo}}
\newcommand{\llist}{list}
\newcommand{\rawTRm}{\mathbb{T}_{\mathcal M}}
\newcommand{\TRm}[1]{\rawTRm({#1})}
\newcommand{\TRmg}[1]{\mathbb{B}_{\mathcal M}({#1})}
\newcommand{\TF}[0]{\mathcal{T}({\mathcal{F}})}
\newcommand{\TFs}[1]{\mathcal{T}_{#1}(\mathcal{F})}
\newcommand{\TFsf}[2]{\mathcal{T}_{#1}({#2})}
\newcommand{\TFX}{\mathcal{T(F,X)}}
\newcommand{\TFXs}[1]{\mathcal{T}_{#1}\mathcal{(F,X)}}
\newcommand{\XS}{\mathcal{X_S}}
\newcommand{\PPos}[0]{\mathcal{P}os}
\newcommand{\stt}[2]{\ensuremath{#1_{|#2}}}
\newcommand{\rmp}[3]{\ensuremath{{#1\left[#3\right]_{#2}}}}
\newcommand{\var}[1]{\mathcal{V}ar\left({#1}\right)}
\newcommand{\elan}{\textsc{Elan}}
\newcommand{\tom}{\textsc{Tom}}
\newcommand{\aprove}{\textsc{AProVE}}
\newcommand{\TTT}{{TTT2}}
\newcommand{\java}{{Java}}
\newcommand{\C}{\texttt{C}}
\newcommand{\maude}{{Maude}}
\newcommand{\stratego}{{Stratego}}
\newcommand{\strategyanalyser}{\textsc{StrategyAnalyser}}
\newcommand{\step}[2]{{#1}\sred{#2}}
\newcommand{\stratapp}[2]{\ensuremath{ #1 \circ #2}}
\newcommand{\stratappctx}[3]{\ensuremath{ #1 \vdash #2 \circ #3}}
\newcommand{\sred}[0]{\Longrightarrow}
\newcommand{\applysubs}[2]{#1(#2)}
\newcommand{\multieval}[1]{\mathop{\trs_{#1}}}
\newcommand{\TR}[2]{\mathbb{T}_{#1}(#2)}
\newcommand{\TRr}[1]{\mathbb{T}(#1)}
\newcommand{\TRrND}[1]{\mathbb{T}_{nd}(#1)}
\newcommand{\TRctx}[2]{\mathbb{T}({#2})}
\newcommand{\TRctxND}[2]{\mathbb{T}_{nd}({#2})}
\newcommand{\TRctxS}[2]{\mathbb{T}_{\SO}({#2})}
\newcommand{\TRg}[2]{\mathbb{B}({#2})}
\newcommand{\TRgS}[2]{\mathbb{B}_{\SO}({#2})}
\newcommand{\TRgg}[1]{\mathbb{B}({#1})}
\newcommand{\TRgGG}[0]{\TRg{\Gamma}{\Gamma}}
\newcommand{\emptyctx}[0]{\oslash}
\newcommand{\TRlists}[0]{\mathbb{L}}
\newcommand{\id}[0]{\mathit{Identity}}
\newcommand{\fail}[0]{\mathit{Fail}}
\newcommand{\dotsym}[0]{\mathbin{.}}
\newcommand{\seqsymbase}[0]{;} 
\newcommand{\seqsym}[0]{\mathbin{\seqsymbase}} 
\newcommand{\seq}[2]{{#1;#2}}
\newcommand{\seqsub}[2]{{#1\seqsymbase#2}}
\newcommand{\choicesym}[0]{\mathbin{\gets\!\!\!\!\!\!+}} 
\newcommand{\choice}[2]{#1 \choicesym #2}
\newcommand{\choicesub}[2]{#1 \gets\!\!\!\!+ #2}
\newcommand{\all}[0]{\mathit{All}}
\newcommand{\one}[0]{\mathit{One}}
\newcommand{\Try}[0]{\mathit{Try}}
\newcommand{\Repeat}[0]{\mathit{Repeat}}
\newcommand{\OnceBottomUp}[0]{\mathit{OnceBottomUp}}
\newcommand{\OBU}[0]{\mathit{obu}}
\newcommand{\BottomUp}[0]{\mathit{BottomUp}}
\newcommand{\OnceTopDown}[0]{\mathit{OnceTopDown}}
\newcommand{\TopDown}[0]{\mathit{TopDown}}
\newcommand{\Innermost}[0]{\mathit{Innermost}}
\newcommand{\rgf}[0]{\ensuremath{\mathsf{gfx}}}
\newcommand{\rf}[0]{\ensuremath{\mathsf{hx}}}
\newcommand{\dist}[0]{\ensuremath{\mathsf{dist}}}
\newcommand{\factorial}[0]{\ensuremath{\mathsf{fact}}}
\newcommand{\tgf}[0]{t_{\text{gf}}}
\newcommand{\im}[0]{\ensuremath{\mathsf{innermost}}}
\newcommand{\reps}[0]{\ensuremath{\mathsf{repeat}}}
\newcommand{\obu}[0]{\ensuremath{\mathsf{obu}}}
\newcommand{\bu}[0]{\ensuremath{\mathsf{bu}}}
\newcommand{\td}[0]{\ensuremath{\mathsf{td}}}
\newcommand{\tdsos}[0]{\ensuremath{\mathsf{tdStopOnSuccess}}}
\newcommand{\rbu}[0]{\ensuremath{\mathsf{rbu}}}
\newcommand{\bup}[0]{\bu(\propagate)}
\newcommand{\buptwo}[0]{\bu(\propagate{2})}
\newcommand{\propagate}[0]{\ensuremath{\mathsf{propagate}}}
\newcommand{\compile}[0]{\ensuremath{\mathsf{compile}}}
\newcommand{\rename}[0]{\ensuremath{\mathsf{rename}}}
\newcommand{\failres}{\texttt{Fail}\xspace}
\newcommand{\combs}[1]{${#1}$}
\newcommand{\Zero}[0]{\texttt{Z}}
\newcommand{\Succ}[0]{\texttt{S}}
\newcommand{\Plus}[0]{\texttt{+}}
\newcommand{\Mult}[0]{\texttt{*}}
\newcommand{\varS}[1]{\mathcal{FV}ar_{\mathcal X}\left({#1}\right)}
\newcommand{\domS}[1]{\mathcal{D}om_{\mathcal X}\left({#1}\right)}
\newcommand{\ints}[0]{\texttt{Nat}}
\newcommand{\bool}[0]{\texttt{Bool}}
\newcommand{\boolbi}[0]{\mathbb{B}}
\newcommand{\odd}[0]{\texttt{odd}}
\newcommand{\even}[0]{\texttt{even}}
\newcommand{\false}[0]{\texttt{false}}
\newcommand{\true}[0]{\texttt{true}}
\newcommand{\falsebi}[0]{\textrm{ff}}
\newcommand{\truebi}[0]{\textrm{tt}}
\newcommand{\trs}{\longrightarrow\!\!\!\!\!\rightarrow}
\newcommand{\trsredstar}[3]{{#2}\stackrel{#1}{\trs}{#3}}
\newcommand{\trsred}[3]{#1\bullet{#2}\trs{#3}}
\newcommand{\trsrednl}[3]{#1 \\\tag*{$\bullet{#2} \trs{#3}$}}
\newcommand{\trsrednll}[3]{#1 \bullet{#2} \\ \tag*{$\trs{#3}$}}
\newcommand{\supzero}[0]{+}
\newcommand{\Xx}[0]{x}
\newcommand{\Yy}[0]{y}
\newcommand{\Zz}[0]{z}
\newcommand{\Xxs}[2]{\Xx^{#2}_{#1}}
\newcommand{\Yys}[2]{\Yy^{#2}_{#1}}
\newcommand{\varX}[0]{X}
\newcommand{\vare}[0]{\texttt{var}}
\newcommand{\fresh}[0]{\texttt{fresh}}
\begin{document}

\title[(Meta-)encodings of programmable strategies into term rewriting
systems]
{Faithful (meta-)encodings of programmable strategies into term rewriting systems\rsuper*}
\author[H.~Cirstea]{Horatiu Cirstea}
\author[S.~Lenglet]{Sergue{\"i} Lenglet}
\author[P.-E.~Moreau]{Pierre-Etienne Moreau}
\address{Universit{\'e} de Lorraine, CNRS, LORIA, UMR 7503,
F-54506 Vand{\oe}uvre-l{\`e}s-Nancy, France}
\email{\{firstname.lastname@loria.fr\}}

\keywords{Programmable strategies, termination, term rewriting systems.}

\subjclass{ 
F.4 Mathematical Logic and Formal Languages}

\titlecomment{{\lsuper*}This is an extension of ``A faithful encoding of programmable strategies into term rewriting systems'' presented at RTA 2015}

\maketitle 
\begin{abstract}
  Rewriting is a formalism widely used in computer science and
  mathematical logic. When using rewriting as a programming or
  modeling paradigm, the rewrite rules describe the transformations
  one wants to operate and rewriting strategies are used to control
  their application. The operational semantics of these strategies are
  generally accepted and approaches for analyzing the termination of
  specific strategies have been studied. We propose in this paper a
  generic encoding of classic control and traversal strategies used in
  rewrite based languages such as {\maude}, {\stratego} and {\tom}
  into a plain term rewriting system.
  The encoding is proven sound and complete and, as a direct
  consequence, established termination methods used for term rewriting
  systems can be applied to analyze the termination of strategy
  controlled term rewriting systems.
  We show that the encoding of strategies into term rewriting systems
  can be easily adapted to handle many-sorted signatures and we use a
  meta-level representation of terms to reduce the size of the
  encodings.
  The corresponding implementation in {\tom} generates term rewriting
  systems compatible with the syntax of termination tools such as
  {\aprove} and {\TTT}, tools which turned out to be very effective in
  (dis)proving the termination of the generated term rewriting
  systems. The approach can also be seen as a generic strategy
  compiler which can be integrated into languages providing pattern
  matching primitives; experiments in {\tom} show that applying our
  encoding leads to performances comparable to the native {\tom}
  strategies.
\end{abstract}

\section{Introduction}
\label{se:intro}
Term rewriting is a very powerful tool used in theoretical studies as
well as for practical implementations. It is used, for example, in
order to describe the meaning of programming
languages~\cite{meseguer2004,RosuS10}, and also to describe by
inference rules a logic~\cite{GirardLafontTaylor89}, a theorem
prover~\cite{JouannaudKirchnerSIAM86}, or a constraint
solver~\cite{JouannaudKirchner-rob91}.  Besides its use for specifying
and implementing such formalisms, it is also used as an underlying
computation mechanism in systems like Mathematica~\cite{Wolfram2003},
{\maude}~\cite{Maude2:03}, or
{\tom}~\cite{MoreauRV-2003,BallandBKMR-RTA2007}, where rewrite rules
are objects of the language: they can be defined by the user and
manipulated by other constructs.

Rewrite rules, the core concept in term rewriting, consist of a
pattern that describes a schematic situation and the transformation
that should be applied in that particular case.
For example, the following set of rewrite rules specify how to
transform a Boolean expression into its corresponding disjunctive
normal form (DNF):
\[
\begin{array}{l}
 \ande(\ore(x, y), z) \ra \ore(\ande(x, z), \ande(y, z)) \\
 \ande(z, \ore(x, y)) \ra \ore(\ande(z, x), \ande(z, y))
\end{array}
\]
The left-hand side of each rule, \ie\ its pattern, expresses a
potentially infinite number of expressions on which the rule can be
applied; these expressions are obtained by instantiating the variables
$x,y,z$ with arbitrary Boolean expressions.  The application of the
rewrite rule is decided using a (matching) algorithm which only
depends on the pattern and on the term it is applied on, and not on
the context in which it is applied. Given a term, a rewrite rule can
be thus potentially applied on any of its sub-terms. The expression
$\ande(\ore(\atom{P},\atom{Q}),\ande(\ore(\atom{P},\atom{Q}),\atom{R}))$
where $\atom{P}$, $\atom{Q}$, $\atom{R}$ are atoms (constants) of the
language, is transformed either into
$\ore(\ande(\atom{P},\ande(\ore(\atom{P},\atom{Q}),\atom{R})),\ande(\atom{Q},\ande(\ore(\atom{P},\atom{Q}),\atom{R})))$,
if we applied the first rule on the whole term, or into
$\ande(\ore(\atom{P},\atom{Q}),\ore(\ande(\atom{P},\atom{R}),\ande(\atom{Q},\atom{R})))$
if we apply it on the $\ande$ sub-term. The latter term can be further
reduced in two different ways depending on whether we apply the first
or the second rule.
Such sets of rewrite rules, called \emph{term rewriting
  systems}~(TRS), are thus very convenient for describing
schematically the transformations one wants to operate.

In many situations, the application of a set of rewrite rules to a
term eventually leads to the same final result independently on the
way the rules are applied, and in such cases we say that the rewrite
rules are \emph{confluent} and \emph{terminating}. This is for example
the case for the above two rules.
When using rewriting as a programming or modeling paradigm, it is
nevertheless common to consider TRS that are non-confluent or
non-terminating. For example, if we want to compute not only DNF but
also conjunctive normal forms the following two rules could be added
to the previous TRS:
\[
\begin{array}{l}
  \ore(\ande(x, y), z) \ra \ande(\ore(x, z), \ore(y, z)) \\
  \ore(z, \ande(x, y)) \ra \ande(\ore(z, x), \ore(z, y))
\end{array}
\]
The resulting TRS is clearly non-terminating since we can have an
infinite number of rule applications:
$\ore(\ande(\atom{P},\atom{Q}),\atom{R})$ rewrites into
$\ande(\ore(\atom{P},\atom{R}),\ore(\atom{Q},\atom{R})),$ which in
turn rewrites into
$\ore(\ande(\atom{P},\ore(\atom{Q},\atom{R})),\ande(\atom{R},\ore(\atom{Q},\atom{R}))),$
and we can go on like this forever.
The problem clearly comes here from the interference between the two
sets of rules and there are several solutions to tackle this
problem~\cite{VISSER2005831}.

In a programming context, we can imagine that these rules are
separated in different modules; this solves the termination problem
but we could end up with a multitude of small modules if the problem
occurs for several sets of rules.
Another more general solution is
functionalization~\cite{VISSER2005831} which consists in introducing
some new \textit{function symbols} and use them to implicitly guide
the application of the rewrite rules modified accordingly. If we use
now the symbol $\atoms{}$ to represent atoms, we can compute the DNF
of a Boolean expression~$e$ by applying the following TRS on
$\dnf(e)$:
\[
\begin{array}{l}
  \dnf(\atomc{x}) \ra \atomc{x} \\
  \dnf(\ande(x, y)) \ra \dnfr(\ande(\dnf(x), \dnf(y))) \\
  \dnf(\ore(x, y)) \ra \ore(\dnf(x), \dnf(y)) \\
  \dnfr(\ande(\ore(x, y), z)) \ra \ore(\dnf(\ande(x, z)), \dnf(\ande(y, z))) \\
  \dnfr(\ande(z, \ore(x, y))) \ra \ore(\dnf(\ande(z, x)), \dnf(\ande(z, y))) \\
  \dnfr(\ande(\atomc{x}, \atomc{y})) \ra   \ande(\atomc{x}, \atomc{y})
\end{array}
\]
The rules for $\dnf$ simulate the innermost normalization strategy by
recursively traversing terms while the rules for $\dnfr$ are used to
encode the distribution.
However, for each new transformation,
like, \eg, computing the conjunctive normal forms,
new symbols should be introduced and new rules for these symbols
should be defined.  Moreover, if we extend our grammar of Boolean
expressions with new symbols, like for example negation $\note$ and implication
$\impe$, then new rewrite rules should be added for traversing the
corresponding terms or for defining the base cases.
Thus, this approach not only leads to a TRS with considerably more
rules than the original one, but they are also more difficult to
understand and to reuse.

The functionalization approach encodes an order on the application of rules
and gives an explicit specification on how to traverse symbols. 
\emph{Rewriting strategies} allow one to specify which rules should be
applied and where in the term without having to write explicitly in
the rewrite rules how to perform this control.  Rule-based languages
like {\elan}~\cite{ElanBKKMR-wrla98}, {\maude},
{\stratego}~\cite{Vis01.rta}, or {\tom} provide elementary rewriting
strategies and allow the programmer to define its own strategies
thanks to a specific strategy language. These user-defined strategies
are usually called \emph{programmable
  strategies}~\cite{VISSER2005831}. All these languages thus clearly
distinguish between what we want to transform (the data structures),
how to transform them (the rewrite rules), and how to control the
application of these transformations (the programmable strategies).

Rewriting strategies may or may not change the semantics of the TRS
they are controlling.  If the TRS is terminating and confluent,
imposing a reduction strategy would always lead to the same result,
but the number of alternative reductions to get to this result is
possibly smaller than for the uncontrolled TRS.  Imposing an order on
the (application of) the rules may reduce the number of alternative
reductions even further.
For example, if we extend the TRS computing the DNF with some simplification
rules for Boolean expressions, we could give a higher priority to the
application of some specific rules like, \eg $\ande(x,\false)\ra\false$,
to obtain shorter reductions.
When the underlying TRS is non-confluent or non-terminating, the
rewriting strategy could change its original semantics. The TRS
consisting of the first four rules in this section is non-terminating,
but when controlled by a strategy which applies in an innermost way
only the first two rules, the termination is retrieved.
Rewriting strategies can be also used to specify some default rules
which are applied only when the others are not applicable.  For
example, if we replace the last rule in the TRS defining $\dnf$ with
the more general rule $\dnfr(\ande(x,y))\ra\ande({x},{y})$ the TRS
would become non-confluent, but we can recover confluence by using a
strategy which prioritizes the application of the other, meaningful
rules, and gives the lowest priority to the last one.

Similarly to plain TRS ({\ie} TRS without strategy), it is interesting
to guarantee that a strategy-controlled TRS enjoys properties such as
confluence and termination. Confluence holds as long as the rewrite
rules are deterministic ({\ie} the corresponding matching algorithm
exhibits at most one solution) and all strategy operators are
deterministic (or a deterministic implementation is
provided). Confluence is clearly lost when considering
non-deterministic strategies. Termination is more delicate and the
normalization under some specific strategy is usually guaranteed by
imposing (sufficient) conditions on the rewrite rules on which the
strategy is applied.
Such conditions have been proposed for the
innermost~\cite{Arts2000133,GieslM02,ThiemannM08,GnaedigK09},
outermost~\cite{EndrullisH09,Thiemann09,GnaedigK09,RaffelsieperZ09},
context-sensitive~\cite{GieslM04,AlarconGL10,GnaedigK09}, or
lazy~\cite{GieslRSST11} reduction strategies. Termination under
programmable strategies has been studied for \elan~\cite{FissoreGK03}
and \stratego~\cite{KaiserL09,LammelTK13}. In~\cite{FissoreGK03}, the
authors give sufficient conditions for the termination of certain
programmable strategies relying only on the rewrite rules involved in
the strategy.  Since the proposed criterion is applicable only to a
specific subclass of programmable strategies and relies on a
coarse-grained description of strategies, the approach cannot be used
to prove termination for many terminating strategies.
In~\cite{KaiserL09,LammelTK13}, the termination of some traversal
strategies (such as top-down, bottom-up, innermost) is proven,
assuming the rewrite rules are measure decreasing for a notion of
measure that combines the depth and the number of occurrences of a
specific constructor in a term.

\medskip
\noindent \emph{Contributions.} 
In this paper we describe a general approach consisting in translating
programmable strategies into plain TRS.  The interest of this encoding
that we show sound and complete is twofold.
First, termination analysis
techniques~\cite{Arts2000133,Hirokawa2005172,Giesl06mechanizingand}
and corresponding tools like \aprove~\cite{terminationDP-aprove2011}
and \TTT~\cite{ttt-2009} that have been successfully used for checking
the termination of plain TRS can be used to verify termination in
presence of rewriting strategies.  Confluence can be analyzed in a
similar way.
Second, the translation can be seen as a generic strategy compiler and
thus can be used as a portable implementation of strategies which
could be easily integrated in any language accepting a term
representation for the  objects it manipulates and providing rewrite rules or
at least pattern matching primitives.

This translation was introduced in~\cite{CirsteaLM15}. We propose here
two additional translations which are intended to improve the
efficiency and expressiveness of the approach.  The number of rewrite
rules in the original encoding strongly depends on the considered
signature; we introduce a meta-level representation of terms allowing
to abstract over the signature leading thus to encodings whose number
of rules is independent of the signature.
We also consider many-sorted signatures and we show how the original
translation handling only mono-sorted signatures can be adapted to
generate well-sorted encodings executable in languages offering this
feature.
In summary, given a strategy $S$, we present an unsorted translation,
a many-sorted one, as well as one working at meta-level, and we show
that they produce faithful, {\ie} sound and complete, encodings of the
original strategy. 

The translations have been implemented in {\tom} and generate TRS which
could be fed into {\TTT}/{\aprove} for termination analysis or executed
efficiently by {\tom}.

The paper is organized as follows. The next section introduces the
notions of rewriting system and rewriting
strategy. Section~\ref{se:encodings} presents the translation of
rewriting strategies into rewriting systems (introduced
in~\cite{CirsteaLM15}), and in Sections~\ref{se:metaencodings}
and~\ref{se:typing}, we describe the translations for meta-level terms
and sorted terms, respectively.  In Section~\ref{se:implementation},
we give some implementation details and present experimental results.
In the appendix, we provide detailed proofs of the properties stated
in the paper and, in particular, of the simulation theorem
from~\cite{CirsteaLM15}.

\section{Strategic rewriting}
\label{se:semantics}
This section briefly recalls some basic notions of rewriting used in this paper;
see \cite{BaaderN98,Terese2002} for more details on first order terms and term
rewriting systems, and \cite{visser-icfp98,BallandMR-SPE2012} for details on rewriting
strategies and their implementation in rewrite based languages.

\subsection{Term rewriting systems}
A \emph{signature} $\Sigma$ consists of a finite set~$\FF$ of symbols
together with a function $\arty$ which associates to any symbol $f$
its \emph{arity}.  We write $\FF^n$ for the subset of symbols of arity
$n$, and $\FF^+$ for the symbols of arity $n>0$.  Symbols in $\FF^0$
are called \emph{constants}.
Given a countable set $\XX$ of \emph{variable} symbols, the set of
\emph{first-order terms} \emph{$\TFX$} is the smallest set
containing~$\XX$ and such that $f(t_1,\ldots,t_n)$ is in $\TFX$
whenever $f\in\FF^n$ and $t_i\in\TFX$ for $i\in [1,n]$.
We write $\var t$ for the set of variables occuring in $t\in\TFX$.  If
$\var t$ is empty, $t$ is called a \emph{ground} term; $\TF$ denotes
the set of all ground terms. A \emph{linear} term is a term where
every variable occurs at most once. A \emph{substitution} $\sigma$ is
a mapping from $\XX$ to $\TFX$ which is the identity except over a
finite set of variables (its \emph{domain}). A substitution extends as
expected to a mapping from $\TFX$ to $\TFX$.

A {\em position} of a term $t$ is a finite sequence of positive
integers describing the path from the root of $t$ to the root of the
sub-term at that position. We write $\varepsilon$ for the empty
sequence, which represents the root of $t$, $\PPos(t)$ for the set of
positions of $t$, and $\stt t \omega$ for the sub-term of $t$ at
position $\omega\in\PPos(t)$. Finally, $\rmp t \omega s$ is the term
$t$ with the sub-term at position $\omega$ replaced by~$s$.

A \emph{rewrite rule} (over $\Sigma$) is a pair
$(l,r)\in\TFX\times\TFX$ (also denoted $l \ra r$) such that $\var
r\subseteq \var l$ and a TRS is a set of rewrite rules $\RR$ inducing
a \emph{rewriting relation} over $\TF$, denoted by
$\longrightarrow_{\RR}$ and such that $t \longrightarrow_{\RR} t'$ iff
there exist $l\ra r\in\RR$, $\omega\in\PPos(t)$, and a substitution
$\sigma$ such that $\stt t \omega = \sigma(l)$ and $t' = \rmp t \omega
{\sigma(r)}$.
In this case, $\stt t \omega$ is a redex, $l$ matches $\stt t \omega$,
and $\sigma$ is the solution of the corresponding matching problem.
The reflexive and transitive closure of $\longrightarrow_{\RR}$ is
denoted by $\multieval{\RR}$.
In what follows, we generally use the notation $\trsred{{\RR}}{t}{t'}$
to denote $t \multieval{\RR} t'$.
A TRS $\RR$ is \emph{terminating} if there exists no infinite
rewriting sequence
$t_1\longrightarrow_{\RR}t_2\longrightarrow_{\RR}\ldots\longrightarrow_{\RR}t_n\longrightarrow_{\RR}\ldots$

\subsection{Rewriting strategies}
\label{sec:strategy}
Rewriting strategies allow one to specify how rules should be
applied. Classical strategies like innermost or outermost are
considered often when implementing or reasoning about strategy
controlled rewriting, but it could be useful to have more fine-grained
strategies to control the application of rewrite rules. For example,
we sometimes need to apply a set of rules only once, without retrying
to apply the rules on the obtained result as with an innermost
strategy. Consider a rule $\vare(x)\ra\fresh(\vare(x))$, whose purpose
is to indicate that every variable in the abstract syntax tree of a
program should be replaced by a corresponding fresh variable; to mark
all the concerned variables it is sufficient to apply the rule once on
the corresponding leaves of the tree, and it is needless to try the
rule on upper positions in the tree or on the newly generated fresh
variable.  We could then combine this strategy with another one in
charge of handling these fresh variables.

Taking the same terminology as the one proposed by {\elan} and
{\stratego}, a rewrite rule is considered to be an elementary
strategy, and a strategy is an expression built over a strategy
language. 

The strategy language we consider in this paper consists of the main
operators used in {\tom}, {\elan}, and {\stratego}. Let~$\Sigma$ be a
signature and $\XS$ be a set of strategy variables, ranged over by
$X$. In what follows, we use uppercase for strategy variables and
lowercase for term variables. We define the strategy language over
$\Sigma$ as follows:
$$
\begin{array}{rcl}
  S & ::=  &  \id \mid \fail \mid l \ra r \mid \seq{S}{S} \mid \choice{S}{S} \mid \one(S) \mid \all(S) \mid \mu X\dotsym{S} \mid X \\
\end{array}
$$
where $l \ra r$ is any rewrite rule over~$\Sigma$. 
We call the term to which the strategy is applied the subject.

The recursion operator $\mu X \dotsym S$ binds $X$ in $S$; a strategy
variable is said to be free if it is not bound. We write $\varS S$ for
the set of free variables of $S$. As usual, we work modulo
\emph{$\alpha$-conversion} and we adopt Barendregt's
\emph{``hygiene-convention''}, {\ie} free and bound variables have
different names.

Informally, the ${\id}$ strategy can be applied to any term without
changing it, and thus ${\id}$ always succeeds. Conversely, the
strategy ${\fail}$ always fails when applied to a term. As mentioned
above, a rewrite rule is an elementary strategy which is (successfully
or not) applied at the root position of its subject. By combining
these elementary strategies, more complex strategies can be built: we
can apply sequentially two strategies, make a choice between the
application of two strategies, apply a strategy to one or to all the
immediate sub-terms of the subject, and apply recursively a strategy.

The application of a strategy to a subject may diverge because
recursion, fail because of the strategy $\fail$ has been used or
because a rewrite rule cannot be applied, or return a (unique)
result. We use the symbol {\failres} to signal failure, and we let $u$
range over terms and \failres.
We implement recursion using a context $\Gamma$ which maps variables
to strategies; its syntax is defined as $ \Gamma ::= \emptyctx \mid
\Gamma ; X \colon S$. When writing a context
${X_1\colon{S_1};}\ldots;{X_n\colon{S_n}}$, we omit the empty context
$\emptyctx$, and we assume that the variables $(X_i)_{1 \leq i \leq
  n}$ are pairwise distinct. We write $\domS \Gamma$ for the domain of
$\Gamma$, defined as $\domS \emptyctx = \emptyset$, and $\domS{\Gamma;
  X \colon S} = \domS \Gamma \cup \{ X \}$.  We define the 
evaluation judgment $\step{\stratappctx{\Gamma}{S}{t}}{u}$ inductively
by the rules of Figure~\ref{fig:sem}; it means that the application of
the strategy $S$ to the subject $t$ produces the result $u$ under the
context~$\Gamma$ (the context may be omitted if empty). The semantics
being defined inductively and in a big-step style, there is no
derivation for $\stratappctx \Gamma S t$ if the application of $S$ to
$t$ diverges (\eg if $S = \mu X \dotsym X$). There is also no
derivation if a free variable $X$ of $S$ is applied to a term and $X
\notin \domS \Gamma$.

\begin{figure}[!th]
\vspace{0.2em}
\textsf{Elementary strategies}\\[-0.5em]
\newlength{\largeur}
\setlength{\largeur}{\textwidth}
\addtolength{\largeur}{0.6em}

\noindent\hspace{-0.3em}\hbox to \largeur{\hrulefill}
\begin{mathpar}
\inferrule{ }{\step{\stratappctx%
    {\Gamma}%
    {\id}{t}}{t}}~\mathbf{(id)}
\and
\inferrule{ }
{\step{\stratappctx%
    {\Gamma}%
    {\fail}{t}}{\failres}}~\mathbf{(fail)}
\and
\inferrule{\exists \sigma, \applysubs{\sigma}{\mathit{l}} = t}
{\step{\stratappctx%
    {\Gamma}%
    {\mathit{l} \ra \mathit{r}}%
    {t}%
    }%
        {\applysubs{\sigma}{\mathit{r}}}}~\mathbf{(r_1)}
\and
\inferrule{\nexists \sigma, \applysubs{\sigma}{\mathit{l}} ={t}}
{\step{\stratappctx%
    {\Gamma}
    {\mathit{l} \ra \mathit{r} }%
    {t}%
    }%
    {\failres}}~\mathbf{(r_2)}
\end{mathpar}

\noindent\hspace{-0.3em}\hbox to \largeur{\hrulefill}

\textsf{Control combinators}

\vspace{-0.5em}
\noindent\hspace{-0.3em}\hbox to \largeur{\hrulefill}

\vspace*{-1em}
\begin{mathpar}
\inferrule{\step{\stratappctx%
    {\Gamma}%
    {S_1}{t}}{t'}}
{\step{\stratappctx%
    {\Gamma}%
    {(\choice{S_1}{S_2})}{t}}{t'}}~\mathbf{(choice_1)}
\and
\hspace{-1.05em}\inferrule{\step{\stratappctx%
    {\Gamma}%
    {S_1}{t}}{\failres}
 \\ \step{\stratappctx%
   {\Gamma}%
   {S_2}{t}}{u}}
{\step{\stratappctx%
    {\Gamma}%
    {(\choice{S_1}{S_2})}{t}}{u}}~\mathbf{(choice_2)}
\and
\inferrule{\step{\stratappctx%
    {\Gamma}%
    {S_1}{t}}{t'}
 \\ \step{\stratappctx%
    {\Gamma}%
    {S_2}{t'}}{u}}
{\step{\stratappctx%
    {\Gamma}%
    {(\seq{S_1}{S_2})}{t}}{u}}~\mathbf{(seq_1)}
\and
\inferrule{\step{\stratappctx%
    {\Gamma}%
    {S_1}{t}}{\failres}}
{\step{\stratappctx%
    {\Gamma}%
{(\seq{S_1}{S_2})}{t}}{\failres}}~\mathbf{(seq_2)}
\and
\inferrule{\step{\stratappctx%
      {\Gamma; {X\colon S}}%
      {S}%
      {t}}{u}}
{\step{\stratappctx%
    {\Gamma}%
    {\mu X\dotsym S}%
    {t}}{u}}~\mathbf{(mu)}
\and
\inferrule{\step{\stratappctx%
      {\Gamma; {X\colon S}}%
      {S}
      {t}}{u}}
{\step{\stratappctx%
    {\Gamma; {X\colon S}}%
    {X}{t}}{u}}~\mathbf{(muvar)}
\end{mathpar}

\noindent\hspace{-0.3em}\hbox to \largeur{\hrulefill}

\textsf{Traversal combinators}

\vspace{-0.5em}
\noindent\hspace{-0.3em}\hbox to \largeur{\hrulefill}

\vspace*{-1em}
\begin{mathpar}
\inferrule{\exists i \in [1, n], 
       (\step{\stratappctx%
         {\Gamma}%
         {S}{t_i}}{t_i'}  \wedge
       \forall j \in [1, i-1], 
       \step{\stratappctx%
         {\Gamma}%
         {S}{t_j}}{\failres}) }
{\step{\stratappctx%
      {\Gamma}%
  {\one(S)}{f(t_1,\ldots,t_n)}}{f(t_1,\ldots,t_{i-1},t_i',t_{i+1},\ldots,t_n)}}~\mathbf{(one_1)}
\and
\inferrule{\forall i \in [1, n],   \step{\stratappctx%
    {\Gamma}%
    {S}{t_i}}{\failres}}
{\step{\stratappctx%
    {\Gamma}%
    {\one(S)}{f(t_1,\ldots,t_n)}}{\failres}}~\mathbf{(one_2)}
\and
\inferrule{\forall i \in {[}1{,}n{]}, 
    \step{\stratappctx%
    {\Gamma}%
    {S}{t_i}}{ t_i'}}
{\step{\stratappctx%
    {\Gamma}%
    {\all(S)}{f(t_1,\ldots,t_n)}}{f(t_1',\ldots,t_n')}}~\mathbf{(all_1)}
\and
\inferrule{\exists i \in {[}1{,}n{]},  \step{\stratappctx%
    {\Gamma}%
    {S}{t_i}}{\failres}}
{\step{\stratappctx%
    {\Gamma}%
    {\all(S)}{f(t_1,\ldots,t_n)}}{\failres}}~\mathbf{(all_2)}
\end{mathpar}
\caption{\label{fig:sem}Strategy semantics; $t$ and its indexed and
  primed versions denote 
terms (which cannot be {$\failres$}), whereas $u$ denotes
a result which
is either a well-formed term, or {$\failres$}.
}
\end{figure}

\noindent
We distinguish between three kinds of operators in the strategy language:
\begin{itemize}
\item \emph{elementary strategies} consisting of $\id$, $\fail$, and rewrite
  rules, which are the basic building blocks for strategies;
\item \emph{control combinators} consisting of choice
  $\choice{S_1}{S_2}$, sequence $\seq{S_1}{S_2}$, and recursion
  $\mu X\dotsym S$, that compose strategies but are still applied at
  the root position of
  the subject;
\item \emph{traversal combinators} $\one(S)$ and $\all(S)$ that modify the current application
  position.  
\end{itemize}

\noindent
We stress that a rewrite rule taken as an elementary strategy is not applied at
every possible position of a subject, but only at the root position, as we can see with
the following example.

\begin{exa}
\label{ex:rr-strategy}
Let $\Sigma$ be the signature corresponding to Peano natural numbers such that $\FF^0=\{\Zero\}$, $\FF^1=\{\Succ\}$,
and $\FF^2=\{\Plus\}$, and consider the rewrite rule $\Plus(\Zero,x)\ra x$. Then,
we have
\noindent$\step{\stratappctx{}{\Plus(\Zero,x)\ra x}{\Plus(\Zero,\Succ(\Zero))}}%
{\Succ(\Zero)}$, but 
\noindent$\step{\stratappctx{}{\Plus(\Zero,x)\ra x}{\Succ(\Plus(\Zero,\Zero))}}{\failres}$,
because the rule cannot be applied at the root position
 in the second case.
Similarly, only the redex at
the root position is reduced in 
\noindent$\step{\stratappctx{}{\Plus(\Zero,x)\ra x}{\Plus(\Zero,
    \Plus(\Zero, \Succ(\Zero)))}}{\Plus(\Zero, \Succ(\Zero))}$.
\end{exa}

Control combinators are also applied at the root position. The (left-)choice
$\choice{S_1}{S_2}$ tries to apply $S_1$ and considers $S_2$ only if $S_1$
fails. Using this operator, we can then define a strategy \combs{\Try(S)}
which tries to apply~\combs{S} and applies the identity if~\combs{S} fails:
\combs{\Try(S) = \choice{S}{\id}}. Given a set of rules $R_1,\ldots,R_n$, the
strategy $\choice{R_1}{(\choice{\cdots}{R_n})}$ can be used to express an order
on the rules, mirroring how pattern matching is done in most functional
programming languages. The sequential application $\seq{S_1}{S_2}$ succeeds if
$S_1$ succeeds on the subject and~$S_2$ succeeds on the subsequent term; it
fails if one of the two strategy applications fails. We consider these two
operators to be right-associative, so that $\choice{S_1}{(\choice{S_2}{S_3})}$
is written $\choice{S_1}{\choice{S_2}{S_3}}$ and $\seq{S_1}{(\seq{S_2}{S_3})}$
is written $\seq{S_1}{\seq{S_2}{S_3}}$.

\begin{exa}
\label{ex:control}
We can apply sequentially twice the rewrite rule $\Plus(\Zero,x)\ra x$
using the strategy $\seq{\Plus(\Zero,x)\ra x}{\Plus(\Zero,x)\ra x}$.
As we have seen in Example~\ref{ex:rr-strategy}, the rule $\Plus(\Zero,x)\ra x$
applied once to 
$\Plus(\Zero,\Plus(\Zero,\Succ(\Zero)))$ gives $\Plus(\Zero,
\Succ(\Zero))$, and since the rule can then be applied again at
the root position, we eventually obtain $\Succ(\Zero)$ as a final result:
  \begin{mathpar}
    \hspace{-0.5em}
    \inferrule*[right=$\mathbf{(seq_1)}$]
    {
      \inferrule*[right=$\mathbf{(r_1)}$]
      { }
      {\step{\stratappctx{}{\Plus(\Zero,x)\ra x}{\Plus(\Zero,\Plus(\Zero,\Succ(\Zero)))}}{\Plus(\Zero,\Succ(\Zero))}}
      \;\;
      \inferrule*[Right=$\mathbf{(r_1)}$]
      { }
      {\step{\stratappctx{}{\Plus(\Zero,x)\ra x}{\Plus(\Zero,\Succ(\Zero))}}{\Succ(\Zero)}}
    }
    {\step{\stratappctx{}{\seq{\Plus(\Zero,x)\ra x}{\Plus(\Zero,x)\ra x}}{\Plus(\Zero,\Plus(\Zero,\Succ(\Zero)))}}{\Succ(\Zero)}}
  \end{mathpar}

  \noindent
  However, the application of the same strategy on the term
  $\Plus(\Zero,\Succ(\Plus(\Zero,\Zero)))$ fails because the first
  application produces $\Succ(\Plus(\Zero,\Zero))$, on which we cannot
  apply the rule again:
  \begin{mathpar}
    \hspace{-0.5em}
    \inferrule*[right=$\mathbf{(seq_2)}$]
    {
      \inferrule*[right=$\mathbf{(r_1)}$]
      { }
      {\step{\stratappctx{}{\Plus(\Zero,x)\ra x}{\Plus(\Zero,\Succ(\Plus(\Zero,\Zero)))}}{\Succ(\Plus(\Zero,\Zero))}}
      \;\;
      \inferrule*[Right=$\mathbf{(r_2)}$]
      { }
      {\step{\stratappctx{}{\Plus(\Zero,x)\ra x}{\Succ(\Plus(\Zero,\Zero))}}{\failres}}
    }
    {\step{\stratappctx{}{\seq{\Plus(\Zero,x)\ra x}{\Plus(\Zero,x)\ra x}}{\Plus(\Zero,\Succ(\Plus(\Zero,\Zero)))}}{\failres}}
  \end{mathpar}

  \noindent
  We can avoid failure using the strategy $\Try(\seq{\Plus(\Zero,x)\ra
    x}{\Plus(\Zero,x)\ra x})$ which represents in fact the strategy
  $\choice{(\seq{\Plus(\Zero,x)\ra x}{\Plus(\Zero,x)\ra x})}{\id}$. If
  we denote $T$ the above derivation tree, we obtain the
  following evaluation when applying the $Try$ strategy:
  \begin{mathpar}
    \hspace{-0.5em}
    \inferrule*[right=$\mathbf{(choice_2)}$]
    {
      T
      \;\;\;\;\;\;\;\;\;\;\;\;
      \inferrule*[Right=$\mathbf{(id)}$]
      { }
      {\step{\stratappctx{}{\id}{\Plus(\Zero,\Succ(\Plus(\Zero,\Zero)))}}{\Plus(\Zero,\Succ(\Plus(\Zero,\Zero)))} }
    }
    {\step{\stratappctx{}{\Try(\seq{\Plus(\Zero,x)\ra x}{\Plus(\Zero,x)\ra x})}{\Plus(\Zero,\Succ(\Plus(\Zero,\Zero)))}}{\Plus(\Zero,\Succ(\Plus(\Zero,\Zero)))}}
  \end{mathpar}
  
  \noindent
  Note that the resulting term is the initial subject, not the
  intermediate term $\Succ(\Plus(\Zero,\Zero))$ responsible for the
  failure. To obtain this intermediate term, we should have applied
  $\seq{\Try(\Plus(\Zero,x)\ra x)}{\Try(\Plus(\Zero,x)\ra x)}$ to the
  subject.
\end{exa}

The application of a recursive strategy $\mu X\dotsym S$ to a subject
$t$ applies $S$ to $t$ in a context where $X$ is mapped to $S$ (rule
$\mathbf{mu}$). During this evaluation, the strategy variable $X$ may
be applied to a term $t'$, triggering the application of $S$ to $t'$ (rule
$\mathbf{muvar}$), allowing thus recursion. For the recursion to stop,
the evaluation of $S$ should not involve $X$ anymore at some point,
otherwise the application of $\mu X\dotsym S$ to $t$ diverges. For
example, the strategy $\mu X \dotsym X$ always diverges when applied
to any subject, and we cannot derive any evaluation judgment for it. A
more interesting example of recursive strategy is \combs{\Repeat(S) =
  \mu X\dotsym \Try(\seq{S}{X})} which applies a strategy~$S$ as much
as possible at the root position of the subject.  The last successful
result of the application of $S$ is the overall result of $\Repeat(S)$. Note
that applying a $Repeat$ strategy either diverges or 
evaluates to a term (and not to $\failres$).

\begin{exa}
  To see how recursion works, let us build the derivation tree for the
  application of the strategy $\Repeat(\Plus(\Zero,x)\ra x) = \mu
  X\dotsym\Try(\seq{\Plus(\Zero,x)\ra x}{X})$ to the term $t \eqdef
  \Plus(\Zero,\Plus(\Zero,\Succ(\Zero)))$. 
  To evaluate this application we necessarily start by applying the
  rule~$\mathbf{(mu)}$, which adds to the context the mapping of $X$
  to $\Try(\seq{\Plus(\Zero,x)\ra x}{X})$ and triggers the
  evaluation of this latter strategy. Thus, supposing the derivation
  tree can be eventually built then, it has the shape
  \begin{mathpar}
    \inferrule
    {\vdots}
    {\inferrule*[Right=$\mathbf{(mu)}$]
      {\step{\stratappctx{X:\tryShort}{\tryShort}{\Plus(\Zero,\Plus(\Zero,\Succ(\Zero)))}}
        {u} 
      } 
      {\step{\stratappctx{}{\mu X\dotsym\tryShort}{\Plus(\Zero,\Plus(\Zero,\Succ(\Zero)))}}
        {u}
      }
    }
  \end{mathpar}

  The subsequent evaluations in this derivation tree are performed
  \wrt\ this context $\ctxPlusZero=X:\Try(\seq{\Plus(\Zero,x)\ra x}{X})$.
  In order to apply the strategy
  $\Try(\seq{\Plus(\Zero,x)\ra x}{X})=\choice{\seq{\Plus(\Zero,x)\ra x}{X}}{\id}$ to
  $t$, we have to decide between the two $\mathbf{choice}$ rules; in both cases,
  we have to apply the strategy $\seq{\Plus(\Zero,x)\ra x}{X}$, and subsequently the
  rewrite rule $\Plus(\Zero,x)\ra x$, to~$t$.
  This rewrite rule succeeds when applied at the root position of $t$,
  and thus the derivation tree has the form:
  \begin{mathpar}
    \hspace{-0.5em}\inferrule*[right=$\mathbf{(seq_1)}$]
    {\inferrule*[right=$\mathbf{(r_1)}$]
      { }
      {\step{\stratappctx{\ctxPlusZero}{\Plus(\Zero,x)\ra x}{\subj}}{\Plus(\Zero,\Succ(\Zero))}}
       \\ \hspace{-1.5em}
      \inferrule*[Right=$\mathbf{(muvar)}$]
      {T_1}
      {\step{\stratappctx{\ctxPlusZero}{X}{\Plus(\Zero,\Succ(\Zero))}}{u}}
    }
    {\inferrule*[Right=$\mathbf{(choice_1)}$]
      {
        \step{\stratappctx{\ctxPlusZero}{\seq{\Plus(\Zero,x)\ra x}{X}}{\subj}}{u}
      }
      {\inferrule*[Right=$\mathbf{(mu)}$]
        {\step{\stratappctx{\ctxPlusZero}
            {\tryShort}{\subj}}{u}}
        {\step{\stratappctx{}{\mu
              X\dotsym\tryShort}{\subj}}{u}
        }
      }
    }
  \end{mathpar}
  with $T_1$ the derivation tree for the judgment
  $\step{\stratappctx{\ctxPlusZero} {\tryShort}{\Plus(\Zero,\Succ(\Zero))}}{u}$
  obtained by instantiating the variable $X$ with the corresponding
  strategy from the context. We can apply the same reasoning as above,
  and since the rewrite rule $\Plus(\Zero,x)\ra x$ also succeeds on
  the new term $\Plus(\Zero,\Succ(\Zero))$, the derivation tree
  $T_1$ has the following shape:
  \begin{mathpar}
    \inferrule*[right=$\mathbf{(seq_1)}$]
    {\inferrule*[right=$\mathbf{(r_1)}$]
      { }
      {\step{\stratappctx{\ctxPlusZero}{\Plus(\Zero,x)\ra
            x}{\Plus(\Zero,\Succ(\Zero))}}{\Succ(\Zero)}} 
      \\ 
      \inferrule*[Right=$\mathbf{(muvar)}$]
      {T_2}
      {\step{\stratappctx{\ctxPlusZero}{X}{\Succ(\Zero)}}{u}}
    }
    {\inferrule*[Right=$\mathbf{(choice_1)}$]
      {\step{\stratappctx{\ctxPlusZero}{\seq{\Plus(\Zero,x)\ra x}{X}}{\Plus(\Zero,\Succ(\Zero))}}{u}}
      {\step{\stratappctx{\ctxPlusZero}
          {\tryShort}{\Plus(\Zero,\Succ(\Zero))}}{u}}
    }
  \end{mathpar}
  with $T_2$ the derivation tree for the judgment
  $\step{\stratappctx{\ctxPlusZero} {\tryShort}{\Succ(\Zero)}}{u}$ obtained, as
  before, by instantiating accordingly the variable $X$. The rewrite
  rule $\Plus(\Zero,x)\ra x$ fails when applied to the term
  $\Succ(\Zero)$ and the derivation tree $T_2$ is therefore as
  follows:
  \begin{mathpar}
    \inferrule*[right=$\mathbf{(choice_2)}$]
    {
      \inferrule*[right=$\mathbf{(seq_2)}$]
      {
        \inferrule*[Right=$~\mathbf{(r_2)}$]
        { }
        {\step{\stratappctx{\ctxPlusZero}{\Plus(\Zero,x)\ra x}{\Succ(\Zero)}}{\failres}}
      }
      {\step{\stratappctx{\ctxPlusZero} {\seq{\Plus(\Zero,x)\ra x} X}{\Succ(\Zero)}}{\failres} }
      \\ 
      \hspace{-1em}\inferrule*[Right=$\mathbf{(id)}$]
      { }
      {\step{\stratappctx{\ctxPlusZero}{\id}{\Succ(\Zero)}}{\Succ(\Zero)} }
    }
    {\step{\stratappctx{\ctxPlusZero} {\tryShort}{\Succ(\Zero)}}{\Succ(\Zero)} }
  \end{mathpar}
 
 \noindent
  This completes the derivation tree, and we can thus conclude that
  the result of applying the strategy $\Repeat(\Plus(\Zero,x)\ra x)$
  to the term $\Plus(\Zero,\Plus(\Zero,\Succ(\Zero)))$ is
  $u=\Succ(\Zero)$.
  
\end{exa}

So far, the strategies are applied only at the root position; we use traversal
combinators to move to other positions. The strategy $\one(S)$ tries to apply
$S$ to a sub-term of the subject, starting from the leftmost one (rule
$\mathbf{one_1}$). If $S$ fails on all the sub-terms, then $\one(S)$ also fails
(rule $\mathbf{one_2}$). In contrast, $\all(S)$ applies $S$ to all the sub-terms
of the subject (rule~$\mathbf{all_1}$) and fails if $S$ fails on one of them
(rule~$\mathbf{all_2}$). Note that $\one(S)$ always fails when applied to a
constant while $\all(S)$ always succeeds in this case.

\begin{exa}
\label{ex:traversal}
In $\step{\stratappctx{}{\one(\Plus(\Zero,x)\ra
    x)}{\Plus(\Plus(\Zero, \Zero),\Plus(\Zero,\Succ(\Zero)))}}%
{\Plus(\Zero,\Plus(\Zero,\Succ(\Zero)))}$, only the first sub-term is reduced,
while in $\step{\stratappctx{}{\all(\Plus(\Zero,x)\ra
    x)}{\Plus(\Plus(\Zero, \Zero),\Plus(\Zero,\Succ(\Zero)))}}%
{\Plus(\Zero,\Succ(\Zero))}$, they are both reduced. These operators give access
only to the immediate sub-term: we have $\step{\stratappctx{}{\one(\Plus(\Zero,x)\ra
    x)}{\Succ(\Succ(\Plus(\Zero, \Zero)))}}{\failres}$, and similarly with $\all$.

\end{exa}

Combined with recursion, traversal combinators can be used to have access
to any position of a term. In fact, most of the classic reduction strategies can
be defined using recursion and traversal operators:
\[
\begin{array}{rcl}
  \OnceBottomUp(S) & = & \mu X\dotsym\choice{\one(X)}{S} \\
  \BottomUp(S) & = & \mu X\dotsym\seq{\all(X)}{S} \\
  \OnceTopDown(S) & = & \mu X\dotsym\choice{S}{\one(X)} \\
  \TopDown(S) & = & \mu X\dotsym\seq{S}{\all(X)} \\
\end{array}
\]
\combs{\OnceBottomUp} (denoted \combs{\OBU} in the
following) tries to apply a strategy~\combs{S} once, starting from
the leftmost-innermost leaves.  \combs{\BottomUp} behaves almost like
\combs{\OBU} except that~\combs{S} is applied to all nodes, starting
from the leaves. $\OnceTopDown$ and $\TopDown$ are similar, except they start
from the root position. 
The strategy which applies~\combs{S} as many times as possible,
starting from the leaves can be either defined naively as
\combs{\Repeat(\OnceBottomUp(S))} or using a more efficient
approach~\cite{visser-icfp98}:
\combs{\Innermost(S) = \mu X\dotsym \seq{\all(X)}{\Try(\seq{S}{X})}}.
\begin{exa}
\label{ex:strategies}
We use $\OnceTopDown$, $\OnceBottomUp$, and $\Innermost$, when applying
$S = \choice{\Plus(\Zero,x)\ra x}{\Plus(\Succ(x),y)\ra \Succ(\Plus(x,y))}$ to
$t = \Plus(\Plus(\Zero, \Zero), \Plus(\Succ(\Zero),\Zero))$. 
With $\OnceTopDown(S)$, we start at the root position of $t$ and since
there is no redex at this position, we recursively try to apply $S$ to
one of its children: we can apply it on the first child and obtain
$\Plus(\Zero,\Plus(\Succ(\Zero),\Zero))$.  If we apply
$\OnceTopDown(S)$ on this latter term we obtain
$\Plus(\Succ(\Zero),\Zero)$.
With $\OnceBottomUp(S)$, we start from the leaves, and we obtain the
same term $\Plus(\Zero,\Plus(\Succ(\Zero),\Zero))$ as before but if we
apply $\OnceBottomUp(S)$ once again on this term we obtain now
$\Plus(\Zero,\Succ(\Plus(\Zero,\Zero)))$.
$\Innermost(S)$ also starts from the leaves but
makes a recursive call after each application of $S$, so we obtain
$\step{\stratappctx{}{\Innermost(S)} t}{\Succ(\Zero)}$.

Both $\BottomUp(S)$ and  $\TopDown(S)$ fail on $t$ but we have
$\step{\stratappctx{}{\BottomUp(\Try(S))}t}{\Succ(\Plus(\Zero,\Zero))}$ 
and 
$\step{\stratappctx{}{\TopDown(\Try(S))}t}{\Plus(\Zero,\Succ(\Zero))}$.
\end{exa}

\begin{rem}
  \label{rem:not-det}
  The semantics of $\one$ is deterministic, as it looks for the leftmost
  sub-term that can be successfully transformed; a non-deterministic behavior
  can easily be obtained by removing the second condition in the premises of the
  inference rule $\mathbf{one_1}$. Similarly, we can adopt a non-deterministic
  semantics for $\choice{S_1}{S_2}$, where either $S_1$ or $S_2$ is tried first,
  by removing the first judgment in the premises of the inference rule
  $\mathbf{choice_2}$.
  We focus here on the encoding of the deterministic semantics but we
  also explain in the next section (Remark~\ref{rem:not-det-tr}) how
  our encoding can be adapted to handle the non-deterministic versions
  of $\one$ and choice.
\end{rem}

\noindent
Given this semantics, we can define a notion of termination for strategies.
\begin{defi}
  Let $S, \Gamma$ such that $\varS S \subseteq \domS \Gamma$.  The strategy $S$
  is terminating under $\Gamma$ if for all $t \in \TF$, there exists $u$ such
  that $\step{\stratappctx \Gamma S t} u$.
\end{defi}
\noindent 
The condition on the variables of $S$ and $\Gamma$ ensures that if 
there is no derivation for $\stratappctx \Gamma S t$, this is not because a
free variable of $S$ does not occur in the domain of $\Gamma$. If
$\varS S = \emptyset$, then $\Gamma$ 
does not play a role in the derivation, and we simply say
that $S$ is terminating as a shorthand for $S$ is terminating under any
context. Note that we allow terminating strategies to fail, as $u$ can be
$\failres$. A terminating strategy may contain non-terminating rewrite rules:
the strategy $\Zero \ra \Plus(\Zero, \Zero)$ is terminating because the rule is
applied once at the root position (and either succeeds or fails). However,
$\Innermost(\Zero \ra \Plus(\Zero, \Zero))$ is not terminating. A terminating
strategy may also contain a non-terminating strategy:
$\choice \id {\mu X \dotsym X}$ is terminating because the diverging part
$\mu X \dotsym X$ is never applied.

\section{Encoding rewriting strategies with rewrite rules}
\label{se:encodings}
We translate a strategy into a faithful TRS, in the sense that a strategy
applied to a term produces a result iff the TRS also rewrites the encoded term
into the encoded result.

\subsection{Strategy translation}
\label{se:strategyTranslation}
The evaluation of the application of a strategy $S$ on a term~$t$ consists in
selecting and applying the corresponding sub-strategies of $S$ to $t$ when the
head operator of $S$ is a control combinator ({\eg} selecting $S_1$ in
$\choice{S_1}{S_2}$ in the inference rule $\mathbf{choice_1}$), in applying $S$
on the corresponding sub-terms of $t$ when the head operator of~$S$ is a
traversal combinator ({\eg} applying $S$ on the sub-term $t_i$ of
$f(t_1,\ldots,t_n)$ in the inference rule $\mathbf{one_1}$), and eventually in
applying elementary strategies. 
The translation function presented in Figure~\ref{fig:encoding}
associates to each strategy a set of rewrite rules which encodes
exactly this behaviour and preserves the original evaluation:
$\trsred{\TRctx{\Gamma}{S}}{\vpS(t)}{t'}$ whenever
$\step{\stratappctx{}{S}{t}}{t'}$
(the exact relationship between a strategy and its encoding is
formally stated in Section~\ref{se:properties}). 

Since the application of the rewrite rules of the encoding cannot be
explicitly controlled, the first issue is to ensure that only the
rules of the appropriate {(sub-)strategy} can be applied to the
current subject, and that an encoded strategy is applied to the right
sub-term(s) in the case of traversal combinators.  The other
difficulties concern the representations of contexts~$\Gamma$, as well
as of (matching and strategy) failures, since these notions are not
explicit in plain term rewriting. To deal with the first problem, we
introduce some symbols $\varphi$ used to implicitly control the
application of the rewrite rules encoding a strategy.
The set $\TRctx{\Gamma}{S}$ thus contains rules whose left-hand
sides are headed by a symbol $\vpS$ and which encode the behaviour of the strategy
$S$ by using, in the right-hand sides, the symbols $\varphi$
corresponding to the sub-strategies of $S$ and potentially some
auxiliary symbols.
All these symbols are supposed to be freshly
generated and uniquely identified, {\ie} there will be only one $\varphi_{S}$
symbol for each encoded (sub-)strategy $S$ and each auxiliary $\varphi$ symbol
can be identified by the strategy it has been generated for. For example, in the
encoding 
$\TRctx{\Gamma}{{S_1}\seqsym{S_2}}$,
 the symbol~$\phiseq$ is just an
abbreviation for $\phiseq^{\seqsub{S_1}{S_2}}$, {\ie} the specific~$\phiseq$
used for the encoding of the strategy 
${S_1}\seqsym{S_2}$.

\begin{figure}[!tp]
\begin{mathpar}
\begin{array}{lr@{\hspace{5pt}}c@{\hspace{5pt}}l}
  \mathbf{(E1)}&
  \TRctx{\Gamma}{\id} & =  & \{  ~ \varphi_{\id}(\Xx \at \ap \bot(\any)  ) \ra \Xx, 
                                  \quad
                                  \varphi_{\id}(\bot(\Xx)) \ra \bot(\Xx)~\} \\[+4pt]
                                  \mathbf{(E2)}&
  \TRctx{\Gamma}{\fail} & = & \{ ~ \varphi_{\fail}(\Xx \at \ap \bot(\any)  ) \ra \bot(\Xx),
                                  \quad
                                  \varphi_{\fail}(\bot(\Xx)) \ra \bot(\Xx) ~\} \\ [+4pt]
                                  \mathbf{(E3)}&
  \TRctx{\Gamma}{l \ra r} & = & \{ ~   \varphi_{l\ra r}(l) \ra r, \\[+2pt]
                          &&&  ~~~ \varphi_{l\ra r}(\Xx \at \ap l) \ra  \bot(\Xx),
                              \quad 
                              \varphi_{l\ra r}(\bot(\Xx)) \ra \bot(\Xx)~~\} 
                              \\[+4pt]
\\[+1pt]
\mathbf{(E4)}&
  \TRctx{\Gamma}{\seq{S_1}{S_2}} & = &  \TRctx{\Gamma}{S_1} ~\cup~ \TRctx{\Gamma}{S_2}  \\[+2pt]
                          &\bigcup & \{ &  \varphi_{\seqsub{S_1}{S_2}}(\Xx \at \ap \bot(\any)  ) \ra \phiseq( \vpt(\vpo(\Xx)), \Xx) , 
                          \quad
                          \varphi_{\seqsub{S_1}{S_2}}(\bot(\Xx)) \ra \bot(\Xx) , \\[+2pt]
                          &&&  \phiseq(\Xx \at \ap \bot(\any), \any) \ra  \Xx, 
                          \quad
                          \phiseq(\bot(\any), \Xx) \ra  \bot(\Xx) ~~\} 
                          \\[+4pt]
                          \mathbf{(E5)}&
  \TRctx{\Gamma}{\choice{S_1}{S_2}} & =  & \TRctx{\Gamma}{S_1} ~\cup~ \TRctx{\Gamma}{S_2}  \\[+2pt]
                           &\bigcup & \{ &  \varphi_{\choicesub{S_1}{S_2}}(\Xx \at \ap \bot(\any)  ) \ra  \phich(\vpo(\Xx)) , 
                           \quad
                           \varphi_{\choicesub{S_1}{S_2}}(\bot(\Xx)) \ra \bot(\Xx) , \\[+2pt]
                           &&&  \phich(\bot(\Xx)) \ra  \vpt(\Xx) , 
                           \quad
                           \phich(\Xx \at \ap \bot(\any)) \ra  \Xx ~~\}
                           \\[+4pt]
                           \mathbf{(E6)}&
  \TRctx{\Gamma}{\mu X\dotsym S} & =  & \TRctx{\Gamma; {\Xx\colon S}}{S} \\[+2pt]
                 &\bigcup & \{ &  \varphi_{\mu X\dotsym S}(\Xx \at \ap \bot(\any)) \ra  \vpS(\Xx) ,
                 \quad
                 \varphi_{\mu X\dotsym S}(\bot(\Xx)) \ra \bot(\Xx),  \\[+2pt]
                 &&&  \varphi_{X}(\Xx \at \ap \bot(\any)) \ra  \vpS(\Xx) , 
                 \quad
                 \varphi_{X}(\bot(\Xx)) \ra \bot(\Xx)  ~~\} 
                 \\[+4pt]
                 \mathbf{(E7)}&
  \TRctx{\Gamma; {\Xx\colon S}}{X} & =  & \emptyset
  \\[+4pt]
\\[+1pt]
\mathbf{(E8)}&
  \TRctx{\Gamma}{\all(S)} & =  & \TRctx{\Gamma}{S}  \\[+2pt]
                 &\bigcup & \{ &  \varphi_{\all(S)}(\bot(\Xx)) \ra \bot(\Xx)  ~~\} \\[+2pt]
                 &\bigcup\limits_{c\in \FF^{0}} & \{ &  \varphi_{\all(S)}(c) \ra  c  ~~\}\\[+2pt]
                 &\bigcup\limits_{f\in \FF^{\supzero}} &\{ &  \varphi_{\all(S)}(f(\Xx_1,\ldots,\Xx_n)) \ra  \psiF(\vpS(\Xx_1),\ldots,\vpS(\Xx_n),f(\Xx_1,\ldots,\Xx_n)) , \\[+2pt]
                                                      &&&  \psiF(\Xx_1\at\ap\bot(\any),\ldots,\Xx_n\at\ap\bot(\any),\any) \ra  f(\Xx_1,\ldots,\Xx_n), \\[+2pt]
                                                      &&&  \psiF(\bot(\any),\any,\ldots,\any,x) \ra  \bot(x) , \\[+2pt]
                                                      &&&  \vdots  \\[+2pt]
                                                      &&&  \psiF(\any,\ldots,\any,\bot(\any),x) \ra  \bot(x)  ~~\}
                              \\[+4pt]
\mathbf{(E9)}&
  \TRctx{\Gamma}{\one(S)} & = & \TRctx{\Gamma}{S} \\[+2pt]
                &\bigcup & \{ &  \varphi_{\one(S)}(\bot(\Xx)) \ra \bot(\Xx)  ~~\} \\[+2pt]
                &\bigcup\limits_{c\in \FF^{0}} & \{ &  \varphi_{\one(S)}(c) \ra  \bot(c)  ~~\}\\[+2pt]
                &\bigcup\limits_{f\in \FF^{\supzero}} & \{ &  \varphi_{\one(S)}(f(\Xx_1,\ldots,\Xx_n)) \ra  \psif{1}(\vpS(\Xx_1),\Xx_2,\ldots,\Xx_n) ~~\} \\[+2pt]
                \multicolumn{2}{r}{\bigcup\limits_{f\in \FF^{\supzero}}~\bigcup\limits_{1\leq i\leq ar(f)}} & \{ &  \psif{i}(\bot(\Xx_1),\ldots,\bot(\Xx_{i-1}),\Xx_i\at\ap\bot(\any),\Xx_{i+1},\ldots,\Xx_n) \ra f(\Xx_1,\ldots,\Xx_n) \} \\[+2pt]
                \multicolumn{2}{r}{\bigcup\limits_{f\in \FF^{\supzero}}~\bigcup\limits_{1\leq i<ar(f)}} & \{ &  \psif{i}(\bot(\Xx_1),\ldots,\bot(\Xx_{i}),\Xx_{i+1},\ldots,\Xx_n) \ra\\[-7pt] 
                                                                                    &&&\multicolumn{1}{r}{\psif{i+1}(\bot(\Xx_1),\ldots,\bot(\Xx_{i}),\vpS(\Xx_{i+1}),\Xx_{i+2},\ldots,\Xx_n)  ~~\}} \\[+2pt]
                &\bigcup\limits_{f\in \FF^{\supzero}} & \{ &  \psif{n}(\bot(\Xx_1),\ldots,\bot(\Xx_n)) \ra  \bot(f(\Xx_1,\ldots,\Xx_n)) ~~\}
  \\[+4pt]
  \\[+4pt]
  \mathbf{(E10)}&
  \TRg{\Gamma'}{\Gamma; {X\colon S}} & =  &  \TRg{\Gamma'}{\Gamma} ~\cup~ \TRctx{\Gamma'}{S}             \\[+2pt]
                           &\bigcup &  \{  & \varphi_{X}(\Xx \at \ap \bot(\any)) \ra  \vpS(\Xx) , 
                           \quad
                           \varphi_{X}(\bot(\Xx)) \ra \bot(\Xx)   ~~\}
  \\[+4pt]
  \mathbf{(E11)}&
  \TRg{\Gamma'}{\emptyctx} & =  &  \emptyset
\end{array}
\end{mathpar}
\caption{\label{fig:encoding}Strategy translation.
}
\end{figure}

In addition to these $\varphi$ symbols, we also use a
particular symbol $\bot$ of arity $1$ which encodes failure and
whose argument is used, as explained later on in this section, to keep
track of the term on which the strategy failed.
For example, the fact that the strategy $\fail$ applied to $t$ results
in $\failres$ corresponds in our encoding to the reduction of
$\varphi_{\fail}(t)$ \wrt\ $\TRctx{}{\fail}$ into $\bot(t)$.  This
additional information attached to a failure is particularly helpful
when encoding strategies where a sub-strategy is tried successively on
different terms, namely for the choice and for $\one$.

Apart from the application of $\fail$, strategy failures only come
from matching failures. We have to be careful in encoding these
failures since we should conclude that a matching failure occurred
only if the terms involved are built over the initial signature and
contain no generated symbols.  For this we use the so-called
\emph{anti-terms}\footnote{We restrict here to a limited form of
  anti-terms; we refer to~\cite{jsc2010} for the complete semantics of
  anti-terms.}  of the form $\ap t$ with $t\in \TFX$ or $\ap \bot(x)$
with $x\in\XX$.  An anti-term $\ap t$ represents all the ground terms
in $\TF$ which do not match $t$ and the anti-pattern $\ap \bot(x)$
denotes all the ground terms $t\in\TF$ of the original signature; the
way the finite representation of these terms is generated is explained
in Section~\ref{se:implementation}. For example, if we consider the
signature from Example~\ref{ex:rr-strategy}, $\ap \Plus(\Zero,x)$
denotes exactly the terms matched by $\Zero$, $\Succ(x_1)$,
$\Plus(\Succ(x_1),x_2)$, or $\Plus(\Plus(x_1,x_2),x_3)$.
We should emphasize that  in this encoding, the semantics of $\ap t$ 
is considered {\wrt} the terms in $\TF$. For example, $\ap c$ for
some constant $c$ does not include $\bot(c)$ or terms of the form
$\varphi(t_1,\ldots,t_n)$, because $\bot$ and $\varphi$ symbols do not
belong to the original signature.

To keep the presentation of the translation compact and intuitive, we express it
using rule schemas involving anti-terms and a special aliasing symbol~``$\at$'':
each rule in Figure~\ref{fig:encoding} involving these notations represent in
fact a set of rewrite rules.
Anti-terms in the left-hand sides of rules can be aliased with $\at$ by
variables which can be then conveniently used in the right-hand sides of the
corresponding rules.
The rewrite rule $l\ra r$ with~$l$,~$r$ such that $\stt l \omega=x\at\ap u$ and
$\stt r {\omega'}=x$, for some $u\in\TFX\cup\{\bot(x)\mid x\in \XX\}$ and
positions~$\omega$,~$\omega'$
is thus just an alias for the set of rewrite rules
$\rmp{l}{\omega}{u_1} \ra
\rmp{r}{\omega'}{u_1},\ldots,\rmp{l}{\omega}{u_n} \ra
\rmp{r}{\omega'}{u_n}$, where $u_1,\ldots,u_n$ are the terms
represented by $\ap u$.
Moreover, the variable symbol ``$\any$'' can be used in the left-hand
side of a rule to indicate a variable that does not appear in the
right-hand side.

For example, the rule schema
$\varphi(\Yy\at\ap\Plus(\Zero,\any))\ra\bot(\Yy)$ denotes the set of
rewrite rules consisting of $\varphi(\Zero)\ra\bot(\Zero)$,
$\varphi(\Succ(\Yy_1))\ra\bot(\Succ(\Yy_1))$,
$\varphi(\Plus(\Succ(\Yy_1),\Yy_2))\ra\bot(\Plus(\Succ(\Yy_1),\Yy_2))$,
and
$\varphi(\Plus(\Plus(\Yy_1,\Yy_2),\Yy_3))\ra\bot(\Plus(\Plus(\Yy_1,\Yy_2),\Yy_3))$.

The translation of the $\id$ strategy ($\mathbf{E1}$) consists of a
rule whose left-hand side matches any term in the
signature\footnote{The rule is in fact expanded into $n$ rewrite rules
  with $n$ the number of symbols in $\FF$.}  (contextualized by the
corresponding $\varphi$ symbol) and whose right-hand side is the
initial term, and of a rule encoding strict propagation of failure.
This latter rule guarantees a faithful encoding of the strategy guided
evaluation and is in fact present, in different forms, in the
translations of all the strategy operators.

\begin{exa}
\label{ex:encodeId}
If we consider the signature in Example~\ref{ex:rr-strategy},
the following encoding is obtained for the $\id$ strategy:

$
\begin{array}{lll}
\TRctx{\Gamma}{\id} = \{ 
  & \varphi_{\id}(\Zero) \ra \Zero,
 \\
 &  \varphi_{\id}(\Succ(\Yy_1)) \ra  \Succ(\Yy_1), 
 \\
 &  \varphi_{\id}(\Plus(\Yy_1,\Yy_2)) \ra  \Plus(\Yy_1,\Yy_2), 
 \\
 &  \varphi_{\id}(\bot(\Xx)) \ra \bot(\Xx)
 ~~\} 
\end{array}
$
\end{exa}
\noindent
Similarly, the translation of the $\fail$ strategy ($\mathbf{E2}$)
contains a failure propagation rule, and a rule whose left-hand side
matches any term and whose right-hand side is a failure keeping track
of this term.
A rewrite rule (which is an elementary strategy applicable at the root
of the subject) is translated ($\mathbf{E3}$) by two rules encoding
the behaviour in case of respectively a matching success or a failure,
together with a rule for failure propagation.

\begin{exa}
\label{ex:encodeElem}
The strategy $S_{pz}\eqdef\Plus(\Zero,x)\ra x$ is encoded by the following
rules:

$
\begin{array}{ll}
\TRctx{\Gamma}{S_{pz}} =  \{ &   \varphi_{pz}(\Plus(\Zero,x)) \ra x, \\
                             & \varphi_{pz}(\Yy \at \ap \Plus(\Zero,x)) \ra  \bot(\Yy), \\
                             & \varphi_{pz}(\bot(\Xx)) \ra \bot(\Xx)~~\} 
\end{array}
$

\noindent
which lead,  when the anti-terms are expanded {\wrt} to the signature,
to the TRS:

$
\begin{array}{ll}
\TRctx{\Gamma}{S_{pz}} =  \{ &   \varphi_{pz}(\Plus(\Zero,x)) \ra x, \\
& \varphi_{pz}(\Zero) \ra  \bot(\Zero), \\
& \varphi_{pz}(\Succ(\Yy_1)) \ra  \bot(\Succ(\Yy_1)), \\
& \varphi_{pz}(\Plus(\Succ(\Yy_1),\Yy_2)) \ra  \bot(\Plus(\Succ(\Yy_1),\Yy_2)), \\
& \varphi_{pz}(\Plus(\Plus(\Yy_1,\Yy_2),\Yy_3)) \ra  \bot(\Plus(\Plus(\Yy_1,\Yy_2),\Yy_3)), \\
& \varphi_{pz}(\bot(x)) \ra \bot(x)~~\} 
\end{array}
$

\noindent
The term $\varphi_{pz}(\Plus(\Zero,\Succ(\Zero)))$ reduces {\wrt} this
latter TRS to $\Succ(\Zero)$ and $\varphi_{pz}(\Zero)$ reduces to
$\bot(\Zero)$.
\end{exa}
%
The translation of the sequential application of two strategies ($\mathbf{E4}$)
includes the translation of the respective strategies and some
specific rules.
A term $\varphi_{\seqsub{S_1}{S_2}}(t)$ is reduced by the first rule
into a term $\phiseq( \vpt(\vpo(t)), t)$, which guarantees that the
rules of the encoding of~$S_1$ are applied before the ones of
$S_2$. Indeed, a term of the form $\varphi(t)$ can be reduced only if
$t \in \TF$ or $t = \bot(\any)$ and thus, the rules for $\vpt$ can be
applied to a term $\vpt(\vpo(\any))$ only after $\vpo(\any)$ is
reduced to a term in $\TF$ (or failure).  The original subject $t$ is
kept during the evaluation (of $\phiseq$), so that $\bot(t)$ can be
returned if the evaluation of $S_1$ or $S_2$ fails ({\ie} produces a
$\bot$) at some point. If $\vpt(\vpo(t))$ evaluates to a term $t' \in
\TF$, then the evaluation of $\varphi_{\seqsub{S_1}{S_2}}(t)$
succeeds, and $t'$ is the final result.

In a similar manner, the translation for the choice operator
($\mathbf{E5}$) uses a first rule which triggers the application of
the rules for $S_1$. If we start with a term
$\varphi_{\choicesub{S_1}{S_2}}(t)$ we first obtain $\phich(\vpo(t))$
and this latter term can be further reduced only after the reduction
of $\vpo(t)$ to a term in the signature or to a failure.
Recording the origin of the failure and propagating it together with
its origin is crucial here. 
The eventual failure for the reduction of $\vpo(t)$ can be thus
detected and the original subject, \ie\ the term $t$, can be retrieved
and used to trigger the rules for $S_2$ on $\vpt(t)$.

The translation for a strategy $\mu X\dotsym S$ ($\mathbf{E6}$) triggers the
application of the rules for $S$ at first, and then each time the symbol
$\varphi_{X}$  is encountered. As in all
the other cases, failure is strictly propagated.
There is no rewrite rule for the translation of a strategy variable
($\mathbf{E7}$) but we should note that the corresponding $\varphi_{X}$ symbol
is used 
when the sub-strategy $X$ is reached during the
translation of the strategy $S$ (in $\mu X\dotsym S$).

\begin{exa}
\label{ex:encodeControl}

The strategy $S_{rpz}=\mu \varX\dotsym\choice{(\seq{\Plus(\Zero,x)\ra
    x~}{~\varX})}{\id}$ which applies repeatedly (as long as  possible) the
rewrite rule from Example~\ref{ex:encodeElem} is encoded by:
  \[
\begin{array}{rll}
\TRctx{\Gamma}{S_{rpz}} =  & \{~  \varphi_{rpz}(\Xx \at \ap \bot(\any)) \ra  \varphi_{tpz}(\Xx) ,
                           & \varphi_{rpz}(\bot(\Xx)) \ra \bot(\Xx)  ~~\} 
                           \\
                      \cup & \{~    \varphi_{\varX}(\Xx \at \ap \bot(\any)) \ra  \varphi_{tpz}(\Xx) , 
                           & \varphi_{\varX}(\bot(\Xx)) \ra \bot(\Xx)  ~~\} \\
                      \cup & \{~  \varphi_{tpz}(\Xx \at \ap \bot(\any)  ) \ra  \phich(\varphi_{pz\varX}(\Xx)) , 
                           & \varphi_{tpz}(\bot(\Xx)) \ra \bot(\Xx) , \\
                           &  ~~~\phich(\Xx \at \ap \bot(\any)) \ra  \Xx , 
                           &  \phich(\bot(\Xx)) \ra  \varphi_{\id}(\Xx) ~~\} \\
                      \cup & \{~ \varphi_{pz\varX}(\Xx \at \ap \bot(\any)  ) \ra \phiseq( \varphi_{\varX}(\varphi_{pz}(\Xx)), \Xx) , 
                           & \varphi_{pz\varX}(\bot(\Xx)) \ra \bot(\Xx) , \\
                           &  ~~~ \phiseq(\Xx \at \ap \bot(\any), \any) \ra  \Xx , 
                           & \phiseq(\bot(\any), \Xx) \ra  \bot(\Xx) ~~\} \\
                              \cup & \TRctx{\Gamma}{S_{pz}} ~\cup~ \TRctx{\Gamma}{\id}
\end{array}
  \]
For presentation purposes, we separated the TRS in subsets of rules
corresponding to the translation of each operator occurring in the
initial strategy. Note that the symbol $\varphi_{\varX}$ used in the
rules for the inner sequence can be reduced with the rules generated
to handle the recursion operator.
The term $\varphi_{rpz}(\Plus(\Zero,\Plus(\Zero,\Succ(\Zero))))$
reduces {\wrt} the TRS to $\Succ(\Zero)$.
\end{exa}

The rules encoding the traversal operators follow the same principle
-- the rules corresponding to the translation of the argument strategy
$S$ are applied, depending on the traversal operator, to one or all
the sub-terms of the subject.
For the $\all$ operator ($\mathbf{E8}$), if the application of $S$ to
all the sub-terms succeeds (produces terms in $\TF$), then the final
result is built using the results of each evaluation. If the
evaluation of one of the sub-terms produces a $\bot$, a failure with
the original subject as origin is returned as a result.  Special rules
encode the fact that $\all$ applied to a constant always succeeds; the
same behaviour could have been obtained by instantiating the rules for
non-constants with $n=0$, but we preferred an explicit approach for
uniformity and efficiency reasons.
The strict propagation of failure together with its origin is
essential in the case of the $\one$ operator ($\mathbf{E9}$) since it
allows for a more economic encoding approach than the duplication of the
original subject used for the $\all$ operator.  If the evaluation for
one sub-term results in a failure, then the evaluation of the strategy
$S$ is triggered on the next one and if $S$ fails on all sub-terms,
the final result can be directly built using the origin of these
failures which represent nothing else than the original arguments.
Note that the failure in case of constants is necessarily encoded by
specific rules.

Finally, each binding $X \colon S$ of a context ($\mathbf{E10}$) is
translated by two rules, including the one that propagates
failure. The other rule operates as in the recursive case
($\mathbf{E6}$): applying the strategy variable $X$ to a subject
$t$ leads to the application of the rules encoding~$S$ to $t$.

\begin{rem}
  \label{rem:not-det-tr}
  It is possible to adapt our encoding to a non-deterministic
  version of $\choice{S_1}{S_2}$ and $\one(S)$ (cf
  Remark~\ref{rem:not-det}). For the choice combinator, the encoding becomes
  \[
  \begin{array}{rcl}
    \TRctxND{\Gamma}{\choice{S_1}{S_2}} & = &  \TRctxND{\Gamma}{S_1} ~\cup~ \TRctxND{\Gamma}{S_2}  \\[+2pt]
                                      \bigcup & \{ &
                                                  \varphi_{\choicesub{S_1}{S_2}}(\Xx \at
                                                  \ap \bot(\any)  ) \ra
                                                  \phich(\varphi_{S_1}(\Xx), \varphi_{S_2}(\Xx)), \quad
                                                  \varphi_{\choicesub{S_1}{S_2}}(\bot(\Xx))
                                                  \ra \bot(\Xx) , \\[+2pt] 
                                      & &\phich(\Xx \at \ap \bot(\any), \any) \ra
                                          \Xx, \quad \phich(\any, \Xx \at \ap
                                          \bot(\any)) \ra \Xx, \\[+2pt] 
                                      & &   \phich(\bot(\Xx),\bot(\Xx)) \ra  \bot(\Xx)
                                          \quad
                                           ~~\}
  \end{array}
  \]
  In this modified encoding, the subject $t$ is 
  duplicated so that the rules of
  $\TRrND{S_1}$ can be applied to $\varphi_{S_1}(t)$ and those of $\TRrND{S_2}$ to
  $\varphi_{S_2}(t)$. If either copy rewrites into a term $t'$ in $\TF$, then it
  can be selected as the final result by the rule
  $\phich(\Xx \at \ap \bot(\any), \any) \ra \Xx$ or its symmetric. 
  If both of them could be eventually rewritten into the terms 
  $t'_1,t'_2$ in $\TF$, the final result is either~$t'_1$ or $t'_2$ depending
  on whether $\phich(\Xx \at \ap \bot(\any), \any) \ra \Xx$ or
  $\phich(\any, \Xx \at \ap \bot(\any)) \ra \Xx$ is applied first in
  the corresponding reduction.This non-deterministic choice reflects
  the non-deterministic nature of the encoded strategy.

  \noindent
  The encoding for the non-deterministic $\one$ follows the same idea.
  \[
  \begin{array}{rcl}
    \TRctxND{\Gamma}{\one(S)} & =  & \TRctxND{\Gamma}{S}  \mathrel{\bigcup} \{~
                                   \varphi_{\one(S)}(\bot(\Xx)) \ra \bot(\Xx)
                                   ~\}  \bigcup\limits_{c\in \FF^{0}} \{~
                                   \varphi_{\one(S)}(c) \ra  \bot(c)
                                   ~\}\\[+2pt]
    \bigcup\limits_{f\in \FF^{\supzero}} &\{ &  \varphi_{\one(S)}(f(\Xx_1,\ldots,\Xx_n)) \ra  \psiF(\vpS(\Xx_1),\ldots,\vpS(\Xx_n),f(\Xx_1,\ldots,\Xx_n)) , \\
                            &&  \psiF(\bot(\any),\ldots,\bot(\any), x) \ra  \bot(x) , \\[+2pt]
                            &&  \psiF(y \at
                               \ap\bot(\any),\any,\ldots,\any,f(x_1, \ldots, x_n))
                               \ra f(y, x_2, \ldots, x_n) , \\[+2pt]
                            &&  \vdots  \\[+2pt]
                            &&  \psiF(\any,\ldots,\any,y\at\ap\bot(\any),f(x_1, \ldots, x_n))
                               \ra f(x_1, \ldots, x_{n-1}, y)  ~~\}
  \end{array}
  \]
  The obtained encoding is similar to the one for $\all$: the rules
  corresponding to the translation $S$ are triggered on all the sub-terms of the
  subject. We remember the original subject so it can be returned as origin of
  an eventual failure, or to construct the final result with one of the last $n$
  rules.  Again, if the encoding of $S$ succeeds on several sub-terms, different
  results can be obtained depending on which of the last $n$ rules is being
  applied.  

\end{rem}

\subsection{Properties of the translation}
\label{se:properties}

The goal of the translation is twofold: use well-established methods
and tools for plain TRS in order to prove properties of strategy
controlled rewrite rules, and offer a generic compiler for user
defined strategies.  For both items, it is crucial to have a sound and
complete translation, and this turns out to be true in our case.

\begin{restatable}[Simulation]{thmm}{simulation}
\label{th:simulation}
Given a term $t\in \TF$, a strategy $S$  and a context $\Gamma$ such that $\varS
S \subseteq \domS \Gamma$,
\begin{enumerate}
\item 
$\stratappctx{\Gamma}{S}{t} \sred t'$
~~iff~~
$\trsred{\TRctx{\Gamma}{S} \cup \TRg{\Gamma}{\Gamma}}{\vpS(t)}{t'}$, $t'\in \TF$
\item 
\textnormal{$\stratappctx{\Gamma}{S}{t} \sred \failres$}
~~iff~~
$\trsred{\TRctx{\Gamma}{S} \cup \TRg{\Gamma}{\Gamma}}{\vpS(t)}{\bot(t)}$  
\end{enumerate}
\end{restatable}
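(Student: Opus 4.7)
The plan is to prove both parts of the equivalence simultaneously, handling the two directions with different induction principles. For the forward direction (the ``only if'' of each part), I would proceed by induction on the height of the derivation tree of $\stratappctx{\Gamma}{S}{t} \sred u$, with a case analysis on the last inference rule applied. The elementary cases ($\id$, $\fail$, and rewrite rules) follow directly from the rule schemas $\mathbf{E1}$--$\mathbf{E3}$. For the sequential and choice combinators, apply the induction hypothesis to the sub-derivations and chain the resulting reductions together via the auxiliary rules involving $\phiseq$ and $\phich$; the anti-pattern guards $\Xx \at \ap \bot(\any)$ are precisely what force the intermediate terms to appear in the expected order. The traversal combinators $\one(S)$ and $\all(S)$ are handled analogously, threading reductions through the auxiliary symbols $\psiF$ and $\psif{i}$. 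For $\mu X \dotsym S$, the derivation reduces to one for $\stratappctx{\Gamma;{X\colon S}}{S}{t} \sred u$; the induction hypothesis combined with a preparatory lemma stating that $\TRctx{\Gamma}{S}$ only depends on the free variables of $S$ (consistent with $\mathbf{E7}$) yields the required reduction once the top rule $\varphi_{\mu X \dotsym S}(\Xx \at \ap \bot(\any)) \ra \vpS(\Xx)$ is fired. The case \textbf{muvar} is treated via rule $\mathbf{E10}$ in $\TRg{\Gamma}{\Gamma}$.

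For the backward direction, the argument is more delicate because TRS reductions of $\vpS(t)$ are a priori unconstrained. The plan is to first establish a short list of structural invariants about reductions emanating from $\vpS(t)$: (a) a $\varphi$-headed redex at any position can only be fired once its argument has been reduced to a term in $\TF$ or to one of the form $\bot(s)$ with $s \in \TF$, so the evaluation strategy imposed by the encoding is essentially call-by-value; (b) the fresh $\varphi$ and $\psi$ symbols uniquely identify each sub-strategy occurrence, so reductions on disjoint occurrences do not interfere and can be permuted; and (c) any maximal reduction from $\vpS(t)$ leading to a term free of $\varphi$/$\psi$ symbols must terminate with either a term in $\TF$ or a term of the shape $\bot(s)$ with $s \in \TF$. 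With these invariants in hand, proceed by strong induction on the length $n$ of the reduction, nested inside structural induction on $S$ for the non-recursive cases. At each step, invariant (a) identifies the first applicable rule, invariants (b)--(c) justify factoring the remaining reduction into shorter ones on the sub-components, and the induction hypothesis recovers the strategy sub-derivations that are then composed by the appropriate inference rule of Figure~\ref{fig:sem}.

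The main obstacle will be the recursive operator $\mu X \dotsym S$ in the completeness direction. Unfolding a $\mu$ does not reduce the structural size of $S$, so structural induction alone cannot drive the argument. Completeness for this case must therefore rely on the fact that the head step $\varphi_{\mu X \dotsym S}(t) \ra \vpS(t)$ (and subsequent $\varphi_X(t) \ra \vpS(t)$ unfoldings) strictly consume one rewrite step in the measured reduction, so the overall induction on $n$ still progresses. A second, more subtle issue is that $\mathbf{E6}$ adds the rules for $\vpn{X}$ inline while $\mathbf{E10}$ adds them via $\TRg{\Gamma}{\Gamma}$; a preliminary lemma showing that the two TRSs coincide on terms reachable from $\vpS(t)$ is needed so that the induction hypothesis (stated with context $\Gamma;{X\colon S}$) can be invoked inside the evaluation of $\mu X \dotsym S$ (stated with $\Gamma$). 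Once that identification is made, the recursive case closes exactly like the control combinators, with $\varphi_X(\Xx \at \ap \bot(\any)) \ra \vpS(\Xx)$ corresponding to the inference rule \textbf{muvar}.
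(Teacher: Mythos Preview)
Your overall architecture matches the paper's: completeness by induction on the derivation height, soundness by induction on the length of the TRS reduction, with the $\mu$ case handled via the length measure rather than structural descent. Your invariants (a) and (b) correspond to the paper's ``Rigid'' and ``noPhi'' lemmas, and the identification of the $\mathbf{E6}$/$\mathbf{E10}$ mismatch is exactly what the paper resolves by showing
$\TRg{\Gamma;X\colon{S'}}{\Gamma;X\colon{S'}}\cup\TRctx{}{S'}=\TRgGG\cup\TRctx{}{\mu X.S'} \setminus \{\text{the two }\varphi_{\mu X.S'}\text{ rules}\}$.

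There is, however, a genuine gap: your invariant (c) is too weak. You assert only that a failing reduction ends in $\bot(s)$ with $s\in\TF$, but the paper needs and proves the stronger \emph{failure-origin preservation} lemma: if $\trsred{\TRgGG\cup\TRctx{}{S}}{\vpS(t)}{\bot(t')}$ for $t\in\TF$, then $t'=t$. Without this, the soundness case for $\choice{S_1}{S_2}$ breaks. After the head step you reach $\phich(\vpo(t))$; if $\vpo(t)$ reduces to $\bot(t'')$, the rule $\phich(\bot(x))\ra\vpt(x)$ fires and you continue with $\vpt(t'')$. Your induction hypothesis on the shorter reduction from $\vpt(t'')$ gives you $\stratappctx{\Gamma}{S_2}{t''}\sred u$, but the inference rule $\mathbf{(choice_2)}$ requires $\stratappctx{\Gamma}{S_2}{t}\sred u$, and part~(2) of the theorem (which is what you want to invoke for $S_1$) is stated as an iff with $\bot(t)$, not $\bot(t'')$. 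The same issue arises in $\one(S)$, where the original arguments $t_1,\ldots,t_n$ must be recovered from the accumulated $\bot(\cdot)$'s to rebuild $f(t_1,\ldots,t_n)$. The paper proves this preservation lemma separately, by its own induction on reduction length, before the main theorem; you should add it to your list of invariants. A secondary omission is the explicit \emph{propagation} lemma ($\vpS(\bot(t))\trs\bot(t)$ and nothing else), which you need in the completeness direction for the sequence case when $S_1$ fails: the intermediate term is $\phiseq(\vpt(\bot(t)),t)$, and you must argue that $\vpt(\bot(t))$ collapses to $\bot(t)$ before $\phiseq$ can fire.
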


\begin{proof}[Sketch]
  The completeness is shown by induction on the height of the
  derivation tree and the soundness by induction on the length of the
  reduction.  The base cases consisting of the strategies with a
  constant length reduction -- $\id$, $\fail$, and the rewrite rule --
  are straightforward to prove since, in particular, the translation
  of a rule explicitly encodes matching success and failure.
  Induction is applied for all the other cases and the corresponding
  proofs rely on some auxiliary properties.

  First, the failure is strictly propagated: if $\trsred{\TRgGG \cup
    \TRctx{{\varphi}}{S}}{\vpS(\bot(t))}{u}$, then $u = \bot(t)$. This
  is essential, in particular, for the sequence case where a failure
  of the first strategy should be strictly propagated as the final
  result of the overall sequential strategy.
   
  Second, terms in $\TF$ are in normal form {\wrt} the translation of any
  strategy and terms of the form $\vpS(t)$ are head-rigid {\wrt} strategies
  other than $S$, \ie, they can be reduced at the head position only by the
  rules in $\TRctx{}{S}$ and only if $t$ is a term in the signature. More
  precisely, if for a strategy~$S'$ and a context~$\Gamma$,
  $\trsred{\TRgGG \cup \TRctx{{\varphi}}{S'}}{\vpS(t)}{u}$ then $t \in \TF$ and
  $\TRctx{}{S} \subseteq \TRgGG \cup \TRctx{{\varphi}}{S'}$ (or $S=X$
  and~$X \in \domS \Gamma$).
  This guarantees that the steps in the strategy derivation are
  encoded accurately by the evaluations {\wrt} the rules in the
  translation.

  Finally, the origin of the failure is preserved in the sense that if
  for a $t \in \TF$, $\vpS(t)$ reduces to a failure, then the reduct
  is necessarily $\bot(t)$. This is crucial in particular for the
  choice strategy: if the (translation of the) first strategy fails,
  then the (translation of the) second one should be applied on the
  initial subject.
\end{proof}

The main goal is to prove the termination of some strategy guided
system by proving the property for the plain TRS obtained by the
translation.
As a direct consequence of Theorem~\ref{th:simulation}, we obtain that
the termination of the TRS encoding a strategy implies the termination
of the strategy.

\begin{cor}[Termination]
  Given a strategy $S$ and a context $\Gamma$ such that
  $\varS S \subseteq \domS \Gamma$, $S$ terminates under $\Gamma$ if
  $\TRctx{\Gamma}{S}\cup\TRg{\Gamma}{\Gamma}$ is terminating.
\end{cor}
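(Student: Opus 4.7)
The plan is to derive this directly from Theorem~\ref{th:simulation} by contraposition. Assuming $\TRctx{\Gamma}{S}\cup\TRg{\Gamma}{\Gamma}$ is terminating, for any $t\in\TF$ I must produce a $u$ such that $\stratappctx{\Gamma}{S}{t}\sred u$. First I would consider the term $\vpS(t)$ in the TRS: by the termination hypothesis, every reduction sequence from it reaches some normal form $w$ with $\vpS(t)\trs w$. The key auxiliary claim is a \emph{progress lemma}: any such normal form $w$ must satisfy $w\in\TF$ or $w=\bot(t)$. Once this is established, Theorem~\ref{th:simulation} immediately yields the required evaluation judgment, with $u=w$ in the first case and $u=\failres$ in the second.

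The progress lemma would be proved by structural induction on the strategy $S$, reusing the auxiliary invariants already exposed in the sketch of Theorem~\ref{th:simulation}: head-rigidity of $\vpS(t')$-shaped terms, strict propagation of failure, and preservation of the origin of failures. For each translation clause $\mathbf{(E1)}$--$\mathbf{(E11)}$, one verifies that the introduced $\varphi$ and $\psi$ symbols carry rules covering both the ``term in $\TF$'' shape (matched by $\Xx\at\ap\bot(\any)$) and the ``failure'' shape (matched by $\bot(\any)$) of each argument, so that no partial application of the machinery can remain irreducible. Origin preservation is crucial in the failure case: it is what guarantees that the failure term reached is exactly $\bot(t)$ and not some $\bot(t')$ for an intermediate $t'$, which is precisely the shape required to invoke Theorem~\ref{th:simulation}(2).

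The main obstacle I expect is managing the interaction of the traversal combinators $\one(S)$ and $\all(S)$ with the induction: the auxiliary symbols $\psif{i}$ in clause $\mathbf{(E9)}$ have shape patterns parameterised by $i$, and in clause $\mathbf{(E8)}$ the symbol $\psiF$ combines success/failure shapes across every argument position. Showing that no reduction from $\vpS(t)$ can get stuck in an intermediate configuration headed by one of these symbols requires a careful case analysis invoking the induction hypothesis on the inner reductions of the sub-terms and tracking how the original subject is retained in the last argument. With that bookkeeping handled, the corollary reduces to a one-line application of Theorem~\ref{th:simulation}.
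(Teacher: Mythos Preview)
Your approach is correct and in fact more detailed than the paper's own treatment: the paper states the corollary with no proof, simply calling it a ``direct consequence'' of Theorem~\ref{th:simulation}. You are right that an additional progress lemma is needed---Theorem~\ref{th:simulation} as stated only tells us what happens \emph{when} $\vpS(t)$ reduces to a term in $\TF$ or to $\bot(t)$; it does not by itself guarantee that every normal form of $\vpS(t)$ has one of these two shapes. The auxiliary invariants you list (head-rigidity, strict failure propagation, origin preservation via Lemma~\ref{lemma:sameFailure}) are exactly the ones established in the appendix proof of Theorem~\ref{th:simulation}, and they do suffice to carry the progress argument through each translation clause, including the bookkeeping for the $\psiF$ and $\psif{i}$ symbols that you flag.

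One small correction: structural induction on $S$ alone will not close the recursion and variable cases cleanly, since $\vpn{X}(t) \to \vpn{S'}(t)$ with $X\colon S' \in \Gamma$ does not yield a structurally smaller strategy. The cleaner measure is the length of the reduction from $\vpS(t)$ to its normal form (finite by the termination hypothesis), which is precisely the induction scheme the appendix uses for the soundness half of Theorem~\ref{th:simulation}. With that adjustment your argument goes through. (The word ``contraposition'' in your opening sentence is also a slip: the argument you actually outline is direct.)
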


Because of the modular encoding of strategies, the non-termination of
the TRS does not necessarily imply the non-termination of the original
strategy, and could instead indicate just the non-termination of one
of its sub-strategies.
Given the shape of the rewrite rules in the encoding, an eventual
non-terminating reduction could be always reduced to a non-terminating
reduction of a term of the form $\vpS(t)$ with $\vpS$ encoding the
behaviour of a sub-strategy $S$ and $t\in\TF$.
For example, the terminating strategies 
$\seq{\fail}{\mu X\dotsym(\seq{\id}{X})}$,
$\choice{\id}{\mu X\dotsym(\seq{\id}{X})}$, and
$\seq{a\ra a}{\mu X\dotsym(\seq{b\ra b}{X})}$
use some non-terminating sub-strategies whose encodings are also
non-terminating. 

Note that even if the non-termination of the encoding reflects the
non-termination of the strategy, the counterexample exhibited (by an automatic
tool) could concern only a sub-strategy. For example, the strategy $\all(\mu
X\dotsym(\seq{\id}{X}))$ is non-terminating because of its sub-strategy ${\mu
  X\dotsym(\seq{\id}{X})}$ and an automatic tool would usually provide a
counterexample for (the encoding of) this latter strategy.

Another direct consequence of Theorem~\ref{th:simulation} is that the
confluence of the strategy is implied by the confluence of the
corresponding encoding TRS.

\begin{cor}[Confluence]
  Given a strategy $S$ and a context $\Gamma$ such that $\varS S
  \subseteq \domS \Gamma$, if
  $\TRctx{\Gamma}{S}\cup\TRg{\Gamma}{\Gamma}$ is confluent then, for
  any evaluation judgments $\stratappctx{\Gamma}{S}{t} \sred u$ and
  $\stratappctx{\Gamma}{S}{t} \sred u'$ we have $u=u'$.
\end{cor}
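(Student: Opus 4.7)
The plan is to derive the confluence of the strategy directly from the Simulation theorem (Theorem~\ref{th:simulation}) and the assumed confluence of the encoding TRS. Suppose we have two derivations $\stratappctx{\Gamma}{S}{t} \sred u$ and $\stratappctx{\Gamma}{S}{t} \sred u'$, where each of $u, u'$ is either a term in $\TF$ or $\failres$. The first step is to translate both judgments into reductions in $\TRctx{\Gamma}{S}\cup\TRg{\Gamma}{\Gamma}$: depending on whether the result is a term or $\failres$, Theorem~\ref{th:simulation} gives us that $\vpS(t)$ rewrites to $u$ (if $u \in \TF$) or to $\bot(t)$ (if $u = \failres$), and similarly for $u'$.

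Next, I would invoke the assumed confluence of $\TRctx{\Gamma}{S}\cup\TRg{\Gamma}{\Gamma}$ to conclude that the two reducts reached from $\vpS(t)$ must have a common reduct. The argument then rests on a normal-form observation: any term in $\TF$ is in normal form with respect to the encoding (as already used in the proof sketch of Theorem~\ref{th:simulation}, since no left-hand side of the translation matches at the root of a term built purely over the original signature), and a term $\bot(t)$ with $t \in \TF$ is also in normal form because no rule in the encoding has $\bot$ as the root symbol of its left-hand side. Consequently, each of $u$, $u'$, $\bot(t)$ (whichever of them appears as a reduct of $\vpS(t)$) is already a normal form, so confluence forces the two reducts to be syntactically equal.

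A case analysis then finishes the argument. If $u, u' \in \TF$, equality of the normal forms gives $u = u'$ directly. If $u = u' = \failres$, both produce $\bot(t)$ and there is nothing to prove. The main (and really the only) subtle case is the mixed one, say $u \in \TF$ and $u' = \failres$: then the reducts $u$ and $\bot(t)$ would have to coincide, which is impossible since $\bot$ does not belong to the original signature, so this case cannot occur. This completes the proof.

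I expect the only delicate point to be the justification that $\bot(t)$ is genuinely a normal form of the encoding. One must check rule by rule in Figure~\ref{fig:encoding} that $\bot$ only occurs at the root of a left-hand side nested inside a $\vpS$ or auxiliary $\psi$ symbol, never on its own; this is a straightforward but necessary inspection, and once it is established the rest of the proof is immediate from Theorem~\ref{th:simulation} and confluence of the TRS.
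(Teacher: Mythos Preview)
Your proposal is correct and follows exactly the approach the paper intends: the corollary is stated as a ``direct consequence'' of Theorem~\ref{th:simulation} with no further proof, and your argument simply spells out the details (translate both derivations via Simulation, use confluence of the TRS, and observe that terms in $\TF$ and $\bot(t)$ are normal forms so the common reduct forces equality). The normal-form observation for $\bot(t)$ is indeed routine since every left-hand side in Figure~\ref{fig:encoding} is headed by a $\varphi$ or $\psi$ symbol.
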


Moreover, the encoding of a deterministic strategy always leads to
a confluent TRS: 
\begin{restatable}{lemm}{confluence}
\label{lemma:confluentEncoding}
Given a strategy $S$ and a context $\Gamma$ such that $\varS S
\subseteq \domS \Gamma$, the TRS
$\TRctx{\Gamma}{S}\cup\TRg{\Gamma}{\Gamma}$ is confluent.
\end{restatable}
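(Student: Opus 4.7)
The plan is to reduce the problem to a classical orthogonality result: any left-linear, non-overlapping TRS is confluent (Rosen's theorem). Thus the goal becomes showing that $\TRctx{\Gamma}{S}\cup\TRg{\Gamma}{\Gamma}$ is orthogonal, and I would proceed by induction on the structure of $S$ (with a parallel induction for the context $\Gamma$ handling the cases E10--E11).

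For left-linearity, I would inspect the rule schemas in Figure~\ref{fig:encoding} case by case: in E1, E2, E4--E11 the free variables $\Xx, \Xx_i$ in each left-hand side are pairwise distinct, and the alias $\at$ introduces no duplication since the anti-term $\ap \bot(\any)$ appearing under it is closed. The only case that depends on an external assumption is E3, where left-linearity of the generated rule $\varphi_{l\ra r}(l) \ra r$ is inherited from the standing assumption that user-supplied rewrite rules are left-linear.

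For non-overlap, the induction hypothesis handles the embedded sub-strategy encodings, and three structural observations close the argument. First, every $\varphi$, $\phi$, and $\psi$ symbol introduced by the translation is freshly and uniquely bound to a particular sub-strategy, so rules coming from distinct translation blocks have distinct head symbols and cannot produce a critical pair across blocks (the only intentional sharing is $\varphi_X$ between E6 and E10, which contributes exactly the same two rules and is therefore harmless). Second, within one block, the left-hand sides sharing a common head are pairwise disjoint by construction: the anti-pattern $\ap\bot(\any)$ partitions $\TF$ from $\bot(\cdot)$, the pair $l$/$\ap l$ in E3 is complementary by the very definition of anti-terms, the constant patterns~$c$ and the $n$-ary patterns $f(\Xx_1,\ldots,\Xx_n)$ in E8 and E9 have disjoint root symbols, and the $\psif{i}$ patterns in E9 distinguish each argument slot by whether or not it already carries a $\bot$ marker. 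Third, the patterns nested strictly below the head $\varphi$ are built only from symbols of $\FF$, from $\bot$, or from a single variable, none of which is itself the head of any rule in the system, ruling out overlap at a non-root position of any left-hand side.

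The main obstacle I anticipate is the bookkeeping for the second observation in the traversal cases E8 and E9: the family of auxiliary rules $\psif{i}$ for $i \in [1,n]$ in E9 must be checked pairwise, and the correct expansion of the anti-pattern schemas into standard rules is where a careless argument can easily slip. Once left-linearity and non-overlap are established, Rosen's theorem gives confluence of $\TRctx{\Gamma}{S}\cup\TRg{\Gamma}{\Gamma}$ immediately.
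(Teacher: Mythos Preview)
Your non-overlap claim for E3 is where the argument breaks. You write that ``the pair $l$/$\ap l$ in E3 is complementary by the very definition of anti-terms,'' but semantic complementarity of $l$ and $\ap l$ does not entail that the concrete patterns obtained by \emph{expanding} $\ap l$ (Section~\ref{se:implementation}) are pairwise non-overlapping. Take $l = f(a,b)$ over the signature $\{a,b,f\}$ with $f$ binary: expanding $\varphi_{l\ra r}(\Xx \at \ap f(a,b)) \ra \bot(\Xx)$ produces, among others, the rules $\varphi_{l\ra r}(f(b,x_2)) \ra \bot(f(b,x_2))$ and $\varphi_{l\ra r}(f(x_1,a)) \ra \bot(f(x_1,a))$, and these overlap on $\varphi_{l\ra r}(f(b,a))$. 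The critical pair is trivial (both reducts are $\bot(f(b,a))$), but the TRS is not orthogonal, so Rosen's theorem does not apply as stated. The same phenomenon arises for the $\psiF$ failure rules in E8: when two distinct argument positions both carry a $\bot(\any)$, several of the rules $\psiF(\ldots,\bot(\any),\ldots,x) \ra \bot(x)$ match simultaneously.

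The paper's proof anticipates the E3 overlap explicitly and avoids it by a decomposition: the rules coming from the $\ap l$ expansion are isolated into a subsystem $T_1$, shown confluent not via orthogonality but via linearity, termination, and trivial joinability of its critical pairs; the remaining rules form $T_2$; and confluence of the union follows from a modularity result for left-linear systems that are orthogonal \emph{to each other}~\cite{Ohlebusch2002}. The cleanest repair of your approach would be to aim for \emph{weak} orthogonality (left-linear with only trivial critical pairs) rather than orthogonality, verify that every overlap you actually find is trivial, and then invoke the corresponding confluence theorem; alternatively, adopt the paper's split. Note also that left-linearity of $\varphi_{l\ra r}(l) \ra r$ is not a ``standing assumption'' spelled out in the paper; if you rely on it, you should make it explicit.
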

\begin{proof}[Sketch]
  We split $\TRctx{\Gamma}{S}\cup\TRg{\Gamma}{\Gamma}$ into two TRS and show that they are
  both linear, confluent and orthogonal to each other, {\ie} there is
  no overlap between a rule from one and a rule from the other. As a
  consequence, we have the confluence of
  $\TRctx{\Gamma}{S}\cup\TRg{\Gamma}{\Gamma}$~\cite{Ohlebusch2002}.
\end{proof}

\section{Meta-encoding rewriting strategies with rewrite rules}
\label{se:metaencodings}
The strategy translation of Section~\ref{se:encodings} generates a
potentially large number of rewrite rules essentially because of the
way $\all$ and $\one$ are encoded, and this may impact the efficiency
of the corresponding implementation or prevent an automatic tool from
deciding termination. We propose in this section a meta-encoding of
terms which allows for a more economic encoding of strategies into
rewrite rules.

\subsection{Meta-level representation of terms}
The non-failure $\ap\bot(\any)$ has been encoded so far by the set of
terms $\bigcup_{f\in\FF}\{f(x_1,\ldots,x_n)\}$, with $\arty(f)=n$, and
thus all rule schemas in Figure~\ref{fig:encoding} using this
construction are expanded into a number of rules proportional to the
number of symbols in the signature.  The size of the encodings for
$\all$ and $\one$ in the translation depends also on the signature
$\FF$, as the number of rules generated grows with the number of
symbols and with their arities.
To avoid this potential explosion, we represent terms in $\TF$
using a meta-level application and lists: roughly, we write $f(t_1, \ldots,
t_n)$ as $\appl {f}{\args}$, where $\args$ is the list of the 
meta-encoded arguments, and a constant $c$ is represented as $\appl
{c} \nil$.

Formally, given a signature $\FF$, we define a meta-signature $\FFmeta \eqdef \{
  \rawappl, \conss, \nil\} \cup \{\trmeta f \mid f \in \FF \}$ so that $\rawappl$ and
$\conss$ are of arity 2, and $\nil$ and $\trmeta f$ are of arity 0 for all $f
\in \FF$. The symbols $\conss$ and $\nil$ are the usual building blocks for
lists, and we use $\conss$ as a right-associative infix operator, so that $x_1
\conss (x_2 \conss \nil)$ is written $x_1\conss x_2 \conss \nil $.  Given a term
$t \in \TFX$, we define its meta-level encoding $\trmeta t \in \TFXmeta$ as
$\trmeta{f(t_1, \ldots, t_n)} = \appl {\trmeta f}{\trmeta{t_1} \conss \ldots
  \conss \trmeta{t_n} \conss \nil}$, $\trmeta{c} = \appl {\trmeta c} \nil$ if $c
\in \FF^0$, and $\trmeta x = x$ if $x \in \XX$. 

\subsection{Strategy meta-encoding} 

Given the above meta-encoding of terms, we define a translation~$\rawTRm$ of
rewriting strategies on terms in $\TF$ into plain rewrite rules on terms built
from $\FFmeta$ extended with a set of generated $\varphi$ symbols, with the
symbols $\bot$, $\rawbotlist$ encoding failure, and with the symbols for lists
manipulation presented in Figure~\ref{fig:aux-symbols}. The translations of the
elementary strategies and of the control combinators are almost the same as in
Section~\ref{se:encodings}, so we present in Figure~\ref{fig:metaencoding} only
the strategies whose encoding differs the most from
Figure~\ref{fig:encoding}, and we give the complete translation in 
Appendix~\ref{se:appendixMeta}
(Figure~\ref{fig:metaencoding-complete}).

\begin{figure}[!tp]
\begin{mathpar}
\begin{array}{lr@{\hspace{5pt}}c@{\hspace{5pt}}l}
\mathbf{(ME3)}&
\TRm{l \ra r} & = & \{ ~   \varphi_{l\ra r}(\symb{l}) \ra \symb{r}, 
                                \quad
                                \varphi_{l\ra r}(\Xx \at \symb{\ap l}) \ra  \bot(\Xx),\\[+2pt]
                          &&&  ~~~~~ 
                              \varphi_{l\ra r}(\bot(\Xx)) \ra \bot(\Xx)~~\} 
                              \\[+4pt]
\mathbf{(ME8)}&
\TRm{\all(S)} & =  & \TRm{S} ~\cup~ \TRlists  \\[+2pt]
&\bigcup & \{ &  \varphi_{\all(S)}(\bot(\Xx)) \ra \bot(\Xx) , \\[+2pt]
&&& \varphi_{\all(S)}(\appl{\f}{\args}) \ra \propag{\appl{\f}{\varphi^{list}_{\all(S)}(\args)}} , \\[+2pt]
&&& \varphi^{list}_{\all(S)}(\cons{\head}{\tail}) \ra \varphi'_{\all(S)}(\vpS(\head),\tail,\cons{\head}{\nil},\nil), \\[+2pt]
&&& \varphi^{list}_{\all(S)}(\nil) \ra \nil, \\[+2pt]
&&& \varphi'_{\all(S)}(\bot(\any),\todo,\revtried,\any) \ra \botlist{\rconcat{\revtried}{\todo}}, \\[+2pt]
&&& \varphi'_{\all(S)}(\Xx \at \appl{\any}{\any},\nil,\any,\revdone) \ra \reverse{\cons{\Xx}{\revdone}}, \\[+2pt]
&&& \varphi'_{\all(S)}(\Xx \at \appl{\any}{\any},\cons{\head}{\tail},\revtried,\revdone) \ra \\[+2pt]
&&& \multicolumn{1}{r}{\varphi'_{\all(S)}(\vpS(\head), \tail, \cons{\head}{\revtried}, \cons{\Xx}{\revdone}) ~~\}} \\[+2pt]
    \\
\mathbf{(ME9)}&
  \TRm{\one(S)} & = & \TRm{S}~\cup~ \TRlists \\[+2pt]
&\bigcup & \{ &  \varphi_{\one(S)}(\bot(\Xx)) \ra \bot(\Xx) , \\[+2pt]
&&& \varphi_{\one(S)}(\appl{\f}{\args}) \ra \propag{\appl{\f}{\varphi^{list}_{\one(S)}(\args)}} , \\[+2pt]
&&& \varphi^{list}_{\one(S)}(\cons{\head}{\tail}) \ra  \varphi'_{\one(S)}(\vpS(\head),\tail,\cons{\head}{\nil}), \\[+2pt]
&&& \varphi^{list}_{\one(S)}(\nil) \ra \botlist{\nil}, \\[+2pt]
&&& \varphi'_{\one(S)}(\bot(\any),\nil, \revtried) \ra \botlist{\reverse{\revtried}}, \\[+2pt]
&&& \varphi'_{\one(S)}(\bot(\any),\cons{\head}{\tail}, \revtried) \ra \varphi'_{\one(S)}(\vpS(\head),\tail,\cons{\head}{\revtried}), \\[+2pt]
&&& \varphi'_{\one(S)}(\Xx \at \appl{\any}{\any},\todo,\cons{\any}{\revtried})
\ra \\[+2pt]
&&& \multicolumn{1}{r}{\rconcat{\revtried}{\cons{\Xx}{\todo}}}
~~\} 
\end{array}
\end{mathpar}
\caption{\label{fig:metaencoding}Strategy translation for meta-encoded terms;
  $\TRlists$ is defined in Figure~\ref{fig:aux-symbols}.}
\end{figure}

First, note that meta-encoded terms are of the form $\appl \any \any$,
so instead of using $x \at \ap \bot(\any)$ (as in the rule schemas in
Figure~\ref{fig:encoding}) to filter all the ground terms of the
original signature, we directly use $x \at \appl \any \any$.
As explained in the previous section, the rule schemas relying on $x
\at \ap \bot(\any)$ are expanded into a set of rewrite rules whose
cardinality depends on the original signature. By using the term $x
\at \appl \any \any$ instead of this anti-term, each set of rules
represented by a rule schema in the original encoding is replaced by
only one rewrite rule in the meta-encoding.  Consequently, the number
of rules in the meta-encoding can be significantly smaller than in the
previous encoding.

We should point out that the meta-encoding of a rewrite rule
$\mathbf{(ME3)}$ uses $\trmeta{\ap l}$ and not $\ap{\trmeta l}$,
meaning that we meta-encode the terms of $\TF$ that do not match $l$,
instead of considering all the terms of $\TFmeta$ that do not match
$\trmeta l$. Indeed, $\FFmeta$ is more expressive than $\FF$ since we
can write terms $\appl {\trmeta f}{\args}$ where the size of $\args$
differs from the arity of $f$ in $\FF$. Such ill-formed terms are
matched by $\ap \trmeta{l}$, but not by $\trmeta{\ap l}$; however,
they do not have to be considered since they are never produced during
the evaluation of a meta-encoded term {\wrt} a strategy meta-encoding
(see Lemma~\ref{lem:meta}). As a result, we generate as many rules to
represent this anti-pattern with the meta-encoding as with the
original encoding.

\begin{exa}
\label{ex:encodeIdEtControlMeta}
If we consider the signature in Example~\ref{ex:rr-strategy}, the meta-encoding
of the strategy $S_{pz}\eqdef\Plus(\Zero,x)\ra x$ contains the same rules as in
the regular encoding in Example~\ref{ex:encodeElem}, except that we replace the
terms of the original signature by their meta-level
encodings:\footnote{For readability reasons we write in what follows
  $\appl{f}{\ldots}$ instead of $\appl{\trmeta f}{\ldots}$ for all $f\in\FF$.
}
$$
\begin{array}{ll}
\TRm{S_{pz}} =  \{ &   \varphi_{pz}(\appl{\Plus}{\cons{\cons{\appl{\Zero}{\nil}}{x}}{\nil}}) \ra x, \\
& \varphi_{pz}(\appl{\Zero}{\nil}) \ra  \bot(\appl{\Zero}{\nil}), \\
& \varphi_{pz}(\appl{\Succ}{\cons{\Yy_1}{\nil}}) \ra  \bot(\appl{\Succ}{\cons{\Yy_1}{\nil}}), \\

& \varphi_{pz}(\appl{\Plus}{\cons{\appl{\Succ}{\cons{\Yy_1}{\nil}}}{\cons{\Yy_2}{\nil}}}) 
\ra\\
\multicolumn{2}{r}{\bot(\appl{\Plus}{\cons{\appl{\Succ}{\cons{\Yy_1}{\nil}}}{\cons{\Yy_2}{\nil}}}),} \\

& \varphi_{pz}(\appl{\Plus}{\cons{\appl{\Plus}{\cons{\Yy_1}{\cons{\Yy_2}{\nil}}}}{\cons{\Yy_3}{\nil}}}) 
\ra\\
\multicolumn{2}{r}{\bot(\appl{\Plus}{\cons{\appl{\Plus}{\cons{\Yy_1}{\cons{\Yy_2}{\nil}}}}{\cons{\Yy_3}{\nil}}}),} \\

& \varphi_{pz}(\bot(x)) \ra \bot(x)~~\} 
\end{array}
$$
The following meta-encoding is obtained for the $\id$ strategy:
$$\TRm{\id} = \{~ \varphi_{\id}(\appl{\Yy_1}{\Yy_2}) \ra \appl{\Yy_1}{\Yy_2},~
\varphi_{\id}(\bot(\Xx)) \ra \bot(\Xx)
 ~~\} 
$$
and the  strategy $S_{rpz}=\mu \varX\dotsym\choice{(\seq{\Plus(\Zero,x)\ra
    x~}{~\varX})}{\id}$ from Example~\ref{ex:encodeControl} is encoded by:
$$
\begin{array}{rll}
\TRm{S_{rpz}} =  & \{~  \varphi_{rpz}(\Xx \at \appl{\any}{\any}) \ra  \varphi_{tpz}(\Xx) ,
                           & \varphi_{rpz}(\bot(\Xx)) \ra \bot(\Xx)  ~~\} 
                           \\
                      \cup & \{~    \varphi_{\varX}(\Xx \at \appl{\any}{\any}) \ra  \varphi_{tpz}(\Xx) , 
                           & \varphi_{\varX}(\bot(\Xx)) \ra \bot(\Xx)  ~~\} \\
                      \cup & \{~  \varphi_{tpz}(\Xx \at \appl{\any}{\any}  ) \ra  \phich(\varphi_{pz\varX}(\Xx)) , 
                           & \varphi_{tpz}(\bot(\Xx)) \ra \bot(\Xx) , \\
                           &  ~~~\phich(\Xx \at \appl{\any}{\any}) \ra  \Xx , 
                           &  \phich(\bot(\Xx)) \ra  \varphi_{\id}(\Xx) ~~\} \\
                      \cup & \{~ \varphi_{pz\varX}(\Xx \at \appl{\any}{\any}  ) \ra \phiseq( \varphi_{\varX}(\varphi_{pz}(\Xx)), \Xx) , 
                           & \varphi_{pz\varX}(\bot(\Xx)) \ra \bot(\Xx) , \\
                           &  ~~~ \phiseq(\Xx \at \appl{\any}{\any}, \any) \ra  \Xx , 
                           & \phiseq(\bot(\any), \Xx) \ra  \bot(\Xx) ~~\} \\
                              \cup & \TRm{S_{pz}} \cup \TRm{\id}
\end{array}
$$ 

\noindent In contrast with the translations of the previous section, only
the encoding of the strategy~$S_{pz}$ depends on the
considered signature, all the other rules in the encoding would be the
same independently of the signature.
\end{exa}

\begin{figure}[!tp]
$$
\begin{array}{lr@{\hspace{5pt}}lcl}
  \TRlists & =   \{ &  
    \append{\nil}{\Xx} &\ra& \cons{\Xx}{\nil}, \\
&&  \append{\cons{\head}{\tail}}{\Xx} &\ra& \cons{\head}{\append{\tail}{\Xx}}, \\
&&  \reverse{\nil} &\ra& \nil, \\
&&  \reverse{\cons{\head}{\tail}} &\ra& \append{\reverse{\tail}}{\head}, \\
&&  \rconcat{\nil}{\Xx} &\ra& \Xx, \\
&&  \rconcat{\cons{\head}{\tail}}{\Xx} &\ra& \rconcat{\tail}{\cons{\head}{\Xx}}, \\
&&  \propag{\appl{\f}{\botlist{\args}}} &\ra& \bot(\appl{\f}{\args}), \\
&&  \propag{\appl{\f}{\cons{\head}{\tail}}} &\ra& \appl{\f}{\cons{\head}{\tail}}, \\
&&  \propag{\appl{\f}{\nil}} &\ra& \appl{\f}{\nil} ~~\}
\end{array}
$$
\caption{\label{fig:aux-symbols}Auxiliary symbols and rewrite rules
  for lists manipulation.}
\end{figure}

The meta-level representation of the terms has a more important impact
on the way $\all(S)$ and $\one (S)$ are
translated. When applying (the encoding of) one of these strategies to
a term $\appl f \args$
we want to apply $S$ to each of the elements of the list $\args$ to
check if $S$ succeeds or not on these arguments.
To do so, we have to manipulate explicitly lists of arguments and
we introduce thus some extra symbols and the corresponding rewrite
rules to handle them: $\append \llist x$ adds the element $x$
at the end of $\llist$, $\reverse \llist$ computes the reverse of
$\llist$, $\rconcat {\llist_1}{\llist_2}$ concatenates the reverse of
$\llist_1$ with $\llist_2$, and $\rawpropag$ propagates the failure or
the success of the application of a strategy on the arguments of a
meta-level term to the whole term.
The rewrite rules for these symbols are
given in Figure~\ref{fig:aux-symbols}.

Given a subject $\appl f \args$, the encoding of $\all(S)$ first checks whether
the list $\args$ is empty or not with the rules for the symbol
$\varphi^{list}_{\all(S)}$.  If it is empty, then $\all(S)$ has been applied to
a (meta-)constant $\appl f \nil$, and the result should be the term itself,
which is the case because $\varphi^{list}_{\all(S)}(\nil) \ra \nil$ and
$\propag {\appl f \nil} \ra \appl f \nil$.  If
$\args = t_1 \conss t_2 \conss \ldots \conss t_n \conss \nil$, we go through the
list of arguments, using terms of the form
$\varphi'_{\all(S)}(x, \todo, \revtried, \revdone)$, where $x$ is the element
being reduced, initialized successively with $\varphi_S(t_i)$ for
$ 1 \leq i \leq n$, $\todo$ is a list containing the terms $t_{i+1} \ldots t_n$
remaining to reduce, $\revtried$ is a list containing the terms $t_i \ldots t_1$
already reduced, and $\revdone$ is the list containing the reducts
$t'_i \ldots t'_1$ of the terms in $\revtried$. If $\varphi_S(t_i)$ reduces to
$\bot(t_i)$, then we abort the whole process by producing first
$\botlist{\rconcat \revtried \todo}$, which reduces to $\botlist{\args}$, and
then propagating failure with $\rawpropag$ to generate the term
$\bot(\appl f \args)$.
If reducing $\varphi_S(t_i)$ produces a meta-term $t'_i$, then we
store this result in  $\revdone$ and proceed with the next term in
$\todo$.
If there are no terms left to be evaluated in $\todo$, we build the
final result using the last result $t'_n$ and the previously reduced
arguments $t'_{n-1} \ldots t'_1$ to obtain 
$\appl f {t'_1 \conss \ldots \conss t'_n \conss \nil}$.

\begin{exa}
\label{ex:encodeAll}
The meta-encoding of the strategy $S_{all}\eqdef\all(\Plus(\Zero,x)\ra
x)$ consists of:
$$
\begin{array}{rll}
\TRm{S_{all}} = \{  & 
\varphi_{all}(\bot(\Xx)) \ra \bot(\Xx) , \\
& \varphi_{all}(\appl{\f}{\args}) \ra \propag{\appl{\f}{\varphi^{list}_{all}(\args)}} , \\[2pt]
& \varphi^{list}_{all}(\cons{\head}{\tail}) \ra \varphi'_{all}(\varphi_{pz}(\head),\tail,\cons{\head}{\nil},\nil), \\[2pt]
& \varphi^{list}_{all}(\nil) \ra \nil, \\[2pt]
& \varphi'_{all}(\bot(\any),td,rt,\any) \ra \botlist{\rconcat{rt}{td}}, \\[2pt]
& \varphi'_{all}(\appl{f}{\args},\nil,\any,rd) \ra \reverse{\cons{\appl{f}{\args}}{rd}}, \\[2pt]
& \varphi'_{all}(\appl{f}{\args},\cons{\head}{\tail},rt,rd) \ra \\[2pt]
& \hfill \varphi'_{all}(\varphi_{pz}(\head), \tail, \cons{\head}{rt}, \cons{\appl{f}{\args}}{rd}) ~~\} \\[2pt]
    \cup & \TRm{S_{pz}} ~~\cup~~ \TRlists
\end{array}
$$ 
The translation $\TRctx{\Gamma}{\all(\Plus(\Zero,x)\ra x)}$ would
contain much more rules and their number depends on the considered
signature. 
We can see in  Example~\ref{ex:encodeAllTyped} the impact of the
signature on the size of the encoding using the original terms instead
of their meta-level representation.
\end{exa}

The encoding of $\one(S)$ follows the same principles. If
$\args = t_1 \conss t_2 \conss \ldots \conss t_n \conss \nil$, then we explore the list
with $\varphi'_{\one(S)}(x, \todo, \revtried)$, where $x$ is initialized
successively with $\varphi_S(t_i)$ until the encoding of $S$ succeeds on one of
these terms. The list $\todo$ contains what is left to try (\ie
$t_{i+1} \conss \ldots \conss t_n \conss \nil$), and $\revtried$ contains
$t_i \conss t_{i-1} \conss \ldots \conss t_1 \conss \nil$. If~$x$ becomes a meta-term
$t'_i$, then we rewrite it into
$\rconcat {t_{i-1} \conss \ldots \conss t_1 \conss \nil}{t'_i \conss t_{i+1} \conss
  \ldots \conss t_n \conss \nil}$
which then reduces to the required result. If $S$ fails on $t_i$ (\ie $x$ becomes
$\bot(t_i)$) and $\todo$ is not empty, then we consider the next term by
reducing to
$\varphi'_{\one(S)}(\varphi_S(t_{i+1}), \todo, t_{i+1} \conss \revtried)$.
Otherwise, we reduce to $\botlist{\reverse{\revtried}}$, which produces
$\botlist{\args}$, and the rules for $\rawpropag$ then generate
$\bot(\appl f \args)$, as wished.

\subsection{Properties of the meta-encoding}
We state now the correctness of the meta-encoding by first
establishing the correspondence between $\rawTRm$ and the translation
of Section~\ref{se:encodings}. The next lemma also shows that applying the
strategy meta-encoding to a meta-encoded term results in a meta-encoded term, as
expected. We extend $\trmeta \cdot$ to $\bot$ by defining $\trmeta{\bot(t)} =
\bot(\trmeta t)$.

\begin{restatable}{lemm}{meta}
  \label{lem:meta}
  Given a term $t\in \TF$, a strategy $S$ and a context $\Gamma$ such that
  $\varS S \subseteq \domS \Gamma$, 
  \begin{enumerate}
  \item if $\trsred{\TRctx{\Gamma}{S} \cup \TRg{\Gamma}{\Gamma}}{\vpS(t)}{u}$
    with $u \in \TF$ or $u = \bot(t)$, then $\trsred{\TRm{S} \cup
      \TRmg{\Gamma}}{\vpS(\symb t)}{\symb{u}}$;
  \item if $\trsred{\TRm{S} \cup \TRmg{\Gamma}}{\vpS(\symb t)}{t''}$ and $t''
    \in \TFmeta$, then there exists
    $t'$ such that $\symb{t'} = t''$ and $\trsred{\TRctx{\Gamma}{S} \cup
      \TRg{\Gamma}{\Gamma}}{\vpS(t)}{t'}$.
  \item if $\trsred{\TRm{S} \cup \TRmg{\Gamma}}{\vpS(\symb t)}{\bot(t'')}$ and
    $t'' \in \TFmeta$, then $t''=\symb t$ and $\trsred{\TRctx{\Gamma}{S} \cup
      \TRg{\Gamma}{\Gamma}}{\vpS(t)}{\bot(t)}$.
  \end{enumerate}
\end{restatable}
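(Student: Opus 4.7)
The plan is to prove all three items by structural induction on the strategy $S$, using the induction hypothesis simultaneously in both directions (from original to meta-encoding and back). For item (1), within each case we further induct on the length of the reduction $\vpS(t) \multieval{} u$ in the original encoding; for items (2) and (3) we induct on the length of the reduction in the meta-encoding. A key preliminary observation is that $t \in \TF$ iff $\symb t$ is of the form $\appl{\any}{\any}$, so every test of the form $\Xx \at \ap \bot(\any)$ in the original encoding corresponds exactly to the test $\Xx \at \appl{\any}{\any}$ in the meta-encoding; similarly, for a rewrite rule $l \ra r$, the meta-encoded anti-pattern $\symb{\ap l}$ denotes exactly the meta-encodings of ground terms not matching $l$, because we meta-encode the concrete instances rather than generating all ill-formed terms of $\TFmeta$.

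With this in hand, the cases for the elementary strategies ($\id$, $\fail$, $l \ra r$) and the control combinators ($\seq{S_1}{S_2}$, $\choice{S_1}{S_2}$, $\mu X\dotsym S$, $X$) are routine: up to the systematic replacement of $\ap \bot(\any)$ by $\appl{\any}{\any}$, the rules in Figure~\ref{fig:metaencoding-complete} are in one-to-one correspondence with those of Figure~\ref{fig:encoding}. Each reduction step in one encoding is mirrored by the same step on meta-encoded terms in the other, and the induction hypothesis applied to the sub-strategies takes care of the recursive structure. For items (2) and (3), an additional invariant is needed: meta-encoded terms are preserved by the meta-encoding reductions, so that intermediate reducts of the shape $\vpn{S'}(u)$ always have $u \in \TFmeta \cup \{\bot(v) \mid v \in \TFmeta\}$, ruling out any ill-formed $\appl{\symb f}{\args}$ where $\args$ does not match the arity of $f$.

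The main obstacle is the treatment of the traversal combinators $\all(S)$ and $\one(S)$, where the two encodings differ structurally: the original one creates an $n{+}1$-ary $\psiF$ carrying $n$ parallel copies of $\vpS(t_i)$, whereas the meta-encoding processes the argument list one position at a time using the accumulators in $\varphi'_{\all(S)}$ and $\varphi'_{\one(S)}$. For $\all(S)$ applied to $f(t_1,\ldots,t_n)$, I would establish an auxiliary simulation lemma: for any $0 \le k \le n$, any intermediate state of the meta-encoding of the form $\varphi'_{\all(S)}(\vpS(\symb{t_{k+1}}), \symb{t_{k+2}} \conss \ldots \conss \symb{t_n} \conss \nil, \symb{t_{k}} \conss \ldots \conss \symb{t_1} \conss \nil, \symb{u_k} \conss \ldots \conss \symb{u_1} \conss \nil)$ with $u_1,\ldots,u_k \in \TF$ corresponds, via the induction hypothesis on $S$, to the parallel state $\psiF(\symb{u_1},\ldots,\symb{u_k},\vpS(\symb{t_{k+1}}),\symb{t_{k+2}},\ldots,\symb{t_n},\symb{f(t_1,\ldots,t_n)})$ of the original encoding. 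Processing element $k{+}1$ then yields either $\symb{u_{k+1}}$, advancing the accumulator, or $\bot(\symb{t_{k+1}})$, at which point the rule for $\varphi'_{\all(S)}(\bot(\any),\todo,\revtried,\any)$ produces $\botlist{\rconcat{\revtried}{\todo}}$, which reduces via $\TRlists$ to $\botlist{\symb{t_1}\conss \ldots \conss \symb{t_n} \conss \nil}$, and then $\rawpropag$ yields $\bot(\symb{f(t_1,\ldots,t_n)})$, matching the behaviour of the original encoding.

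The case for $\one(S)$ is analogous but requires an invariant reflecting its different success condition (the first $i$ such that $S$ succeeds on $t_i$): I would track the state $\varphi'_{\one(S)}(\vpS(\symb{t_{k+1}}), \symb{t_{k+2}} \conss \ldots \conss \symb{t_n} \conss \nil, \symb{t_k} \conss \ldots \conss \symb{t_1} \conss \nil)$ with $\vpS$ having failed on $t_1,\ldots,t_k$, and correspond it to $\psif{k+1}(\bot(\symb{t_1}),\ldots,\bot(\symb{t_k}),\vpS(\symb{t_{k+1}}),\symb{t_{k+2}},\ldots,\symb{t_n})$ in the original encoding. The correctness of the final assembly step for $\one(S)$, using $\rawrconcat$ to rebuild the argument list around a successful reduct, and the strict failure propagation in item~(3) via $\botlist{\cdot}$ and $\rawpropag$, would be verified by straightforward case analysis on the auxiliary rules in Figure~\ref{fig:aux-symbols}, which are easily seen to be confluent and terminating on well-formed list arguments. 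The bookkeeping for these invariants, rather than any conceptual difficulty, is what makes this case technically heavy.
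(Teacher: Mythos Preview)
Your approach is essentially the paper's: structural induction on $S$, with the control-combinator cases being trivial rule-for-rule correspondences and the real work concentrated in $\all$ and $\one$. The paper factors the $\all$/$\one$ cases slightly differently, though. Rather than maintaining a step-by-step correspondence between intermediate states of the two encodings, it proves four standalone ``endpoint'' lemmas about the meta-encoding alone: for instance, that $\vpn{S}(\symb{t_i}) \multieval{} \symb{t'_i}$ with $\symb{t'_i} \neq \bot(\symb{t_i})$ for all $i$ iff $\varphi'_{\all(S)}(\vpS(\symb{t_1}), \symb{t_2} \conss \ldots, \symb{t_1} \conss \nil, \nil) \multieval{} \symb{t'_1} \conss \ldots \conss \symb{t'_n} \conss \nil$, and three analogous lemmas for $\all$-failure, $\one$-success, and $\one$-failure. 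The bridge to the original encoding then goes through the induction hypothesis on each $t_i$ separately, together with the already-established behaviour of $\psiF$ from the proof of Theorem~\ref{th:simulation}.

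This factoring sidesteps a small wrinkle in your invariant: the state $\psiF(u_1,\ldots,u_k,\vpS(t_{k+1}),t_{k+2},\ldots,t_n,f(t_1,\ldots,t_n))$ you write for the original encoding is not actually a reachable state there, since the very first rule wraps \emph{every} argument with $\vpS$, giving $\psiF(\vpS(t_1),\ldots,\vpS(t_n),f(\ldots))$. You would need either to add the missing $\vpS$ wrappers on positions $k{+}2,\ldots,n$, or (cleaner) to abandon the direct state correspondence and argue, as the paper does, via the per-argument behaviour of $S$. Neither fix is hard, and your overall plan is sound.
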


\begin{proof}[Sketch]
  The proof is by induction on $S$. The correspondence is direct for the
  reduction steps which do not involve $\all$ or $\one$. In these two cases, we
  relate how $\TRm{\all(S)}$ or $\TRm{\one(S)}$ behaves on $\appl {\trmeta f}
  {\trmeta{t_1} \conss \ldots \conss \trmeta{t_n} \conss \nil}$ depending on how $\TRm
  S$ behaves on each of the $\trmeta{t_i}$. For example, we show that
  $\trsred{\TRm{S} \cup \TRmg{\Gamma}}{\vpn{S}(\symb {t_i})}{\symb{t'_i}}$ with
  $\symb{t'_i} \neq \bot(\symb{t_i})$ for all $i$ iff
  \begin{multline*}
  \trsrednll{\TRm{\all(S)}
    \cup \TRmg{\Gamma}}{\varphi'_{\all(S)}(\varphi_{S}(\symb{t_1}), \symb{t_2}
    \conss \ldots \conss \symb{t_n} \conss \nil, \symb{t_1} \conss \nil,
    \nil)}{\symb{t'_1} \conss \ldots \conss \symb{t'_n} \conss \nil}
  \end{multline*}
  From there the induction hypothesis tells us that $\trsred{\TRm{S} \cup
    \TRmg{\Gamma}}{\vpn{S}(\symb {t_i})}{\symb{t'_i}}$ iff
  $\trsred{\TRctx{\Gamma}{S} \cup \TRg{\Gamma}{\Gamma}}{\vpS(t_i)}{t'_i}$, and
  then we can show that $\trsred{\TRctx{\Gamma}{S} \cup
    \TRg{\Gamma}{\Gamma}}{\vpS(t_i)}{t'_i}$ iff $\trsred{\TRctx{\Gamma}{\all(S)}
    \cup \TRg{\Gamma}{\Gamma}}{f(t_1, \ldots, t_n)}{f(t'_1, \ldots, t'_n)}$ to
  conclude. In total, we have four different cases (depending on whether the
  applications of $\all(S)$ or $\one(S)$ succeed or not), which are treated
  similarly. 
\end{proof}

Combined with Theorem~\ref{th:simulation} this result allows us to
deduce the correspondence between the strategy application and $\rawTRm$.

\begin{thm}[Simulation]
\label{th:simulation-meta}
Given a term $t\in \TF$, a strategy $S$ and a context $\Gamma$ such that $\varS
S \subseteq \domS \Gamma$,
\begin{enumerate}
\item 
$\stratappctx{\Gamma}{S}{t} \sred t'$
~~iff~~
$\trsred{\TRm{S} \cup \TRmg{\Gamma}}{\vpS(\trmeta t)}{\trmeta{t'}}$, $t'\in \TF$
\item 
\textnormal{$\stratappctx{\Gamma}{S}{t} \sred \failres$}
~~iff~~
$\trsred{\TRm{S} \cup \TRmg{\Gamma}}{\vpS(\trmeta t)}{\bot(\trmeta t)}$  
\end{enumerate}
\end{thm}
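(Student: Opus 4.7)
The proof will be a direct combination of Theorem~\ref{th:simulation}, which relates the strategy semantics to the original encoding $\TRctx{\Gamma}{S} \cup \TRg{\Gamma}{\Gamma}$, and Lemma~\ref{lem:meta}, which bridges that encoding with the meta-level encoding $\TRm{S} \cup \TRmg{\Gamma}$. The key observation is that both parts of the theorem concern reductions whose target is either a meta-encoded term ($\trmeta{t'}$) or a failure on a meta-encoded term ($\bot(\trmeta t)$), exactly the two shapes handled by items (1)--(3) of Lemma~\ref{lem:meta}.

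For the forward direction of item~1, I assume $\stratappctx{\Gamma}{S}{t} \sred t'$ with $t' \in \TF$. Theorem~\ref{th:simulation}(1) gives $\trsred{\TRctx{\Gamma}{S} \cup \TRg{\Gamma}{\Gamma}}{\vpS(t)}{t'}$, and since $t' \in \TF$ we apply Lemma~\ref{lem:meta}(1) to obtain $\trsred{\TRm{S} \cup \TRmg{\Gamma}}{\vpS(\trmeta t)}{\trmeta{t'}}$. For the converse, I start from $\trsred{\TRm{S} \cup \TRmg{\Gamma}}{\vpS(\trmeta t)}{\trmeta{t'}}$ with $\trmeta{t'} \in \TFmeta$ and apply Lemma~\ref{lem:meta}(2): there exists $t''$ such that $\trmeta{t''} = \trmeta{t'}$ and $\trsred{\TRctx{\Gamma}{S} \cup \TRg{\Gamma}{\Gamma}}{\vpS(t)}{t''}$; since the meta-encoding $\trmeta{\cdot}$ is injective on $\TF$ (as a trivial structural induction confirms), we conclude $t'' = t'$, and Theorem~\ref{th:simulation}(1) then yields $\stratappctx{\Gamma}{S}{t} \sred t'$.

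Item~2 is handled symmetrically. In the forward direction, Theorem~\ref{th:simulation}(2) gives $\trsred{\TRctx{\Gamma}{S} \cup \TRg{\Gamma}{\Gamma}}{\vpS(t)}{\bot(t)}$, and since $u = \bot(t)$ is exactly the second case permitted in Lemma~\ref{lem:meta}(1), we obtain $\trsred{\TRm{S} \cup \TRmg{\Gamma}}{\vpS(\trmeta t)}{\trmeta{\bot(t)}} = \bot(\trmeta t)$, using the extension $\trmeta{\bot(t)} = \bot(\trmeta t)$. For the converse, Lemma~\ref{lem:meta}(3) applied to $\trsred{\TRm{S} \cup \TRmg{\Gamma}}{\vpS(\trmeta t)}{\bot(\trmeta t)}$ directly produces $\trsred{\TRctx{\Gamma}{S} \cup \TRg{\Gamma}{\Gamma}}{\vpS(t)}{\bot(t)}$, from which Theorem~\ref{th:simulation}(2) gives $\stratappctx{\Gamma}{S}{t} \sred \failres$.

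The proof requires no further induction, since all the real work has been pushed into the two cited results. The only subtlety is the implicit use of injectivity of $\trmeta{\cdot}$ in the converse of item~1 in order to recover the specific target $t'$ (rather than some $t''$ with the same meta-encoding); this is immediate but should be mentioned for rigour. The main intellectual content of the meta-level translation lies in Lemma~\ref{lem:meta} itself (and in particular in correctly encoding $\all$ and $\one$ via the auxiliary list-manipulating rules), so at the level of the present theorem the argument is purely a transport along Lemma~\ref{lem:meta}.
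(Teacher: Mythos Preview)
Your proposal is correct and follows exactly the same approach as the paper, which simply writes ``By Lemma~\ref{lem:meta} and Theorem~\ref{th:simulation}.'' You have merely unpacked this one-line proof into its constituent steps, including the (indeed trivial) injectivity of $\trmeta{\cdot}$ needed to recover the specific $t'$ in the converse of item~1.
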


\begin{proof}
  By Lemma~\ref{lem:meta} and Theorem~\ref{th:simulation}.
\end{proof}

\section{Encoding rewriting strategies with typed rewrite rules}
\label{se:typing}
Many programming languages use type systems to classify values and
expressions into types, to define how those types can be manipulated
and how they interact.
In logic~\cite{Manzano:1993:IML},
many-sorted signatures are used similarly to partition the
universe into disjoint subsets, one for every sort, and
a many-sorted logic
naturally leads to a type theory~\cite{JacobsCLTT}.
Indeed, sorts of many-sorted signatures are also known as algebraic
data-types in programming languages.

When using many-sorted signatures the strategy translations should
guarantee that all the generated terms and rewrite rules respect the
corresponding construction constraints.
Traversal strategies propagate the application of a given strategy to
(sub-)terms of potentially different sorts and consequently the
strategies we consider here are intrinsically polymorphic. This
polymorphic nature should be clearly retrieved in the encoding and
thus, the sorts of the symbols generated during the translation should
cope with this constraint. To support this behaviour we consider
many-sorted signatures with potentially overloaded symbols and we show
that the proposed translations can be adapted easily to work even when
overloaded symbols are not supported by the targeted language.

We consider in what follows sort preserving rewrite rules and
consequently sort preserving strategies. We propose a translation
generating sort preserving TRS which are, as before, faithful
encodings of the corresponding strategies.

\subsection{Many-sorted signatures and term rewriting systems}
A \emph{many-sorted signature} $\Sigma=(\SO,\FF)$, or simply $\Sigma$,
consists of a set of sorts $\SO$ and a set of symbols $\FF$. A symbol
$f$ with \emph{domain} $w=\sst_1\prosep\ldots\prosep\sst_n\in\SO^*$ and
\emph{co-domain} $\sst$ is written $f{\sigd}w{\sarrow}\sst$; $n$ is its arity
and $w{\sarrow}\sst$ its profile.
Symbols can be overloaded, {\ie} a symbol $f$ can have profiles
$w{\sarrow}\sst$ and $w'{\sarrow}\sst'$ with $w\neq w'$.

We write $\FFs{\sst}$ for the subset of symbols of codomain $\sst$.
Variables are also sorted and $\Xx{\sigd}s$ means that variable $\Xx$ has
sort $s$. We sometimes annotate the name of a variable by its sort
and use, for example, $\Xxs{}{s}$ to implicitly indicate a variable of
sort $s$. The set $\XX_{\sst}$ denotes a set of variables of sort
$\sst$ and $\XX = \bigcup_{\sst\in\SO} \XX_{\sst}$ is the set of
sorted variables.

The set of terms of sort $\sst$, denoted $\TFXs{\sst}$ is the smallest
set containing ${\XX}_{\sst}$ and such that $f(t_1,\ldots,t_n)$ is in
$\TFXs{\sst}$ whenever $f{\sigd}\sst_1\prosep\ldots\prosep\sst_n {\sarrow}
\sst$ and $t_i \in \TFXs{\sst_i}$ for $i\in[1,n]$.
We write $t{\sigd}s$ when the term $t$ is of sort $s$, {\ie} when $t\in\TFXs{\sst}$.
The set of \emph{sorted terms} is defined as
$\TFXs{\SO}=\bigcup_{\sst\in\SO}\TFXs{\sst}$.  Sorted substitutions
are defined as mappings $\sigma$ from sorted variables to sorted terms
such that if $x{\sigd}s$ then $\sigma(x) \in \TFXs{s}$. Note that for
any such sorted substitution $\sigma$, $t{\sigd}s$ iff
$\sigma(t){\sigd}s$.

A \emph{sorted rewrite rule} is a rewrite rule $l \ra r$ with
$l,r{\sigd}\sst$ and a sorted TRS is a set of sorted rewrite rules
inducing a corresponding {rewriting relation} over sorted terms.
Given two sorted terms $l,t{\sigd}\sst$, we say that $l$ matches $t$ if
there exists a sorted substitution $\sigma$ such that
$t=\sigma(l)$; in this case we say that $t$ is a sorted instance
of $l$.

\subsection{Typed encoding of rewriting strategies}
Given a strategy built over a many-sorted signature, the translation defined in
Section~\ref{se:encodings} can still be used to generate a faithful encoding in
the sense of Theorem~\ref{th:simulation}. Since any term in $\TFs{\SO}$ is also
a term in $\TF$, the termination of the (unsorted) TRS encoding the strategy
guarantees the termination of the sorted strategy. Nevertheless, we can use the
extra information provided by sorts to refine this translation and remove the
generated rewrite rules that are useless, because they cannot be applied to
sorted terms. More importantly, this translation cannot be used as a strategy
compiler if the target language is many-sorted and accepts only sorted terms and
rewrite rules.

The presence of ill-sorted terms in the encoding generated by the translation in
Figure~\ref{fig:encoding} is essentially due to the unsorted semantics we have
considered so far for anti-terms. In an unsorted world, the anti-term $\ap t$
represents all the terms which do not match $t$, and $\ap \bot(x)$ denotes all
the ground terms in $\TF$. We adapt these notions to a many-sorted setting as
follows: given a term $t$ of sort $s$, we write $\apt{\sst} t$ for the terms of
sort $\sst$ which do not match~$t$, and given a sort $\sst$ we write
$\apt{\sst} \bott{\sst}(x)$ to denote all the ground sorted terms in $\TFs{\sst}$.
Note that $\bigcup_{\sst\in\SO}\{\apt{\sst} \bott{\sst}(x)\}$ denotes all the
ground \emph{sorted} terms $\TFs{\SO}$ of the original signature.

\begin{exa}
\label{ex:typedAPs}

Consider the many-sorted signature $(\SO,\FF)$ where $\SO=\{\ints\sigsep\bool\}$ and
$\FF=\FFs{\ints}\cup\FFs{\bool}$ with
$\FFs{\ints}=\{\Zero{\sigd}{\sarrow}\ints \sigsep \Succ{\sigd}\ints{\sarrow}\ints
\sigsep \Plus{\sigd}\ints \prosep \ints{\sarrow}\ints\}$,
$\FFs{\bool}=\{\true{\sigd}{\sarrow}\bool \sigsep \false{\sigd}{\sarrow}\bool \sigsep \odd{\sigd}\ints{\sarrow}\bool \sigsep \even{\sigd}\ints{\sarrow}\bool\}$.
The anti-term $\apt{\ints} \Plus(\Zero,x)$ denotes exactly the sorted
terms in $\TFs{\ints}$ matched by $\Zero$, $\Succ(x_1)$,
$\Plus(\Succ(x_1),x_2)$ or $\Plus(\Plus(x_1,x_2),x_3)$. In particular,
it does not denote the (ill-sorted) term $\Plus(\true,\Zero)$.
The anti-term $\apt{\bool} \Plus(\Zero,x)$ denotes the
sorted terms in $\TFs{\bool}$ matched by $\true$, $\false$,
$\odd(x_1)$ or $\even(x_1)$. 

The anti-term $\apt{\bool} \bott{\bool}(x)$ denotes all sorted instances of $\true$,
$\false$, $\odd(x_1)$ and $\even(x_1)$ while $\apt{\ints} \bott{\ints}(x)$
denotes all sorted instances of $\Zero$, $\Succ(x_1)$ and
$\Plus(x_1,x_2)$. The union of all these instances represent indeed
all the sorted terms in $t\in\TFs{\SO}$.
\end{exa}

We now adapt the unsorted translation defined in Section~\ref{se:encodings} to
accommodate many-sorted signatures. Given a sorted signature $(\SO, \FF)$, we
define the signature $(\SO,\FFgen{S})$ where the sorts are unchanged, and $\FF$
is extended with the generated $\varphi$ symbols and $\bot$. Unlike in the
unsorted case, we have to specify their profile: the symbols $\bot$ and
$ \varphi_{\id}$, $\varphi_{\fail}$, $\varphi_{l\ra r}$,
$\varphi_{\seqsub{S_1}{S_2}}$, $\varphi_{\choicesub{S_1}{S_2}}$, $\phich$,
$\varphi_{\mu X\dotsym S}$, $\varphi_{X}$, $\varphi_{\all(S)}$,
$\varphi_{\one(S)}$ have sort $\sst{\sarrow}\sst$ for all $\sst \in \SO$, and
$ \phiseq{\sigd}\sst\prosep\sst{\sarrow}\sst $ for all $\sst\in\SO$. These
symbols are overloaded, since they have as many profiles as there are sorts in
$\SO$. In contrast, the profiles of the  generated symbols of the form $\psiF$ and
$\psif{i}$ depend on the profile of $f$: if
$f{\sigd}\sst_1\prosep\ldots\prosep\sst_n {\sarrow} \sst \in
\FFs{\sst}^{\supzero}$
then $\psiF{\sigd}\sst_1\prosep\ldots\prosep\sst_n\prosep\sst {\sarrow} \sst$
and $\psif{i}{\sigd}\sst_1\prosep\ldots\prosep\sst_n {\sarrow} \sst$.

The translation $\TRctxS{}{S}$ transforms the strategy $S$ into a set of sorted
rewrite rules; since its definition is the same as for the unsorted case
(Figure~\ref{fig:encoding}), except for extra sort annotations on variables and
$\ap$, we give the rules only in Appendix~\ref{se:propertiesSortedAppendix} (Figure~\ref{fig:encodingTyped}). Even
though the definitions of $\mathbb T$ and $\mathbb T_{\mathcal S}$ are almost
the same, taking sorts into account could have an important impact on
the number of rules
generated in the resulting encoding. Sorts may introduce duplication: since
$\bot$ has now a profile, for every rule of the form
$\varphi(\bot(\Xx)) \ra \bot(\Xx)$ generated in the unsorted case, we generate
now the set of rules
$\bigcup_{\sst\in\SO}\{\varphi(\bot(\Xxs{}{\sst})) \ra \bot(\Xxs{}{\sst})\}$. In
contrast, expanding $\ap\bot(\any)$ in the unsorted case produces as many rules
as considering $\bigcup_{\sst\in\SO}\{\apt{\sst}\bott{\sst}(x)\}$ in the sorted
case, assuming $\FF$ has as many elements as $\bigcup_{\sst \in \SO} \FFs\sst$.
As detailed in Section~\ref{sec:genTRS}, $\ap\bot(\any)$ is expanded into the
set of terms $\bigcup_{f\in\FF}\{f(x_1,\ldots,x_{\arty(f)})\}$, while
$\bigcup_{\sst\in\SO}\{\apt{\sst}\bott{\sst}(x)\}$
becomes the generally smaller set
$\bigcup_{\sst\in\SO}\bigcup_{f\in\FFs{\sst}}\{f(x_1,\ldots,x_n)\}$.

\begin{exa}
\label{ex:encodeIdTyped}
If we consider the many-sorted signature in Example~\ref{ex:typedAPs},
we obtain the following encoding for the $\id$ strategy when expanding
the rule schemas in Figure~\ref{fig:encodingTyped}:\\
\[
\begin{array}{lll}
\TRctxS{\Gamma}{\id} =  \{ 
  & \varphi_{\id}(\Zero) \ra \Zero 
  & \varphi_{\id}(\true) \ra \true, 
 \\
 &  \varphi_{\id}(\Succ(\Yy_1)) \ra  \Succ(\Yy_1), 
 & \varphi_{\id}(\false) \ra \false, 
 \\
 &  \varphi_{\id}(\Plus(\Yy_1,\Yy_2)) \ra  \Plus(\Yy_1,\Yy_2), 
 & \varphi_{\id}(\odd(\Yy_1)) \ra  \odd(\Yy_1), 
 \\
 &  \varphi_{\id}(\bot(\Xx)) \ra \bot(\Xx),
 & \varphi_{\id}(\even(\Yy_1)) \ra  \even(\Yy_1), 
 \\
 &
 & \varphi_{\id}(\bot(\Zz)) \ra \bot(\Zz)
 ~~\} 
 \\
\end{array}
\]
To improve readability, we do not annotate variables but instead use an
environment mapping variables to sorts: here, we have
$\Xx,\Yy_1,\Yy_2{\sigd}\ints$
and $\Zz{\sigd}\bool$.
Besides, the signature is enriched in this case with the symbols $
\{~\bot{\sigd} \ints {\sarrow} \ints \sigsep$ $\varphi_{\id}{\sigd} \ints
{\sarrow} \ints \sigsep$ $\bot{\sigd} \bool {\sarrow} \bool
\sigsep$ $\varphi_{\id}{\sigd} \bool {\sarrow} \bool~\} $.

For this signature, the unsorted encoding would contain the same
rewrite rules modulo syntactic equivalence, {\ie} the rules in
$\TRctx{\Gamma}{\id}$ are those in $\TRctxS{\Gamma}{\id}$ with the
rules $\varphi_{\id}(\bot(\Xx)) \ra \bot(\Xx)$ and
$\varphi_{\id}(\bot(\Zz)) \ra \bot(\Zz)$ collapsed into only one rule.

\end{exa}

The unsorted and sorted encodings differ more on how they handle rewrite rules
and traversal combinators. As explained before, the anti-pattern
$\varphi_{l\ra r}(\Xx \at \ap l) \ra \bot(\Xx)$ is expanded differently between
the unsorted and sorted cases: we consider only the terms of the same sort as
$l$ in the sorted translation, thus reducing the number of generated rules
compared to the unsorted case.

\begin{exa}
\label{ex:encodeElemTyped}

Let $S_{pz}\eqdef\Plus(\Zero,x)\ra x$, $\SO=\{\ints,\bool\}$, and
$\FF=\FFs{\ints}\cup\FFs{\bool}$ with
$\FFs{\bool}=\{\odd{\sigd}\ints{\sarrow}\bool \sigsep
\even{\sigd}\ints{\sarrow}\bool \sigsep \true~{\sarrow}\bool \sigsep
\false~{\sarrow}\bool\}$. Then 
\[
\begin{array}{lll}
\TRctxS{\Gamma}{S_{pz}} =  \{ &   \varphi_{pz}(\Plus(\Zero,x)) \ra x, \\
  & \varphi_{pz}(\Zero) \ra  \bot(\Zero), 
  & \varphi_{pz}(\true) \ra  \bot(\true),\\
  & \varphi_{pz}(\Succ(\Yy_1)) \ra  \bot(\Succ(\Yy_1)), 
  & \varphi_{pz}(\false) \ra  \bot(\false),\\
  & \varphi_{pz}(\Plus(\Succ(\Yy_1),\Yy_2)) \ra \bot(\Plus(\Succ(\Yy_1),\Yy_2)), 
  &   \varphi_{pz}(\odd(\Yy_1)) \ra  \bot(\odd(\Yy_1)), \\
  & \varphi_{pz}(\Plus(\Plus(\Yy_1,\Yy_2),\Yy_3)) \ra  \bot(\Plus(\Plus(\Yy_1,\Yy_2),\Yy_3)), 
  & \varphi_{pz}(\even(\Yy_1)) \ra  \bot(\even(\Yy_1)), \\
  & \varphi_{pz}(\bot(x)) \ra \bot(x),
  & \varphi_{pz}(\bot(z)) \ra \bot(z)~~\} 
\end{array}
\]
so that $\Xx,\Yy_1,\Yy_2,\Yy_3{\sigd}\ints, z{\sigd}\bool$, and the signature is
extended with $\{~\bot{\sigd}\ints{\sarrow}\ints \sigsep$
$\varphi_{pz}{\sigd}\ints{\sarrow}\ints \sigsep$
$\bot{\sigd}\bool{\sarrow}\bool \sigsep$
$\varphi_{pz}{\sigd}\bool{\sarrow}\bool~\}$.

The unsorted translation $\TRctx{\Gamma}{S_{pz}}$ 
contains the rules in $\TRctxS{\Gamma}{S_{pz}}$ as well as the rules
\[
\begin{array}{lll}
  ~\{ 
  & \varphi_{pz}(\Plus(\odd(\Yy_1),\Yy_2)) \ra \bot(\Plus(\odd(\Yy_1),\Yy_2)), 
  & \varphi_{pz}(\Plus(\true,\Yy_2)) \ra \bot(\Plus(\true,\Yy_2)), \\
  & \varphi_{pz}(\Plus(\even(\Yy_1),\Yy_2)) \ra \bot(\Plus(\even(\Yy_1),\Yy_2)), 
  & \varphi_{pz}(\Plus(\false,\Yy_2)) \ra \bot(\Plus(\false,\Yy_2))
  ~~\} 
\end{array}
\]
which operate on ill-sorted terms {\wrt} the sorted signature we consider here.
\end{exa}

For the encodings of $\all$ and $\one$, we can also exploit the
profiles of the symbols and generate less rules to filter the ground
terms of the original many-sorted signature. Given
$f{\sigd}\sst_1\prosep\ldots\prosep\sst_n{\sarrow} \sst$, encoding
produces now schemas of the form
$\psiF(\Xx_1\at\apt{\sst_1}\bott{\sst_1}(\any),\ldots,\Xx_n\at\apt{\sst_n}\bott{\sst_n}(\any),\Yys{}{\sst})
\ra f(\Xx_1,\ldots,\Xx_n)$ and
$\psif{i}(\bot(\Xxs{1}{\sst_1}),\ldots,\bot(\Xxs{i-1}{\sst_{i-1}}),\Xx_i\at\apt{\sst_i}\bott{s_i}(\any),\Xxs{i+1}{\sst_{i+1}},\ldots,\Xxs{n}{\sst_n})
\ra f(\Xxs{1}{\sst_1},\ldots,\Xxs{n}{\sst_n})$ for respectively $\all$
and $\one$ and thus generates significantly less rules than in the
unsorted case where we used the exhaustive $\ap\bot(\any)$.
 
The encodings of the other
operators (sequence, choice, \ldots) also filter terms using $x \at
\apt{\sst}\bott{\sst}(\any)$, but for all $\sst \in \SO$ and not for a
given $\sst$, so the number of generated rules for these constructs is
the same as in the unsorted case.

\begin{exa}
\label{ex:encodeAllTyped}
If we consider the many-sorted signature in Example~\ref{ex:typedAPs},
we obtain\\
 $\TRctxS{\Gamma}{\all(\Plus(\Zero,x)\ra x)} =  \TRctxS{\Gamma}{S_{pz}} ~\bigcup~ \{$\\[-5pt]
\begin{minipage}{.57\textwidth}
\[
\begin{array}{lll}
\varphi_{\all}(\bot(\Xx)) \ra \bot(\Xx),
\\
\varphi_{\all}(\Plus(\Xx_1,\Xx_2)) \ra \varphi_{\Plus}(\varphi_{pz}(\Xx_1),\varphi_{pz}(\Xx_2),\Plus(\Xx_1,\Xx_2)),
\\
\varphi_{\Plus}(\bot(\Xx_1),\Xx_2,\Xx) \ra \bot(\Xx), 
\\
\varphi_{\Plus}(\Xx_1,\bot(\Xx_2),\Xx) \ra \bot(\Xx), 
\\
\varphi_{\Plus}(\Plus(\Xx_1,\Xx_2),\Plus(\Yy_1,\Yy_2),\Xx) \ra \Plus(\Plus(\Xx_1,\Xx_2),\Plus(\Yy_1,\Yy_2)), 
\\
\varphi_{\Plus}(\Plus(\Xx_1,\Xx_2),\Succ(\Yy_1),\Xx) \ra \Plus(\Plus(\Xx_1,\Xx_2),\Succ(\Yy_1)), 
\\
\varphi_{\Plus}(\Plus(\Xx_1,\Xx_2),Z,\Xx) \ra \Plus(\Plus(\Xx_1,\Xx_2),Z), 
\\
\varphi_{\Plus}(\Succ(\Xx_1),\Plus(\Yy_1,\Yy_2),\Xx) \ra \Plus(\Succ(\Xx_1),\Plus(\Yy_1,\Yy_2)), 
\\
\varphi_{\Plus}(\Succ(\Xx_1),\Succ(\Yy_1),\Xx) \ra \Plus(\Succ(\Xx_1),\Succ(\Yy_1)), 
\\
\varphi_{\Plus}(\Succ(\Xx_1),Z,\Xx) \ra \Plus(\Succ(\Xx_1),Z), 
\\
\varphi_{\all}(\bot(\Zz)) \ra \bot(\Zz),
\\
\varphi_{\all}(\even(\Xx_1)) \ra \varphi_{\even}(\varphi_{pz}(\Xx_1),\even(\Xx_1)),
\\
\varphi_{\even}(\bot(\Xx_1),\Zz) \ra \bot(\Zz), 
\\
\varphi_{\even}(\Plus(\Xx_1,\Xx_2),\Zz) \ra \even(\Plus(\Xx_1,\Xx_2)), 
\\
\varphi_{\even}(\Succ(\Xx_1),\Zz) \ra \even(\Succ(\Xx_1)), 
\\
\varphi_{\even}(Z,\Zz) \ra \even(Z), 
\\
\varphi_{\all}(\odd(\Xx_1)) \ra \varphi_{\odd}(\varphi_{pz}(\Xx_1),\odd(\Xx_1)),
\end{array}
\]
\end{minipage}%
\begin{minipage}{.35\textwidth}
\[
\begin{array}{lll}
~
\\
\varphi_{\Plus}(Z,\Plus(\Xx_1,\Xx_2),\Xx) \ra \Plus(Z,\Plus(\Xx_1,\Xx_2)), 
\\
\varphi_{\Plus}(Z,\Succ(\Xx_1),\Xx) \ra \Plus(Z,\Succ(\Xx_1)), 
\\
\varphi_{\Plus}(Z,Z,\Xx) \ra \Plus(Z,Z), 
\\
\varphi_{\all}(\Succ(\Xx_1)) \ra \varphi_{\Succ}(\varphi_{pz}(\Xx_1),\Succ(\Xx_1)), 
\\
\varphi_{\Succ}(\bot(\Xx_1),\Xx) \ra \bot(\Xx), 
\\
\varphi_{\Succ}(\Plus(\Xx_1,\Xx_2),\Xx) \ra \Succ(\Plus(\Xx_1,\Xx_2)), 
\\
\varphi_{\Succ}(\Succ(\Xx_1),\Xx) \ra \Succ(\Succ(\Xx_1)), 
\\
\varphi_{\Succ}(Z,\Xx) \ra \Succ(Z), 
\\
\varphi_{\all}(Z) \ra Z,
\\
~
\\
\varphi_{\odd}(\bot(\Xx_1),\Zz) \ra \bot(\Zz), 
\\
\varphi_{\odd}(\Plus(\Xx_1,\Xx_2),\Zz) \ra \odd(\Plus(\Xx_1,\Xx_2)),
\\
\varphi_{\odd}(\Succ(\Xx_1),\Zz) \ra \odd(\Succ(\Xx_1)), 
\\
\varphi_{\odd}(Z,\Zz) \ra \odd(Z), 
\\
\varphi_{\all}(\false) \ra \false, 
\\
\varphi_{\all}(\true) \ra \true ~~\} 
\end{array}
\]
\end{minipage}\\[2pt]

\noindent so that $\Xx,\Xx_1,\Xx_2,\Yy_1,\Yy_2{\sigd}\ints, \Zz{\sigd}\bool$,
and we extend the original signature with 
\begin{align*}
\{~& 
\varphi_{\all}{\sigd}\ints{\sarrow}\ints \sigsep
\varphi_{\all}{\sigd}\bool{\sarrow}\bool \sigsep 
\bot{\sigd}\ints{\sarrow}\ints \sigsep
\bot{\sigd}\bool{\sarrow}\bool \sigsep \\
& \varphi_{pz}{\sigd}\ints{\sarrow}\ints \sigsep 
\varphi_{pz}{\sigd}\bool{\sarrow}\bool \sigsep 
\varphi_{\Plus}{\sigd}\ints\prosep\ints\prosep\ints{\sarrow}\ints \sigsep
\varphi_{\Succ}{\sigd}\ints\prosep\ints{\sarrow}\ints \sigsep \\
&\varphi_{\even}{\sigd}\ints\prosep\bool{\sarrow}\bool \sigsep 
\varphi_{\odd}{\sigd}\ints\prosep\bool{\sarrow}\bool ~\}.
\end{align*}
The unsorted translation $\TRctx{\Gamma}{\all(\Plus(\Zero,x)\ra x)}$
would contain not only the above rules but also a significant number
of rules which would be ill-sorted {\wrt} the many-sorted signature
considered here.
\end{exa}

We show in Section~\ref{se:propertiesSorted} that although the sorted
translation generates less rules (modulo syntactic equivalence) than the
unsorted translation, we still obtain a faithful encoding for sorted terms.

\begin{rem}
  We suppose in this section that the targeted language supports overloaded
  symbols, but if it is not the case, like, \eg in {\tom}, the sorted translation
  can be easily adapted to fit this constraint: instead of overloading the
  generated symbols, we add one symbol for each sort. For example, the original
  signature would be extended in this case with the symbol(s)
  $\bigcup_{\sst\in\SO}\{\bot^{\sst}{\sigd}\sst{\sarrow}\sst\}$.
  We then use the symbol of the appropriate sort in the rules of
  Figure~\ref{fig:encodingTyped}, depending of the sort of its argument(s). For
  example, the translation of a rewrite rule $l\ra r$ with $l{\sigd}s_l$
  becomes:
$$
\TRctxS{\Gamma}{l \ra r} = \{ ~ \varphi^{\sst_{l}}_{l\ra r}(l) \ra r~\}
\bigcup_{\sst\in\SO}  \{~ \varphi^{\sst}_{l\ra r}(\Xx \at \apt{\sst} l) \ra \bot^{\sst}(\Xx),
                            \quad 
                            \varphi^{\sst}_{l\ra r}(\bot^{\sst}(\Xxs{}{\sst})) \ra \bot^{\sst}(\Xxs{}{\sst}) ~\} 
$$
\end{rem}

\subsection{Properties of the many-sorted encoding}
\label{se:propertiesSorted}

It is easy to see that for each of the generated rewrite rules, for
any sort assignment for the variables such that the left-hand side is
well-sorted, the right-hand side is also well-sorted and has the same
sort as the left-hand side. Starting from this observation we can show
that the reduction of a sorted term {\wrt} a strategy encoding is
sort preserving:

\begin{restatable}[Subject reduction]{lemm}{subRed}
\label{th:subRed}
Consider a many-sorted signature $(\SO,\FF)$, a strategy $S$, a context $\Gamma$
such that $\varS S \subseteq \domS \Gamma$, and the term rewriting system
$\RR$=$\TRctxS{\Gamma}{S} \cup \TRgS{\Gamma}{\Gamma}$ built over the extended
signature $(\SO,\FFgen{})$.  Given a term $t\in\TFsf{s}{\FFgen{}}$ for some sort
$\sst\in\SO$,
if $t\longrightarrow_{\RR}t'$ then $t'\in\TFsf{\sst}{\FFgen{}}$.
\end{restatable}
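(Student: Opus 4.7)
The plan is to reduce the statement to the observation that $\RR$ is a sorted TRS in the sense of the paper, namely that for every rule $l' \ra r' \in \RR$ there is a sort $\sst'$ (and a sort assignment of the variables of $l', r'$) under which $l', r' \sigd \sst'$. Once this is established, subject reduction is standard: if $t \longrightarrow_{\RR} t'$ via a rule $l' \ra r'$, a position $\omega \in \PPos(t)$ and a substitution $\sigma$ with $\stt{t}{\omega} = \sigma(l')$ and $t' = \rmp{t}{\omega}{\sigma(r')}$, then the well-sortedness of $t$ forces $\sigma$ to be a sorted substitution and $\sigma(l'), \sigma(r') \sigd \sst'$. Replacing at position $\omega$ a sub-term of sort $\sst'$ by another of the same sort preserves the sort of the enclosing term, so $t' \in \TFsf{\sst}{\FFgen{}}$.

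The real work thus lies in a case analysis on the rule schemas of Figure~\ref{fig:encodingTyped}. The generated unary symbols $\bot, \varphi_{\id}, \varphi_{\fail}, \varphi_{l\ra r}, \varphi_{\seqsub{S_1}{S_2}}, \varphi_{\choicesub{S_1}{S_2}}, \phich, \varphi_{\mu X\dotsym S}, \varphi_{X}, \varphi_{\all(S)}, \varphi_{\one(S)}$ all carry the overloaded profile $\sst \sarrow \sst$ for every $\sst \in \SO$, and the binary $\phiseq$ carries $\sst \prosep \sst \sarrow \sst$. Inspecting each schema under a sort assignment that makes the left-hand side of sort $\sst$, one checks that the right-hand side is again of sort $\sst$: the outermost $\varphi$- or $\phi$-symbol produces $\sst$, the alias variables introduced by anti-patterns $\apt{\sst}\bott{\sst}(\any)$ contribute sub-terms of sort $\sst$, and for the elementary rewrite rule case the sort-preserving hypothesis on $l \ra r$ suffices. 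For the traversal combinators the profiles of $\psiF$ and $\psif{i}$ have been chosen to match $f$: if $f \sigd \sst_1 \prosep \ldots \prosep \sst_n \sarrow \sst$, then $\psiF \sigd \sst_1 \prosep \ldots \prosep \sst_n \prosep \sst \sarrow \sst$ and $\psif{i} \sigd \sst_1 \prosep \ldots \prosep \sst_n \sarrow \sst$, so each rule assembling or dismantling such a symbol stays within sort $\sst$.

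The argument is thus essentially bookkeeping, one schema at a time. The only mildly delicate point is to verify that the expansion of $\apt{\sst}\bott{\sst}(\any)$ only introduces ground terms of sort $\sst$, so that the variables of each expanded rule inherit the expected sorts and the well-sortedness of the schema survives expansion; this follows directly from the many-sorted semantics of anti-terms fixed in the preceding subsection. I do not anticipate any substantive obstacle beyond enumerating all the cases in Figure~\ref{fig:encodingTyped}.
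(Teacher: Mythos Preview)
Your proposal is correct and follows essentially the same approach as the paper: both reduce to showing that every generated rule has left- and right-hand sides of the same sort, then use the standard argument that replacing a sub-term by one of equal sort preserves the enclosing term's sort. The paper organizes the case analysis as an induction on the structure of $S$ (and $\Gamma$), invoking the induction hypothesis when the applied rule lies in the encoding of a sub-strategy, whereas you treat all rule schemas of Figure~\ref{fig:encodingTyped} directly; this is a purely organizational difference with no substantive divergence.
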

\begin{proof}[Sketch]
  By induction on the structure of the strategy $S$ (and the context $\Gamma$)
  and by case analysis on the rewrite rule applied in the reduction.  For each
  case we consider reductions at the top position since the replacement of a
  sub-term by another one of the same sort is obviously sort preserving.
\end{proof}

As we have seen in the previous section, the rewrite rules generated by the
sorted translation are the ones generated by the unsorted encoding that are
well-sorted {\wrt} the considered many-sorted signature.  Since an ill-sorted
rewrite rule cannot be applied on a sorted term, the reduction of a such a
sorted term is the same if we use the sorted or unsorted encoding.

\begin{restatable}[Equivalent reductions]{lemm}{equivalenceSorted}
\label{th:equivalenceSorted}
Consider a many-sorted signature $(\SO,\FF)$, a strategy~$S$, a context $\Gamma$
such that $\varS S \subseteq \domS \Gamma$, and the term rewriting systems
$\RR$=$\TRctx{\Gamma}{S} \cup \TRg{\Gamma}{\Gamma}$ and
$\RR_{\SO}$=$\TRctxS{\Gamma}{S} \cup \TRgS{\Gamma}{\Gamma}$ built over the
extended signature $(\SO,{\FFgen{}})$.
Given a term $t\in\TFsf{\SO}{\FFgen{}}$,
$t\longrightarrow_{\RR_{\SO}}t'$ iff $t\longrightarrow_{\RR}t'$.
\end{restatable}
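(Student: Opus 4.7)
The plan is to show that $\RR_{\SO}$ and $\RR$ induce the same rewriting relation on $\TFsf{\SO}{\FFgen{}}$ by establishing a tight correspondence between their rule sets: every rule in $\RR_{\SO}$ is (up to sort annotations) a rule in $\RR$, and every rule in $\RR \setminus \RR_{\SO}$ has an ill-sorted left-hand side which cannot match any sub-term of a well-sorted term.

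First, comparing Figures~\ref{fig:encoding} and \ref{fig:encodingTyped} case by case, I would argue that the sorted translation is obtained from the unsorted one by (i)~annotating variables and the $\bot$ symbol with sorts, and (ii)~restricting the expansion of anti-patterns so that $\apt{\sst}\bott{\sst}(\any)$ produces only the ground sorted terms of sort $\sst$ (instead of all ground terms as $\ap\bot(\any)$ does in the unsorted case), and $\apt{\sst} l$ produces only the sorted terms of sort~$\sst$ not matching $l$. Under the natural identification of the overloaded $\bot\colon \sst\sarrow\sst$ with the unsorted unary $\bot$, every rule in $\RR_{\SO}$ is literally a rule in $\RR$, so for the direction $(\Rightarrow)$ it suffices to observe that any sorted substitution is \emph{a fortiori} an unsorted one, hence $t \longrightarrow_{\RR_{\SO}} t'$ immediately yields $t \longrightarrow_{\RR} t'$.

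For the direction $(\Leftarrow)$, suppose $t \longrightarrow_{\RR} t'$ via a rule $l \ra r \in \RR$ at position $\omega$ with substitution $\sigma$, so $\stt{t}{\omega} = \sigma(l)$. Since $t \in \TFsf{\SO}{\FFgen{}}$, its sub-term $\stt{t}{\omega}$ is well-sorted, say of sort $\sst$. I claim $l \ra r$ belongs (up to the sort-annotation identification above) to $\RR_{\SO}$. Indeed, the only way the unsorted translation can produce a rule absent from $\RR_{\SO}$ is by expanding an anti-pattern to an ill-sorted ground term (e.g.\ $\Plus(\odd(\Yy_1),\Yy_2)$ in Example~\ref{ex:encodeElemTyped}). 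If $l$ itself were ill-sorted in $\FFgen{}$, then since substitutions preserve the head symbols and their argument positions of $l$, the term $\sigma(l)$ would remain ill-sorted, contradicting well-sortedness of $\stt{t}{\omega}$. Therefore $l$ is well-sorted, and the corresponding rule lies in $\RR_{\SO}$; moreover $\sigma$, being forced to match the well-sorted sub-term $\stt{t}{\omega}$, is a sorted substitution, so the same step $t \longrightarrow_{\RR_{\SO}} t'$ is valid.

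The main obstacle is the case analysis showing that every unsorted rule not present in $\RR_{\SO}$ has an ill-sorted left-hand side. This must be checked for each translation case $\mathbf{(E1)}$--$\mathbf{(E11)}$, but it reduces in each case to the same observation: the discrepancy between $\RR$ and $\RR_{\SO}$ stems exclusively from the expansion of $\ap\bot(\any)$ and $\ap l$, which in the unsorted world includes combinations of symbols violating the arity/profile constraints of $(\SO,\FF)$. The sort-preservation property of substitutions together with the well-sortedness of $t$ then rules out the applicability of such ill-sorted rules. Subject reduction (Lemma~\ref{th:subRed}) further ensures consistency of $t'$ on the $\RR_{\SO}$ side, closing the argument.
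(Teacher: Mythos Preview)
Your proposal is correct and follows essentially the same approach as the paper: the inclusion $\RR_{\SO}\subseteq\RR$ gives the $(\Rightarrow)$ direction immediately, and for $(\Leftarrow)$ a case analysis over the translation schemas $\mathbf{(E1)}$--$\mathbf{(E11)}$ shows that any unsorted rule applicable to a well-sorted term is already (up to sort annotation) a rule of $\RR_{\SO}$. The paper organizes this case analysis as an explicit structural induction on $S$ and $\Gamma$, whereas you phrase it via the contrapositive (``every rule in $\RR\setminus\RR_{\SO}$ has an ill-sorted left-hand side''), but the content is the same.

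One small remark: your final appeal to subject reduction (Lemma~\ref{th:subRed}) is unnecessary here. The lemma concerns a \emph{single} rewrite step, and once you have established that the same rule applies at the same position with the same substitution in both systems, the resulting $t'$ is identical; subject reduction only becomes relevant when lifting this one-step equivalence to multi-step reductions in the proof of Theorem~\ref{th:simulationSorted}.
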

\begin{proof}[Sketch]
  Every rewrite rule in $\RR_{\SO}$ is also included in $\RR$ and
  thus, if $t\longrightarrow_{\RR_{\SO}}t'$ then
  $t\longrightarrow_{\RR}t'$.
  For the other direction we proceed by induction on the structure of the
  strategy $S$ (and the context $\Gamma$) and by case analysis on the rewrite
  rule applied in the reduction.

  The interesting cases concern the rule schemas using anti-patterns in the
  encodings of a rewrite rule, $\all$, and $\one$. For the first one, we remark
  that if a rewrite rule corresponding to the rule schema
  $\varphi_{l\ra r}(\Xx \at \apt{\sst} l) \ra \bot(\Xx)$ from the unsorted
  translation is applied to a sorted term then, there exists an identical
  rewrite rule corresponding to the similar rule schema from the sorted
  translation which can be applied. For the other two cases, we proceed
  similarly and use the fact that if a rewrite rule corresponding to one of the
  rule schemas relying on a $\ap\bot(\any)$ is applied, then the same rewrite
  rule is also exhibited by the similar rule schema using the corresponding
  pattern $\apt{\sst}\bott{\sst}(\any)$ from the sorted translation.
\end{proof}

The one-step reduction of a sorted term is thus exactly the same
when using the sorted or unsorted encodings and since the latter is
also sort preserving we can conclude that the sorted translation
produces faithful strategy encodings.

\begin{restatable}[Simulation]{thmm}{simulationSorted}
\label{th:simulationSorted}
Given a a many-sorted signature $(\SO,\FF)$, 
a strategy $S$, 
two terms $t,t'\in\TFs{\SO}$, 
and a context $\Gamma$ such that $\varS S \subseteq \domS \Gamma$,
\begin{enumerate}
\item 
$\stratappctx{\Gamma}{S}{t} \sred t'$
~~iff~~
$\trsred{\TRctxS{\Gamma}{S} \cup \TRgS{\Gamma}{\Gamma}}{\vpS(t)}{t'}$, 
%
\item 
\textnormal{$\stratappctx{\Gamma}{S}{t} \sred \failres$}
~~iff~~
$\trsred{\TRctxS{\Gamma}{S} \cup \TRgS{\Gamma}{\Gamma}}{\vpS(t)}{\bot(t)}$  
\end{enumerate}
\end{restatable}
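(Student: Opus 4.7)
The plan is to derive this statement as a direct corollary of Theorem~\ref{th:simulation} by threading the two preceding lemmas through the reduction sequence. The bridge I will exploit is that the sorted TRS $\RR_{\SO}=\TRctxS{\Gamma}{S}\cup\TRgS{\Gamma}{\Gamma}$ is, up to the expansion of anti-patterns, a subset of the unsorted TRS $\RR=\TRctx{\Gamma}{S}\cup\TRg{\Gamma}{\Gamma}$: every rule obtained by expanding $\apt{\sst}\bott{\sst}(\any)$ or $\apt{\sst}l$ is also obtained when expanding the corresponding $\ap\bot(\any)$ or $\ap l$, so any one-step reduction in $\RR_{\SO}$ is also a one-step reduction in $\RR$. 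In the other direction, Lemma~\ref{th:equivalenceSorted} promotes an unsorted step to a sorted step whenever the redex sits inside a sorted term, and Lemma~\ref{th:subRed} ensures that sortedness is preserved along the whole derivation so the promotion can be iterated.

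For the ``only if'' directions, assume $\stratappctx{\Gamma}{S}{t}\sred t'$ (resp.\ $\stratappctx{\Gamma}{S}{t}\sred\failres$) with $t,t'\in\TFs{\SO}$. By Theorem~\ref{th:simulation} there is a reduction $\vpS(t)\multieval{\RR}t'$ (resp.\ $\vpS(t)\multieval{\RR}\bot(t)$). The term $\vpS(t)$ lies in $\TFsf{\sst}{\FFgen{}}$ for the sort $\sst$ of $t$, since $\vpS$ has profile $\sst\sarrow\sst$ in $\FFgen{}$. I would proceed by induction on the length of this reduction: at each step the current term is in $\TFsf{\sst}{\FFgen{}}$, so Lemma~\ref{th:equivalenceSorted} turns the single $\RR$-step into an $\RR_{\SO}$-step, and Lemma~\ref{th:subRed} keeps the reduct in $\TFsf{\sst}{\FFgen{}}$ so the hypothesis persists. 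The concatenation of these steps yields $\vpS(t)\multieval{\RR_{\SO}}t'$ (resp.\ $\multieval{\RR_{\SO}}\bot(t)$), as required.

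For the ``if'' directions, assume $\vpS(t)\multieval{\RR_{\SO}}t'$ (resp.\ $\multieval{\RR_{\SO}}\bot(t)$). Because $\RR_{\SO}\subseteq\RR$ as observed above, the same reduction sequence is a reduction in $\RR$; Theorem~\ref{th:simulation} then delivers the strategy judgement $\stratappctx{\Gamma}{S}{t}\sred t'$ (resp.\ $\stratappctx{\Gamma}{S}{t}\sred\failres$). No step in this argument introduces genuinely new reasoning; the only point that requires care is lining up the expansions of the anti-patterns $\apt{\sst}\bott{\sst}(\any)$ with their unsorted counterparts to justify the inclusion $\RR_{\SO}\subseteq\RR$, and this is precisely what the discussion after Lemma~\ref{th:equivalenceSorted} already records. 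The main obstacle, such as it is, amounts to bookkeeping: ensuring that every intermediate term in the reduction is well-sorted so that the sorted-to-unsorted correspondence of Lemma~\ref{th:equivalenceSorted} can be applied at each step.
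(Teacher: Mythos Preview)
Your proposal is correct and follows essentially the same approach as the paper: the paper's proof simply states that the result follows immediately from Lemma~\ref{th:subRed}, Lemma~\ref{th:equivalenceSorted}, and Theorem~\ref{th:simulation}, and you have spelled out exactly how these three pieces compose---iterating the one-step equivalence of Lemma~\ref{th:equivalenceSorted} along the reduction, using subject reduction (Lemma~\ref{th:subRed}) to keep each intermediate term well-sorted so the equivalence applies at every step, and invoking the inclusion $\RR_{\SO}\subseteq\RR$ (recorded in the proof of Lemma~\ref{th:equivalenceSorted}) for the converse direction.
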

\begin{proof}
  Follows immediately from Lemma~\ref{th:subRed},
  Lemma~\ref{th:equivalenceSorted} and Theorem~\ref{th:simulation}.
\end{proof}

The sorted translation can be consequently used as a strategy compiler
for many-sorted languages and, as explained above, it can be used for
languages allowing symbol overloading or not. If we collapse all
syntactically equivalent rules in a sorted encoding we obtain less
rules than in the corresponding unsorted encoding but we can still
feed it into (usually unsorted) termination tools to verify the
termination of the corresponding strategy for sorted terms.

\section{Implementation and experimental results}
\label{se:implementation}
The strategy translations presented in the previous sections
have been implemented in a tool called
{\strategyanalyser}\footnote{\url{http://github.com/rewriting/tom/tree/master/applications/strategyAnalyzer}},
written in {\tom}, a language that extends {\java} with high level
constructs for pattern matching, rewrite rules and strategies ({\ie}
the tool itself is written using rules and strategies!).  Given
a set of rewrite rules guided by a strategy, the tool generates a
plain TRS in {\aprove}/{\TTT}
syntax\footnote{\url{http://aprove.informatik.rwth-aachen.de/}} or
{\tom} syntax (restricted to rewrite rules only).

The tool can be configured to generate TRS at meta-level or not, in a
many-sorted context or not, to use the alias notation
or not, and to use the notion of anti-term or not.  An encoding using
anti-terms and aliasing can be directly used in a {\tom} program but
for languages and tools which do not offer such primitives, aliases
and anti-terms have to be expanded into plain rewrite rules.
We explain first how this expansion is realized and we illustrate then
our approach on several representative examples.

\subsection{Expansion of anti-terms}
The rules given in Figure~\ref{fig:encoding} can generate two kinds of
rules which contain anti-terms.  The first family is of the form
$\varphi(\ldots,y_i\at\ap \bot(\any),\ldots) \ra u$ with $y_i\in\XX$,
and with potentially several occurrences of $\ap\bot(\any)$.  These
rules can be easily expanded into a family of rules
$\varphi(\ldots,y_i\at f(x_1,\ldots,x_n),\ldots)\ra u$ with such a
rule for all $f\in\FF$, and with $x_1,\ldots,x_n\in\XX$ and
$n=\arty(f)$.  This expansion is performed recursively to eliminate
all the instances of $\ap\bot(\any)$.
The other rules containing anti-terms come from the translation of
rewrite rules $\mathbf{(E3)}$ and have in the unsorted case the form
$\varphi(y\at\ap f(t_1,\ldots,t_n)) \ra \bot(y)$, with $f\in\FF^{n}$
and $t_1,\ldots,t_n\in \TFX$.  If the term
$f(t_1,\ldots,t_n)$ is linear, then the tool generates two families of
rules:
\begin{itemize}
  \item $\varphi(g(x_1,\ldots,x_m))\ra \bot(g(x_1,\ldots,x_m))$
    for all
    $g\in\FF$, 
    $g\neq f$, $x_1,\ldots,x_m\in\XX$, $m=\arty(g)$,
  \item $\varphi(f(x_1,\ldots,x_{i-1},x_i\at\ap
    t_i,x_{i+1},\ldots,x_n))\ra \bot(f(x_1,\ldots,x_n))$
    for all
    $i\in[1,n]$ and $t_i\not\in\XX$,
\end{itemize}
with the second family of rules recursively expanded, using the same
algorithm, until there is no anti-term left.

This expansion mechanism is more difficult when we want to find a
convenient (finite) encoding for non-linear anti-terms, and in this
case the expansion should be done, in fact, {\wrt} the entire
translation of a rewrite rule.
Given the rules $\varphi(l) \ra r$ and $\varphi(y\at\ap l) \ra
\bot(y)$ with $l\in\TFX$ a non-linear term, we consider the linearized
version of~$l$, denoted $l'$, with all the variables $x_i\in\var l$
appearing more than once ($m_i$ times, with $m_i>1$) renamed into
$z_i^1,\ldots,z_i^{m_i-1}$ (the first occurrence of $x_i$ is not
renamed).  Then, these two rules can be translated into:
\begin{itemize}
  \item $\varphi(y\at\ap l') \ra \bot(y)$
  \item $\varphi(l') \ra \varphi'(l', 
    x_1=z_1^1\land\cdots\land x_1=z_1^{m_1-1}
    \land\cdots\land
    x_n=z_n^1\land\cdots\land x_n=z_n^{m_n-1})$
  \item $\varphi'(l', \truebi) \ra r$
  \item $\varphi'(l', \falsebi) \ra \bot(l')$
\end{itemize}
with the first rule containing now the linear anti-term $\ap l'$
expanded as previously. 
The rules generated for equality and conjunction are as expected.

The expansion goes in the same way in the many-sorted case, but using
only symbols of the appropriate sort; a sort $\boolbi$, an overloaded
equality symbol
$\bigcup_{\sst\in\SO}\{\eqto{\sigd}\sst\prosep\sst\sarrow\boolbi\}$
together with the symbols $\{\truebi{\sigd}\sarrow\boolbi \sigsep
\falsebi{\sigd}\sarrow\boolbi \sigsep
\land{\sigd}\boolbi\prosep\boolbi\sarrow\boolbi\}$ are added to the
extended signature.
The rules generated for equality are then restricted to those whose left-hand
side and right-hand side have the same sort. Given a many-sorted
signature $(\SO,\FF)$, this set of rules consists of
$\bigcup_{\sst\in\SO}\bigcup_{f\in\FFs{\sst}^n}
\{~f(\Xxs{1}{\sst_1},\ldots,\Xxs{n}{\sst_n})\eqto f(\Yys{1}{\sst_1},\ldots,\Yys{n}{\sst_n})
\ra \Xxs{1}{\sst_1}\eqto\Yys{1}{\sst_1}\land\ldots\land\Xxs{n}{\sst_n}\eqto\Yys{n}{\sst_n}\land\truebi~\}$
$\bigcup_{\sst\in\SO}\bigcup_{f\in\FFs{\sst}^n}\bigcup_{g\neq f\in\FFs{\sst}^n}
\{~f(\Xxs{1}{\sst_1},\ldots,\Xxs{n}{\sst_n})\eqto g(\Yys{1}{\sst_1},\ldots,\Yys{m}{\sst_m})
\ra \falsebi~\}$.
For the translation $\rawTRm$
in Section~\ref{se:metaencodings}, the expansion is performed as in the unsorted
case, except that the expansion process generates meta-level representations
directly.

\subsection{Examples }
\label{sec:examples}
%
The first example we consider in what follows comes from an optimizer for multi-core architectures~\cite{Yuki2013}, a
project where abstract syntax trees are manipulated and
transformations are expressed using rewrite rules and strategies, and
consists of two rewrite rules identified as patterns occurring often
in various forms in the project.  First, the rewrite rule $g(f(x)) \ra
f(g(x))$ corresponds to the search for an operator~$g$ (which can have
more than one parameter in the general case) which is pushed down
under another operator~$f$ (again, this operator may have more than
one parameter).  This rule is important since the corresponding
(innermost) reduction of a term of the form
$\tgf=\overbrace{g(\underbrace{f(\cdots(f}_n(g(\underbrace{f(\cdots(f}_n(g(\underbrace{f(\cdots(f}_n(g}^{m}(a)))\cdots)$,
with, for example, $n=10$ and $m=18$ occurrences of~$g$, involves 
 $O(n^2 m^2)$
computations and could be a performance bottleneck. 

Second, the
rewrite rule $h(x) \ra g(h(x))$ corresponds to wrapping some parts of
a program by some special constructs, like \texttt{try/catch} for
example, and it is interesting since its uncontrolled application is
obviously non-terminating.

At present, strategy definitions given as input to {\strategyanalyser} are
written in a simple functional style.
\begin{exa}
\label{ex:tgf}
The syntax allowing the definition of the above rewrite rules and
possible corresponding strategies could be defined as follows:
\begin{small}
\begin{alltt}
abstract syntax
  T = a() | b() | f(T) | g(T) | h(T)
strategies
  gfx()       = [ g(f(x)) -> f(g(x)) ]
  hx()        = [ h(x) -> g(h(x)) ]
  obu(t)      = mu x.(one(x) <+ t)              # obu stands for OnceBottomUp
  bu(t)       = mu x.(all(x) ; (t <+ Identity)) # bu stands for BottomUp
  repeat(s)   = mu y.((s ; y) <+ Identity)      # naive definition of innermost
  mainStrat() = repeat(obu(gfx()))              # strategy to compile
\end{alltt}
\end{small}
\end{exa}

\noindent
\smallskip As a second example, we consider a strategy involving
rewrite rules which are either non left-linear or non right-linear and
which are non-terminating if their application is not guided by a
strategy.
\begin{exa}
\label{ex:distfact}
We consider the following rewrite rules which implement the
distributivity and factorization of symbolic expressions composed
of~$\Plus$ and~$\Mult$ and their application under a specific
strategy:
\begin{small}
\begin{alltt}
abstract syntax
  T = Plus(T,T) | Mult(T,T) | Val(V)
  V = a() | b()
strategies
  dist()       = [ Mult(x,Plus(y,z)) -> Plus(Mult(x,y),Mult(x,z)) ]
  fact()       = [ Plus(Mult(x,y),Mult(x,z)) -> Mult(x,Plus(y,z)) ]
  innermost(s) = mu x.(all(x) ; ((s ; x) <+ Identity))
  mainStrat()  = innermost(dist()) ; innermost(fact())
\end{alltt}
\end{small}
\end{exa}

As a third example, we consider a larger program inspired by
the {\tom} compiler itself, where the signature is composed of seven
sorts and contains a significant number of constructors.

\begin{exa}
\label{ex:refactor}
We consider two rewrite rules.  The purpose of the first one,
\texttt{compile}, is to identify \texttt{Match} constructs and to
replace them by instructions of the form \texttt{If}, \texttt{Assign},
\texttt{WhileDo}, \ldots, which implement the matching algorithm. In
the {\tom} compiler, these instructions are transformed by the backend
into executable code written in {\java} or {\C} for instance.  The
second rule identifies occurrences of variables and produces a term
which contains both the new variable and the initial one (this rule
represents the first stage of a refactoring process).

\begin{small}
\begin{alltt}
abstract syntax # refactor example
  CodeList = NilCode() | ConsCode(Code,CodeList)
  Code = Match(TermList) | Assign(Name,Exp) | If(Exp,Code,Code) | WhileDo(Exp,Code) 
       | Nop() | ...
  Exp  = Or(Exp,Exp) | And(Exp,Exp) | IsFsym(Name,Term) | EqualTerm(Term,Term) 
       | TrueTL() | FalseTL() | ...
  TermList = ConsTerm(Term,TermList) | NilTerm()
  Term = VarTerm(Name) | ApplTerm(Name,TermList) | RenamedTerm(Term,Term)
  Nat = Z() | S(Nat)
  Name = Name(Nat)
strategies
  compile() = [ Match(l) -> <...Code...> ]
  rename() = [ VarTerm(Name(n)) -> RenamedTerm(VarTerm(Name(S(n))), VarTerm(Name(n))) ]
  td(s) = mu x.((s <+ Identity) ; all(x))      # td stands for TopDown 
  tdstoponsucces(s) = mu x.(s <+ all(x))
  mainStrat() = tdt(compile()) ; tdstoponsucces(rename())
\end{alltt}
\end{small}
The right-hand side of the rule \texttt{rename} contains the left-hand
side of the rule and thus a top-down strategy on this rule would not
be terminating.  Furthermore, it would rename the second argument of
\texttt{RenamedTerm} that we want to keep unchanged.  The strategy we
consider (\texttt{tdstoponsucces}) is interesting because it searches
for a \texttt{VarName} constructor in a top-down way, but performs the
replacement only once and does not continue the search into sub-terms
when a transformation is performed.  Intuitively, this strategy is
terminating, and we will see that termination tools are able to prove
it.
\end{exa}

We also consider a relatively large example containing 4 sorts,
13 constructors, and 20 rules.
\begin{exa}
\label{ex:rbtree}
The following represents the implementation of red-black-trees based
on~\cite{DBLP:journals/jfp/Okasaki99}, but expressed using rules and
strategies. 
\begin{small}
\begin{alltt}
abstract syntax # rbTree example
  Tree = E() | T(Color,Tree,Nat,Tree) | balance(Tree) | ins(Nat,Tree)
       | insAux(Nat,Tree,Cmp)
  Color = R() | B()
  Nat = Z() | S(Nat)
  Cmp = lt() | gt() | eq0() | cmp(Nat,Nat)
strategies
  b1() = [ balance(T(B(),T(R(),T(R(),a1,a2,a3),x,b),y,T(R(),c,z,d))) -> 
           T(R(),T(B(),T(R(),a1,a2,a3),x,b),y,T(B(),c,z,d)) ]
  b2() = [ balance(T(B(),T(R(),a,x,T(R(),b1,b2,b3)),y,T(R(),c,z,d))) -> 
           T(R(),T(B(),a,x,T(R(),b1,b2,b3)),y,T(B(),c,z,d)) ]
  ... # rules b3()...b8()
  b9() = [ balance(t) -> t ] # no balancing necessary
  ... # rules i1()...i5() and c1()...c4()
  mainStrat() = repeat(obu(b1() <+ b2() <+ b3() <+ ...))
\end{alltt}
\end{small}
\end{exa}
The complexity here comes also from the presence of a
constructor~\texttt{T} of arity~4, whose negation (${\ap}T(\ldots)$)
generates a large list of patterns to capture the cases where the rule
cannot be applied. 
The anti-term $\ap
balance(T(B(),T(R(),T(R(),a1,a2,a3),x,b),y,T(R(),c,z,d)))$ is expanded
into $108$ patterns.

\subsection{Generation of TRS for termination analysis}
%
When run with the flag \texttt{-aprove}, the {\strategyanalyser} tool generates a
TRS in {\aprove}/{\TTT} syntax which can be analyzed by any tool
accepting this syntax.
In this case, aliases and anti-terms are always completely expanded
leading generally to a significant number of plain rewrite rules.

The tool can be configured to generate many-sorted TRS or to generate
the meta-level representation of the TRS.  The number of generated
rules for a strategy could thus vary a lot.  In
Table~\ref{fig:benchdef}, we give for each example the number of
generated rules in the (U)nsorted case, in the many-(S)orted case, and
in the (M)eta-level case. The last column indicates whether the
termination of the generated TRS has been proven or disproven by
{\aprove}.

\begin{table}[hbt]
\begin{center}
{\small
\begin{tabular}{|l|l|c|c|c|c|}
  \hline
  \tvi Name & Strategy & U & S & M & \aprove\\
  \hline\hline
  \reps(\dist)            
  & $\mu X\dotsym((\dist\seqsym X) \choicesym {\id})$&
  49 &    57 &  25& \cmark\\
  \reps(\factorial)       
  & $\mu X\dotsym(({\factorial}\seqsym X) \choicesym {\id})$&
  84 &    78 &  60& \cmark\\
  $\reps(\dist\seqsym\factorial)$ 
  & $\mu X\dotsym((({\dist}\seqsym {\factorial})\seqsym X) \choicesym {\id})$&
  110&    107&  77& \xmark\\
  \td(\dist)
  & $\mu X\dotsym(({\dist}\choicesym\id)\seqsym\all(X))$&
  97 &    68 &  35& \cmark\\
  \obu(\factorial)
  & $\mu X\dotsym(\one(X)\choicesym{\factorial})$&
  102&    83 &  70& \cmark\\
  \reps(\obu(\factorial))
  & $\mu X\dotsym(({\obu(\factorial)}\seqsym X) \choicesym\id)$&
  138&   125 &  82& \cmark\\
  $\mathsf{factorize}$
  & $\mu X\dotsym(\all(X)\seqsym (({\factorial}\seqsym\all(X))\choicesym\id))$&
  162&    124&  80& \cmark\\
  $\mathsf{simplify}$
  & $\mathsf{\td(\dist)}\seqsym\mathsf{factorize}$&
  272&    206&  110& \cmark\\
  {\im(\dist)}
  & $\mu X\dotsym(\all(X)\seqsym(({\dist}\seqsym X)\choicesym\id))$&
  127&    103&  45& \cmark\\
  {\im(\factorial)}
  & $\mu X\dotsym(\all(X)\seqsym(({\factorial}\seqsym X)\choicesym\id))$&
  162&    124&  80& \cmark\\
  {\reps(\td(\dist))}
  & $\mu X\dotsym((\mathsf{\td(\dist)}\seqsym X) \choicesym\id)$&
  133&    110&  47& \xmark\\
\hline
\hline
  {\bu(\rf)}
  & $\mu X\dotsym(\all(X)\seqsym({\rf}\choicesym\id))$&
  51 &     51 &  31& \cmark\\
  {\td(\rf)}
  & $\mu X\dotsym(({\rf}\choicesym\id)\seqsym\all(X))$&
  51 &    51 &  31& \xmark\\
  {\reps(\obu(\rgf))}
  & $\mu X\dotsym(({\obu(\rgf)}\seqsym X) \choicesym\id)$&
  91 &    91 &  47& \cmark\\
  {\im(\rgf)}
  & $\mu X\dotsym(\all(X)\seqsym(({\rgf}\seqsym  X)\choicesym\id))$&
  85 &    85 &  45& \cmark\\
  $\mathsf{propagate}$
  & $\mu X\dotsym({\rgf}\seqsym(\all(X)\choicesym\id))$&
  73 &    73 &  41& \cmark\\
  \bup
  & $\mu X\dotsym(\all(X)\seqsym(\propagate\choicesym\id))$&
  127&    127&  58& \cmark\\
  \buptwo
  & \multicolumn{1}{c|}{-- \bup with     $\texttt{f},\texttt{g},\texttt{h}$ of  arity $2$ --} &
  991&    378&  66& \cmark\\
\hline
\hline
  $\mathsf{refactor}$
  & $\td(\compile) \seqsym \tdsos(\rename)$&
  63065 &  2350&  145& $\emptyset\quad$\cmark$\quad$\cmark\\
  $\mathsf{rbTree}$
  &$\mu X\dotsym(({\obu(\mathsf{b1}\choicesym\mathsf{b2}\choicesym\cdots)}\seqsym X) \choicesym {\id})$&
  1956&  1449&  1260& $\emptyset$\\
\hline
\end{tabular}
} 
\end{center}
\caption{\label{fig:benchdef} Termination analysis: the columns U, S, and M indicate the number
  of plain rewrite rules generated for the strategy, respectively in the
  (U)nsorted case, many-(S)orted case, and (M)eta-level case.
  The column~{\aprove} indicates (for the U, S and M cases) whether the termination of the rules has been
  proven (\cmark) or disproven (\xmark) by {\aprove}; $\emptyset$ is used when
  {\aprove} gives no information. A unique symbol in the row  indicates that the results are the same for the three cases.}
\end{table}

In practice, {\aprove} is able to handle relatively big sets of rules
and, for example, the termination of the strategy {\reps(\obu(\rgf))},
which is translated into $91$ rules, is proven in approximately $10$~s
(using the web interface).  Similarly, for the example {\buptwo} which
corresponds to an extension of the example {\bup} where symbols
\texttt{f}, \texttt{g}, and \texttt{h} become binary and the rule
{\rgf} is replaced by $g(f(x,y),z) \ra f(g(x,z),y)$, we generate $991$
rules and the proof can be done in less then $80$~s. For this example,
when considering the 378 many-sorted rules, the proof can be done in
less than $15$~s and in the meta-level case, consisting of 66 rules,
the proof can be done in approximately $12$~s.
The size of the (left-hand and right-hand sides of the) rules
seems to be an important factor since the termination for the $\mathsf{rbTree}$
example consisting of roughly $1000$ rules cannot be (dis)proven
while the $2350$ rules of the many-sorted encoding of
$\mathsf{refactor}$ can be handled in $160$~s. 

The termination of some strategies like, for example,
{\reps(\obu(\rgf))} might look pretty easy to show for an expert, but
termination is less obvious for more complex strategies like, for
example, {\bup}, which is a specialized version of
{\reps(\obu(\rgf))}, or {\rbu(\factorial)}, which is a variant of
{\bu(\factorial)}.

The approach was effective not only in proving termination of some
strategies, but also in disproving it when necessary. Once again this
might look obvious for some strategies like, for example, {\td(\rf)},
which involves a non-terminating rewrite rule, but it is less clear
for strategies combining terminating rewrite rules or strategies like,
\eg, \reps(\dist$\seqsym$\factorial).

\subsection{Generation of executable TRS}
\label{sec:genTRS}
%
When run with the flag \texttt{-tom}, the {\strategyanalyser} tool generates a TRS
in {\tom} syntax which can be subsequently compiled into {\java} code and executed. 

By default, {\tom} executes a plain TRS with a built-in leftmost-innermost strategy
encoded using function calls.  But {\tom} can also execute a rule controlled by
a user-defined strategy. In that case, the user-defined strategy is encoded into
{\java} objects and is evaluated using a library written in {\java}. This
library provides several implementations where the notion of failure can be
encoded by a {\java} exception or by a special value to provide an
exception-free implementation.

\begin{table}[hbt]
\begin{center}
\begin{tabular}{|l|c|c|c|c|}
  \hline
\tvi Name  & TRS & Meta TRS & {\tom} & {\tom}*\\
  \hline\hline
  \reps(\dist) 
  &  $<5$ & $<5$ & $<5$ & $<5$\\

  \reps(\factorial) &
  $<5$ & $<5$ & 13  & $<5$\\

  \obu(\factorial) &
  $<5$ & $<5$ & $<5$ & $<5$\\
  
  \reps(\obu(\factorial)) &
  190 & 323 & 2460 &  120\\

  {\im(\dist)} &
  332 & 347 & 650&  230\\

  {\im(\factorial)} &
  310& 472   & 308&  149\\

  \hline\hline
{\bu(\rf)} &
  $<5$ & $<5$ & $<5$ & $<5$\\

{\reps(\obu(\rgf))} &
  400 & 780 & 6300 & 414\\

{\im(\rgf)} &
  553 & 433 & 4180 & 365\\

{\propagate} &
  $<5$ & $<5$ & $<5$ & $<5$\\

\bup &
  49 & 108 & 46 & 42\\
\hline
\hline
$\mathsf{refactor}$ &
  - & 570 & 220 & 115\\
$\mathsf{rbTree}$ &
  1480 & 2200 & 2070 & 840\\

\hline
\end{tabular}
\end{center}

\caption{\label{fig:bench}
Benchmarks: the column {TRS} indicates the execution time in milliseconds
for the generated TRS compiled using {\tom} ({\ie} using a built-in leftmost-innermost strategy),
the column {Meta TRS}
indicates the execution time for a meta-level TRS compiled using {\tom}, the
column {\tom} indicates the execution time of the same strategy written
  directly in {\tom}, using {\java} exception-based implementation, and the
  column {\tom}* indicates the execution time of the {\tom} {\java}
exception-free implementation.}
\end{table}

It is interesting to see how the varying number of generated rules for a
strategy impacts the efficiency of the execution of such a system.
If we execute a {\tom}+{\java} program corresponding to the
\texttt{\reps(\obu(\rgf))} strategy with a classic built-in implementation where
strategy failure is implemented by a {\java} exception, the normalization of the
term $\tgf$ takes $6.3$~s\footnote{on a MacPro 3GHz} (Table~\ref{fig:bench},
column {\tom}).  With an alternative built-in implementation which uses a
special value and does not throw {\java} exceptions, the computation time
decreases to $0.41$~s (Table~\ref{fig:bench}, column {\tom*}).
The strategy \texttt{\reps(\obu(\rgf))} is translated into an
executable TRS containing $91$ {\tom} plain rewrite rules and the
normalization takes in this case $0.4$~s!
When generating a meta-level TRS, the number of rules decrease to~47, but the
normalization takes~$0.78$~s.
When implementing the \texttt{\im(\rgf)} strategy natively in {\tom}~{\java}, using 
a built-in implementation with a special encoding of failure, the normalization takes~$0.36$~s.
The same strategy is translated into an executable TRS which contains $85$
plain rewrite rules in the unsorted case and $45$ rules in the meta-level case.
The first TRS normalizes the term $\tgf$ in $0.55$~s, whereas it takes $0.43$~s
to the meta-level TRS to normalize the term.
This example is interesting because it shows that the meta-level approach
allows to considerably reduce the number of rules, and thus the size of the
generated code, without slowing down the execution time.
The performances are confirmed for the large examples; the $63065$
rules generated for $\mathsf{refactor}$ with the plain encoding
could not be compiled with {\tom}.

We observe that, although the number of generated rules could be
significant, the execution times of the resulting plain TRS are
comparable to those obtained with the native implementation of {\tom}
strategies. This might look somewhat surprising but can be explained
when we take a closer look to the way rewriting rules and strategies
are generally implemented:
\begin{itemize}
  \item the implementation of a TRS can be done in an efficient way
    since the complexity of syntactic pattern matching depends only
    on the size of the term to reduce and, thanks to many-to-one
    matching algorithms~\cite{GrafRTA91, maranget}, the number of
    rules has almost no impact.
  \item in {\tom},   each native strategy constructor is implemented
    by a {\java} class with a \texttt{visit} method which implements
    ({\ie} interprets) the semantics of the corresponding operator.
    The evaluation of a strategy \texttt{S} on a term \texttt{t} is
    implemented thus by a call \texttt{S.visit(t)} and an exception
    (\texttt{VisitFailure}) is thrown when the application of a
    strategy fails.
\end{itemize}
In the generated TRS, the memory allocation involved in the construction of
terms headed by the $\bot$ symbol encoding failure appears to be more efficient
than the costly {\java} exception handling. This is reflected by better
performances of the plain TRS implementation compared to the exception-based
native implementation (especially when the strategy involves a lot of
failures). We obtain performances with the generated TRS comparable to an
exception-free native implementation of strategies (as we can see with the
columns \textsf{TRS} and \textsf{Tom*} in Table~\ref{fig:bench}), because
efficient normalization techniques can be used for the plain TRS, since its
rewrite rules are not controlled by a programmable strategy.

\section{Conclusions and further work}
\label{se:conclusions}
We have proposed a translation of programmable strategies into plain
rewrite rules that we have proven sound and complete;
Figure~\ref{fig:contrib} 
summarizes the obtained
results. Well-established termination methods can be thus used to
(dis)prove the termination of the obtained TRS and we can deduce, as a
direct consequence, the property for the corresponding
strategy. Alternatively, the translation can be used as a strategy
compiler for languages which do not implement natively such
primitives.

\begin{figure}[!tp]
\begin{mathpar}
\,\,\,\,\,\,\,\,\,\,\,\,\,\,\,\,\xymatrix@R+10pt@C+30pt{
      & t \multieval{\TRctx{\Gamma}{S}} u
      \ar@{<=>}[dddl]_{\mbox{ Lemma~\ref{lem:meta}}} \ar@{<=>}[dddr]^{\mbox{ Lemma~\ref{th:equivalenceSorted}}} &  \\
      &  &  & \\
      & \step{\stratappctx{}{S}{t}}{u} 
      \ar@{<=>}[uu]_{\txt{~\\ ~\\  Sect.~\ref{se:encodings}\\ \cite{CirsteaLM15}}} 
      \ar@{<=>}[dl]^*+<3pt>{\mbox{ Sect.~\ref{se:metaencodings}}} 
      \ar@{<=>}[dr]_*+<3pt>{\mbox{ Sect.~\ref{se:typing}}} &  \\
      \trmeta{t} \multieval{\TRm{S}} \trmeta{u} & ~~  &  t:s \multieval{\TRctxS{\Gamma}{S}} u:s  \\
    }
\end{mathpar}
\caption{\label{fig:contrib} Relationships between the different encodings of
  this paper
}
\end{figure}

The translation has been also adapted to cope with many-sorted
signatures, and although the size of the obtained encodings could be
smaller than in the unsorted case, it still depends strongly on the
underlying signature. We have proposed a meta-level representation of
the terms and a corresponding translation which produces encodings
whose size depends to a lesser extent on the signature and are significantly
smaller than the ones obtained with the (un)sorted translation.

The translation has been implemented in {\tom} and can generate, for
the moment, plain TRS using either a {\tom} or an {\aprove}/{\TTT}
syntax.
We have experimented with classic strategies and {\aprove} and {\TTT}
have been able to (dis)prove the termination even when the number of
generated rules was significant. The performances for the generated
executable TRS are comparable to the ones of the {\tom} built-in
(exception-free) strategies.

The framework can be of course improved.  When termination is
disproven and a counterexample can be exhibited, it is interesting to
reproduce the corresponding infinite reductions in terms of strategy
derivations.  Since the TRS reductions corresponding to distinct
(sub-)strategy derivations are not interleaved, we think that, when
the infinite reduction starts with a term headed by the symbol
encoding the strategy, the back-translation of the counterexample
provided by the termination tools can be automatized.
When the counterexample concerns another symbol than the one encoding
the strategy we could try to rebuild a complete infinite reduction by
a backward application of the rules in the encoding until an
appropriate term (headed by the symbol encoding the strategy) is
found; although such an approach wouldn't work in all the cases it
could give valuable warnings concerning the design of the strategy
under investigation.

As far as the executable TRS is concerned, we intend to develop new
backends allowing the integration of programmable strategies in other
languages than {\tom}.

\smallbreak\noindent\textit{Acknowledgments.}
We would like to thank the anonymous referees for their valuable comments and
suggestions that led to a substantial improvement of the paper.


\newpage
\section{Proofs for the Regular Translation}

\begin{lem}[Propagation lemma]
\label{lemma:failurePropagation}
Let $t\in \TF$, $S$, and $\Gamma$ such that
$\varS S \subseteq \domS \Gamma$. We have
$\trsred{\TRgGG \cup \TRctx{{\varphi}}{S}}{\vpS(\bot(t))}{\bot(t)}$, and if
$\trsred{\TRgGG \cup \TRctx{{\varphi}}{S}}{\vpS(\bot(t))}{u}$, then
$u = \bot(t)$.
\end{lem}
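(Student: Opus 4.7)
The plan has two parts, one for existence and one for uniqueness.

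For the forward direction, I would argue by case analysis on the outermost form of $S$. Inspecting Figure~\ref{fig:encoding}, every clause $\mathbf{E1}$--$\mathbf{E9}$ that introduces a fresh symbol $\vpS$ also introduces an explicit failure-propagation rule $\vpS(\bot(x)) \ra \bot(x)$ in $\TRctx{\Gamma}{S}$. The one remaining case is $S = X$ a strategy variable, for which clause $\mathbf{E7}$ produces no rule, but then the hypothesis $\varS S \subseteq \domS \Gamma$ forces $X \in \domS \Gamma$, so clause $\mathbf{E10}$ applied to $\TRg{\Gamma}{\Gamma}$ still provides the needed rule $\varphi_X(\bot(x)) \ra \bot(x)$. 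In all cases, a single rewrite step at the root gives $\vpS(\bot(t)) \longrightarrow \bot(t)$, hence $\trsred{\TRgGG \cup \TRctx{{\varphi}}{S}}{\vpS(\bot(t))}{\bot(t)}$.

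For uniqueness, the key auxiliary fact I would establish (and reuse throughout the paper, cf.\ the soundness sketch in Theorem~\ref{th:simulation}) is that any term in $\TF \cup \{\bot(t) \mid t \in \TF\}$ is a normal form of the TRS. This holds because every left-hand side in $\TRctx{\Gamma}{S} \cup \TRgGG$ is headed by a generated symbol ($\varphi_\bullet$, $\phiseq$, $\phich$, $\psiF$, or $\psif{i}$), none of which belong to $\FF$, and $\bot$ itself never heads a left-hand side. In particular, neither $t$ nor the node $\bot(t)$ contains any redex, so every reduction of $\vpS(\bot(t))$ must take place at the root.

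It then remains to enumerate the rules whose left-hand side is headed by $\vpS$ and check which ones can match $\vpS(\bot(t))$. For every operator, besides the failure-propagation rule the only other $\vpS$-headed schemas are of the form $\vpS(x \at \ap\bot(\any))$, $\vpS(c)$ for $c \in \FF^0$, or $\vpS(f(x_1,\ldots,x_n))$ for $f \in \FF^+$. By the semantics of $\ap\bot(\any)$ given in Section~\ref{se:strategyTranslation}, none of these match a term whose head is $\bot$, since $\bot \notin \FF$. Hence the only applicable root step is failure propagation, yielding $\bot(t)$, which is itself a normal form. A straightforward induction on the length of the reduction then concludes $u = \bot(t)$.

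The only mildly delicate point is the rigorous verification that $\bot(t)$ is genuinely matched by no $\vpS$-headed left-hand side other than the failure rule, and that no generated left-hand side is headed by a symbol of $\FF$ or by $\bot$. This is a routine but exhaustive inspection of the eleven clauses in Figure~\ref{fig:encoding}; once recorded as a separate sub-lemma it makes both parts of the present statement immediate, and it will be reused to justify head-rigidity arguments in later proofs.
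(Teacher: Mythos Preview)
Your proposal is correct and follows essentially the same approach as the paper: identify the failure-propagation rule $\vpS(\bot(x)) \ra \bot(x)$ in every case (including $S=X$ via $\TRgGG$) and observe that no other rule headed by $\vpS$ can match an argument of the form $\bot(t)$. The paper's proof is in fact even terser than yours, compressing the whole argument into two sentences; the normal-form observation you spell out for $\TF$ and $\bot(t)$ is recorded separately as Lemma~\ref{lemma:rigidTF} and reused later, exactly as you anticipate.
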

\begin{proof}
  For all $S \neq X$, the only rule in $\TRgGG \cup \TRctx{}{S}$ that can
  rewrite $\vpS(\bot (t))$ is $\vpS(\bot(t)) \ra \bot(t)$. For $S = X$, because
  $X \in \domS \Gamma$, $\TRg{}{\Gamma}$ contains the rule $\vpn{X}(\bot(t)) \ra
  \bot(t)$, and it is also the only rule that can rewrite $\vpn{X}(\bot (t))$.
\end{proof}

\begin{lem}[Rigid]
\label{lemma:rigidTF}
Given a term $t\in \TF$, a strategy $S$, and a context $\Gamma$, the term $t$ is
in normal form w.r.t. $\TRgGG \cup \TRctx{{\varphi}}{S}$. If $\trsred{\TRgGG
  \cup \TRctx{{\varphi}}{S'}}{\vpS(t)}{u}$, then $S=X$ and $X \in \domS \Gamma$,
or $\TRctx{}{S} \subseteq \TRgGG \cup \TRctx{{\varphi}}{S'}$.
\end{lem}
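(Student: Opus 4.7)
The plan is to prove both claims by syntactic inspection of the rewrite rules in Figure~\ref{fig:encoding}, leaning on the freshness and unique identification of the generated $\varphi$- and $\psi$-symbols recalled in Section~\ref{se:strategyTranslation}.

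For the first claim (normal-form property of terms in $\TF$), I would observe that every rule generated by $\mathbf{(E1)}$--$\mathbf{(E11)}$ has its left-hand side headed by a symbol that is \emph{not} in the original signature $\FF$: either a generated $\varphi$-symbol (such as $\varphi_{S''}$, $\phiseq$ or $\phich$), a $\psi$-symbol ($\psiF$ or $\psif{i}$), or the failure symbol $\bot$. Since any $t \in \TF$ is built exclusively from symbols in $\FF$, no rule of $\TRgGG \cup \TRctx{{\varphi}}{S}$ can match $t$ at any position, and hence $t$ is irreducible.

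For the second (head-rigidity) claim, I would combine this normal-form property with a case analysis on the shape of $S$. Assume $\trsred{\TRgGG \cup \TRctx{{\varphi}}{S'}}{\vpS(t)}{u}$. By the first claim the reduction cannot occur strictly inside $t$, so the first step must happen at the root of $\vpS(t)$; therefore the rewrite system must contain a rule whose left-hand side is headed by $\vpS$. Inspecting Figure~\ref{fig:encoding}, such rules are introduced only by the clauses $\mathbf{(E1)}$--$\mathbf{(E9)}$ applied exactly to the (sub-)strategy $S$ itself, and by $\mathbf{(E10)}$ for variables bound in a context. If $S$ is not a strategy variable, the $\vpS$-rules can therefore come only from $\TRctx{}{S}$, which must be contained in $\TRgGG \cup \TRctx{{\varphi}}{S'}$, yielding the second disjunct. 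If $S = X$ is a variable, then either $X \in \domS \Gamma$ and the $\vpn{X}$-rules are supplied by $\TRg{\Gamma}{\Gamma}$ through $\mathbf{(E10)}$, which gives the first disjunct, or $X$ is bound only by some $\mu X\dotsym S''$ occurring inside $S'$, in which case $\TRctx{}{X} = \emptyset$ by $\mathbf{(E7)}$ and the inclusion $\TRctx{}{S} \subseteq \TRgGG \cup \TRctx{{\varphi}}{S'}$ holds vacuously.

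The main obstacle I expect is the bookkeeping of the generated $\varphi$-symbols: one has to argue that the $\vpS$ sitting at the root of the reducing term really coincides with the freshly generated symbol associated with the specific (sub-)strategy $S$ under consideration, and is not accidentally identified with a homonymous symbol introduced when encoding some other (sub-)strategy. This is precisely ensured by the uniqueness convention recalled in Section~\ref{se:strategyTranslation}, so once this point is pinned down the case analysis above is exhaustive and routine.
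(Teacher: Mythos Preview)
Your approach is essentially the paper's: both claims rest on the observation that every rule's left-hand side is headed by a freshly generated symbol outside $\FF$, and the second claim then follows by a case analysis on $S$.

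The only point of divergence is the sub-case $S = X$. The paper invokes Barendregt's convention directly: any $\mu$-binder occurring in $S'$ (or in the strategies stored in $\Gamma$) binds a name $Y \neq X$, so rules of the form $\varphi_X(\ldots) \ra \ldots$ can come only from $\TRg{\Gamma}{\Gamma}$ via $\mathbf{(E10)}$, which forces $X \in \domS \Gamma$. You instead introduce an extra branch --- $X$ bound by some $\mu X\dotsym S''$ inside $S'$ --- and discharge it by declaring $\TRctx{}{X} = \emptyset$. Under the paper's hygiene convention that branch simply does not arise, so the split is harmless but redundant; and your appeal to $\mathbf{(E7)}$ there is slightly off, since $\mathbf{(E7)}$ presupposes that $X$ already lies in the domain of the ambient context, which is precisely what fails in the branch you are trying to cover. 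The paper's route via Barendregt is the cleaner way to close the case.

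A minor correction: no left-hand side is actually headed by $\bot$; that symbol appears only in argument position. This does not affect your argument.
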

\begin{proof}
A term $t \in \TF$ does not contain any $\varphi$ symbols, and all the rules in
$\TRgGG \cup \TRctx{}{S}$ assumes a $\varphi$ symbol at the root, hence the
first result holds. 

Because of this result, if $\trsred{\TRgGG \cup
  \TRctx{{\varphi}}{S'}}{\vpS(t)}{u}$, then $\TRgGG \cup \TRctx{{\varphi}}{S'}$
contains a rule of the form $\vpn{S}(\any) \ra \ldots$. If $S \neq X$, then it
is possible only if $\Gamma$ or $S'$ contains $S$, and then it is easy to prove
that $\TRctx{}{S} \subseteq \TRgGG \cup \TRctx{{\varphi}}{S'}$.

If $S=X$, then only the translation of contexts or recursive strategies generate
rules than can rewrite a term of the form $\vpn{X}(t)$. But according to the
Barendregt convention, any recursive strategy in $\Gamma$ or $S'$ will be of the
form $\mu Y.S''$, with $Y \neq X$. The only remaining possibility is
$\trsred{\TRgGG}{\vpn{X}(t)}{u}$, which is possible only if $\Gamma$ binds $X$.
\end{proof}

\begin{lem}
  \label{lemma:noPhi}
  If $\trsred{\TRgGG \cup \TRctx{{\varphi}}{S}}{\vpS(t)}{u}$, then $t \in \TF$
  or $t = \bot(\any)$.
\end{lem}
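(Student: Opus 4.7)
I would prove Lemma~\ref{lemma:noPhi} by a direct inspection of the left-hand sides of the generated rules in $\TRgGG \cup \TRctx{{\varphi}}{S}$, combined with a short induction on the length of the derivation $\vpS(t) \multieval u$.

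The first step is to walk through the clauses $\mathbf{E1}$--$\mathbf{E11}$ of Figure~\ref{fig:encoding} and catalogue, for each strategy operator, the argument slot of every rule whose left-hand side is rooted by $\vpS$. In every case this slot is covered by one of three syntactic shapes: (i) the anti-pattern $x \at \ap\bot(\any)$, whose ground instances are precisely $\TF$; (ii) the failure pattern $\bot(x)$, matching $\bot$-rooted terms; or (iii) an explicit subterm pattern such as $l$ (as in $\mathbf{E3}$), a constant $c \in \FF^0$, or a shallow term $f(x_1,\ldots,x_n)$ with $f \in \FF^+$ (as in $\mathbf{E8}$, $\mathbf{E9}$), whose ground instances again lie in $\TF$. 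These three patterns exhaust the possible $\vpS$-rooted left-hand sides, and by Lemma~\ref{lemma:rigidTF} no rule whose left-hand side is rooted by a symbol outside the generated $\varphi$-family can fire at the root of $\vpS(t)$.

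The second step is to split the derivation by its first rewrite step. If the first step takes place at the root of $\vpS(t)$, the cataloguing above immediately forces $t$ to match one of (i)--(iii), which gives $t \in \TF$ or $t = \bot(\any)$ as required. If the first step is strictly inside $t$, then $\vpS(t) \to \vpS(t')$ for some $t'$ obtained by a single inner step; using Lemma~\ref{lemma:rigidTF} the reducible subterm of $t$ must itself be $\varphi$-rooted, which sharply constrains the possible shapes of $t$, and the induction hypothesis applied to the shorter derivation $\vpS(t') \multieval u$ then transfers the conclusion back to the initial $t$.

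The main obstacle I anticipate is precisely this internal-reduction case. One has to rule out the scenario where an inner rewrite silently reshapes the argument from some shape outside $\TF \cup \{\bot(\any)\}$ into one inside it, while the outer $\vpS$ never fires. The cleanest way I see to handle this is to observe, once again by direct inspection, that every generated rule rewrites at the position of a $\varphi$-symbol (or of a $\bot$ being propagated by a $\varphi$-rule), so the root of the argument of $\vpS$ cannot migrate between the admissible classes $\TF$ and $\{\bot(t')\}$ and an inadmissible class without $t$ already being $\varphi$-rooted at some reachable position; combining this structural observation with the catalogue of $\vpS$-left-hand sides closes the induction.
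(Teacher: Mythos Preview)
Your Step~1 is exactly the paper's whole argument: the proof in the paper is the single line ``Immediate by definition of the translation'', and it amounts precisely to observing that every rule in $\TRgGG \cup \TRctx{}{S}$ whose left-hand side is headed by $\vpS$ has its argument pattern of one of the three shapes you list, all of which match only $\TF$ or $\bot(\any)$. That inspection is all that is intended.

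Your Step~2, and the ``main obstacle'' you flag, stem from reading the statement too literally. Taken at face value with $\multieval$ allowing internal steps, the lemma is actually \emph{false}, so no induction can close the internal-reduction case. Concretely, take $S = \seq{\id}{\id}$ over a signature with a single constant $c$, and let $t = \vpn{\id}(c)$. Then $\vpS(t) = \vpn{\seq{\id}{\id}}(\vpn{\id}(c)) \longrightarrow \vpn{\seq{\id}{\id}}(c)$ by an internal step using the rule $\vpn{\id}(c) \ra c \in \TRctx{}{S}$, yet $t = \vpn{\id}(c)$ is neither in $\TF$ nor of the form $\bot(\any)$. Your proposed fix (``the root of the argument cannot migrate \ldots\ without $t$ already being $\varphi$-rooted'') does not help: $t$ being $\varphi$-rooted is exactly the bad case, not a constraint that rules it out. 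The induction hypothesis gives information about $t'$, and there is no way to transfer it back to~$t$ here.

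The intended content, as confirmed both by the one-line proof and by how the lemma is invoked in the proof of Lemma~\ref{lemma:sameFailure} (``$\TRgGG \cup \TRctx{}{S_2}$ cannot reduce $\vpt(\vpo(t))$''), is a statement about \emph{head} reducibility: if some rule of the encoding fires at the root of $\vpS(t)$, then $t \in \TF$ or $t = \bot(\any)$. With that reading, your Step~1 alone is the complete proof and your Step~2 should simply be dropped.
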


\begin{proof}
  Immediate by definition of the translation.
\end{proof}

\begin{lem}[Initial term as failure]
\label{lemma:sameFailure}
Given a term $t\in \TF$, if $\trsred{\TRgGG \cup
  \TRctx{{\varphi}}{S}}{\vpS(t)}{\bot(t')}$ then $t=t'$.
\end{lem}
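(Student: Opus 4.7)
The plan is to proceed by structural induction on the strategy $S$, using Lemma~\ref{lemma:failurePropagation} (propagation of failure) and Lemma~\ref{lemma:rigidTF} (rigidity of $\vpS(t)$) to control which rules can fire at each step of the reduction $\trsred{\TRgGG \cup \TRctx{\varphi}{S}}{\vpS(t)}{\bot(t')}$. The key invariant is that every auxiliary symbol introduced by the translation either carries the original subject explicitly as one of its arguments, or inherits a failure label that, by induction, must coincide with its input subject.

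For the elementary strategies ($\id$, $\fail$, $l\ra r$), the result is immediate by inspection of the generated rules in $\mathbf{(E1)}$--$\mathbf{(E3)}$: the only rules producing a term headed by $\bot$ either have the shape $\varphi(x \at \ap\bot(\any)) \ra \bot(x)$ or $\varphi(\bot(x)) \ra \bot(x)$, both of which preserve the argument. For the recursion case $\mu X\dotsym S'$ and the variable case $X \in \domS\Gamma$, the first reduction step rewrites $\varphi_{\mu X\dotsym S'}(t)$ (resp.\ $\varphi_X(t)$) into $\vpn{S'}(t)$, and the induction hypothesis on $S'$ applies directly.

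For $\seq{S_1}{S_2}$ and $\choice{S_1}{S_2}$, I would exploit the fact that the auxiliary symbols $\phiseq$ and $\phich$ keep the original subject as an extra argument. For sequence, if $\vpn{\seq{S_1}{S_2}}(t)$ eventually reduces to $\bot(t')$, the reduction must pass through $\phiseq(\vpn{S_2}(\vpn{S_1}(t)), t)$ and finish using $\phiseq(\bot(\any), x) \ra \bot(x)$ instantiated with $x = t$, giving $t' = t$. For choice, the failure from the $S_1$ branch must (by induction on $S_1$) be $\bot(t)$, which is then rewritten by $\phich(\bot(x)) \ra \vpn{S_2}(x)$ into $\vpn{S_2}(t)$, after which the induction hypothesis on $S_2$ gives $t' = t$; by Lemmas~\ref{lemma:failurePropagation} and~\ref{lemma:rigidTF}, no other interleaving of rules is possible.

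The main obstacle is the traversal operators. For $\all(S')$ on $f(t_1, \ldots, t_n)$, the reduction must pass through $\psi_f(\vpn{S'}(t_1), \ldots, \vpn{S'}(t_n), f(t_1, \ldots, t_n))$, and any $\psi_f$-rule producing $\bot$ has the form $\psi_f(\any, \ldots, \bot(\any), \ldots, \any, x) \ra \bot(x)$, where $x$ is matched against the preserved copy of $f(t_1,\ldots,t_n)=t$; thus $t'=t$. The genuinely delicate case is $\one(S')$, since the failure label is not carried explicitly but reconstructed from the chained arguments of $\psif{1}, \ldots, \psif{n}$. Here I would strengthen the induction by observing that each intermediate term $\psif{i}(\bot(x_1), \ldots, \bot(x_{i-1}), \vpn{S'}(t_i), t_{i+1}, \ldots, t_n)$ reached during the reduction satisfies $x_j = t_j$ for $j < i$: this follows by applying the outer induction hypothesis to each sub-reduction $\trsred{\cdots}{\vpn{S'}(t_j)}{\bot(x_j)}$, together with Lemma~\ref{lemma:failurePropagation} to rule out any further interference. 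When the final rule $\psif{n}(\bot(x_1), \ldots, \bot(x_n)) \ra \bot(f(x_1, \ldots, x_n))$ fires, we thus obtain $t' = f(t_1, \ldots, t_n) = t$. The constant case $\vpn{\one(S')}(c) \ra \bot(c)$ is immediate.
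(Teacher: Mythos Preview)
Your case analysis is sound and matches the paper's argument closely, but the induction principle you chose---structural induction on $S$---does not go through in the variable case $S = X$. When $X \colon S'$ belongs to $\Gamma$, the first step rewrites $\varphi_X(t)$ to $\varphi_{S'}(t)$, and you then need the result for the strategy $S'$; but $S'$ is the body bound to $X$ in the context, not a structural sub-term of $X$, so there is no induction hypothesis available. The same circularity resurfaces inside the $\mu X \dotsym S'$ case whenever $S'$ actually uses $X$: the hypothesis you invoke for $S'$ will, further down, require the result for $X$, which in turn demands it for $S'$ again.

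The paper avoids this by inducting on the \emph{length of the reduction} $\trsred{\TRgGG \cup \TRctx{}{S}}{\vpS(t)}{\bot(t')}$ rather than on $S$. Every case you analysed consumes at least one reduction step before reaching a sub-goal of the same shape, so the remaining reduction is strictly shorter and the hypothesis applies uniformly---for $X$, for $\mu X \dotsym S'$, and for the sub-reductions $\varphi_{S'}(t_j) \trs \bot(x_j)$ in the $\one$ case. Your per-case arguments (including the careful treatment of the $\psif{i}$ chain) carry over unchanged once you switch the induction measure.
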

\begin{proof}
  By induction on the length of the reduction $\trsred{\TRgGG \cup
    \TRctx{{\varphi}}{S}}{\varphi(t)}{\bot(t')}$.

  If $S = \id$, it is not possible to obtain $\bot(t')$ from $\vpn{\id}(t)$,
  because $t \in \TF$.

  If $S = \fail$, then because $t \in \TF$, the only rule that can be applied
  to $\vpn{\fail}(t)$ is $\vpn{\fail}(x \at \ap \bot(\any)) \ra \bot(x)$, and we
  obtain $\bot(t)$, as wished. 

  If $S = l \ra r$, then the only rule that can be applied to $\vpn{l \ra r}(t)$
  (with $t \in \TF$) to produce $\bot(t')$ is $\vpn{l \ra r}(x \at \ap l) \ra
  \bot(x)$, and we obtain  $\bot(t)$, as required.

  Suppose $S = \seq{S_1}{S_2}$. Because $\TRctx{}{S_1}$ and $\TRctx{}{S_2}$
  cannot reduce $\vpn{\seq{S_1}{S_2}}(t)$ (by Lemma~\ref{lemma:rigidTF}), the
  first rule to be applied is $\varphi_{\seqsub{S_1}{S_2}}(x \at \ap \bot(\any) )
  \ra \phiseq( \vpt(\vpo(x)), x)$, and we obtain $\phiseq(\vpt(\vpo(t)), t)$. By
  Lemma~\ref{lemma:noPhi}, $\TRgGG \cup \TRctx{}{S_2}$ cannot reduce
  $\vpt(\vpo(t))$, and by Lemma~\ref{lemma:rigidTF}, only $\TRctx{}{S_1}$ (or
  $\TRgGG$ is $S_1 = X$) can reduce $\vpo(t)$. We have several possibilities. If
  $\trsred{\TRgGG \cup \TRctx{}{S_1}}{\vpo(t)}{\bot(u)}$, then by induction, $u
  = t$, and by Lemma~\ref{lemma:failurePropagation}, $\trsred{\TRgGG \cup
    \TRctx{}{S_2}}{\phiseq(\vpt(\bot(t)), t)}{\phiseq(\bot(t), t)}$. The only
  rule that can be applied to the latter term is $\phiseq(\bot(\any), x) \ra
  \bot(x)$, and we obtain $\bot(t)$, as wished. Otherwise, we have
  $\trsred{\TRgGG \cup \TRctx{}{S_1}}{\vpo(t)}{u}$. Because we obtain $\bot(t')$
  as a result, necessarily the rule $\phiseq(\bot(\any), x) \ra \bot(x)$ has
  been applied, which means that $\trsred{\TRgGG \cup
    \TRctx{}{S_2}}{\vpt(u)}{\bot(t'')}$ for some $t''$. By induction, $t''=u$,
  and consequently, $\trsred{\TRgGG \cup \TRctx{}{S_2}}{\phiseq(\vpt(u),
    t)}{\phiseq(\bot(u), t)}$. We then have $\trsred{\{\phiseq(\bot(\any), x)
    \ra \bot(x)\}}{\phiseq(\bot(u), t)}{\bot(t)}$, as required.

  Suppose $S = \choice{S_1}{S_2}$. Because $\TRctx{}{S_1}$ and $\TRctx{}{S_2}$
  cannot reduce $\vpn{\choicesub{S_1}{S_2}}(t)$ (by Lemma~\ref{lemma:rigidTF}), the
  first rule to be applied is $\varphi_{\choicesub{S_1}{S_2}}(x \at \ap \bot(\any)
  ) \ra \phich(\vpo(x))$, and we obtain $\phich(\vpo(t))$. By
  Lemma~\ref{lemma:rigidTF}, only $\TRctx{}{S_1}$ (or $\TRgGG$ is $S_1 = X$) can
  reduce $\vpo(t)$. We have two possibilities. If $\trsred{\TRgGG \cup
    \TRctx{}{S_1}}{\vpo(t)}{\bot(u)}$, then by induction $u = t$, and then
  $\trsred{\{\phich(\bot(x)) \ra \vpt(x)\}}{\phich(\bot(t))}{\vpt(t)}$. By
  Lemma~\ref{lemma:rigidTF}, only $\TRctx{}{S_2}$ (or $\TRgGG$ is $S_2 = X$) can
  reduce $\vpt(t)$, therefore we have $\trsred{\TRgGG \cup
    \TRctx{}{S_2}}{\vpt(t)}{\bot(t')}$. By induction, we have $t=t'$, and the
  result holds. Otherwise, we have $\trsred{\TRgGG \cup
    \TRctx{}{S_1}}{\vpo(t)}{u}$ with $u \in \TF$. But then, the only rule that
  can be applied to $\phich(u)$ is $\phich(x \at \ap \bot(\any)) \ra x$, and we
  obtain $u \in \TF$ as a result of the reduction of
  $\vpn{\choice{S_1}{S_2}}(t)$, which is in contradiction with the original
  hypothesis.

  Suppose $S = \mu X.S'$. Because $\TRctx{}{S'}$ cannot reduce $\vpn{\mu
    X.S'}(t)$ (by Lemma~\ref{lemma:rigidTF}), the first rule to be applied is
  $\varphi_{\mu X.S'}(Y \at \ap \bot(\any)) \ra \vpn{S'}(Y)$, and we obtain
  $\vpn{S'}(t)$. We therefore have $\trsred{\TRgGG \cup \TRctx{{\varphi}}{\mu
      X.S'}}{\vpn{S'}(t)}{\bot(t')}$ with less steps than the original
  reduction. Besides, only $\TRctx{}{S'}$ (or $\TRgGG$ if $S'=X$) can reduce
  $\vpn{S'}(t)$ (Lemma~\ref{lemma:rigidTF}), therefore we have in
  fact $\trsred{\TRgGG \cup \TRctx{{\varphi}}{S'}}{\vpn{S'}(t)}{\bot(t')}$. By
  induction, $t'=t$, and the result holds. 

  Suppose $S = X$. By Lemma~\ref{lemma:rigidTF}, only $\TRgGG$ can reduce
  $\vpn{X}(t)$, and assuming $X \colon S'$ belongs to $\Gamma$, the only rule
  that can be applied is $\varphi_{X}(Y \at \ap \bot(\any)) \ra \vpn{S'}(Y)$,
  which generates $\vpn{S'}(t)$. Only $\TRctx{}{S'}$ (or $\TRgGG$ if $S'=Z$) can
  reduce $\vpn{S'}(t)$ (Lemma~\ref{lemma:rigidTF}), therefore we have
  $\trsred{\TRgGG \cup \TRctx{{\varphi}}{S'}}{\vpn{S'}(t)}{\bot(t')}$ with less
  steps than the original reduction. By induction, $t'=t$, and the result holds.

  Suppose $S =\all(S')$. If $t$ is a constant $c$, then we cannot obtain
  $\bot(t')$. Suppose $t=f(t_1, \ldots, t_n)$.  Because $\TRctx{}{S'}$ cannot
  reduce $\vpn{\all(S')}(t)$ (by Lemma~\ref{lemma:rigidTF}), the first rule to
  be applied is $\varphi_{\all(S)}(f(x_1,\ldots,x_n)) \ra
  \psiF(\vpn{S'}(x_1),\ldots,\vpn{S'}(x_n),f(x_1,\ldots,x_n))$ to generate
  $\psiF(\vpn{S'}(t_1),\ldots,\vpn{S'}(t_n),f(t_1,\ldots,t_n))$. Note that
  according to Lemma~\ref{lemma:rigidTF}, the rules in $\TRgGG \cup
  \TRctx{}{\all(S')}$ cannot rewrite $f(t_1, \ldots, t_n)$ in
  $\psiF(\vpn{S'}(t_1),\ldots,\vpn{S'}(t_n),f(t_1,\ldots,t_n))$. From this term,
  the only possibility to obtain $\bot(t')$ is to apply one of the rules of the
  form $\psiF(\any, \ldots \bot(\any),\ldots,\any,z) \ra \bot(z)$, which
  generates $f(t_1, \ldots, t_n)$, as wished.

  Suppose $S = \one(S')$. If $t$ is a constant $c$, then the only rule that can
  be applied is $\vpn{\one(S')}(c) \ra \bot(c)$, hence the result holds. Suppose
  $t=f(t_1, \ldots, t_n)$.  Because $\TRctx{}{S'}$ cannot reduce
  $\vpn{\one(S')}(t)$ (by Lemma~\ref{lemma:rigidTF}), the first rule to be
  applied is $\varphi_{\one(S')}(f(x_1,\ldots,x_n)) \ra
  \psif{1}(\vpn{S'}(x_1),x_2,\ldots,x_n)$, which generates
  $\psif{1}(\vpn{S'}(t_1),t_2,\ldots,t_n)$. Only $\TRctx{}{S'}$ (or $\TRgGG$ if
  $S'=X$) can reduce $\vpn{S'}(t_1)$. Suppose $\trsred{\TRgGG \cup
    \TRctx{{\varphi}}{S'}}{\vpn{S'}(t_1)}{u}$ with $u \in \TF$. Then we can only
  apply the rule $\psif{1}(x_1\at\ap\bot(\any),x_2,\ldots,x_n) \ra
  f(x_1,\ldots,x_n)$ to $\psif{1}(u,t_2,\ldots,t_n)$, and we obtain $f(u, t_2,
  \ldots, t_n)$, in contradiction with the initial hypothesis. Therefore,
  $\trsred{\TRgGG \cup \TRctx{{\varphi}}{S'}}{\vpn{S'}(t_1)}{\bot(t'_1)}$, in
  less steps than the original reduction. By induction, $t'_1=t_1$. We can only
  apply the rule $ \psif{1}(\bot(x_{1}),x_2,\ldots,x_n) \ra
  \psif{2}(\bot(x_1),\vpS(x_2),\ldots,x_n)$ to
  $\psif{1}(\bot(t_1),t_2,\ldots,t_n)$, and we obtain
  $\psif{2}(\bot(t_1),\vpn{S'}(t_2),\ldots,t_n)$. With the same reasoning, we
  have $\trsred{\TRgGG \cup \TRctx{{\varphi}}{S'}}{\vpn{S'}(t_i)}{\bot(t_i)}$
  for all $i$, and we therefore have $\trsred{\TRgGG \cup
    \TRctx{{\varphi}}{\one(S')}}{\vpn{S'}(\psif{1}(\vpn{S'}(t_1),t_2,\ldots,t_n))}{\psif{n}(\bot(t_1),\ldots,\bot(t_n))}$. We
  can apply only the rule $\psif{n}(\bot(x_1),\ldots,\bot(x_n)) \ra
  \bot(f(x_1,\ldots,x_n))$ to the latter term, and we obtain $\bot(f(t_1, \ldots
  t_n))$, as required.
\end{proof}

\simulation*

\begin{proof}
By induction on the height of derivation tree and respectively of on
the length of the reduction.  
$$
\begin{array}{rcl}
  S & :=  &   \id \mid \fail \mid l \ra r \mid \seq{S_1}{S_2} \mid \choice{S_1}{S_2} \mid \one(s) \mid \all(s) \mid \mu X.{S'} \\
\end{array}
$$

\emph{Base case:} $S  :=  \id \mid \fail \mid l \ra r$

\begin{enumerate}
\item $S:=\id$

Independently of $\Gamma$, for all $t\in \TF$ we have 
$$\stratappctx{\Gamma}{\id}{t}\sred t$$
and 
$$\trsred{\{  \varphi_{\id}(x \at \ap \dummy) \ra x ~\} \cup
  \TRg{\Gamma}{\Gamma}}{\varphi_{\id}(t)}{t}$$ 

\item $S:=\fail$

Independently of $\Gamma$, for all $t\in \TF$ we have 
$$\stratappctx{\Gamma}{\fail}{t}\sred \failres$$
and 
$$\trsred{\{  \varphi_{\fail}(x \at \ap \dummy) \ra \bot(x) ~\} \cup \TRg{\Gamma}{\Gamma}}{\varphi_{\fail}(t)}{\bot(t)}$$

\item $S:=l \ra r$

  \begin{enumerate}
  \item $S \Rightarrow  \TRctx{\Gamma}{S} \cup \TRg{\Gamma}{\Gamma}$

    If $\stratappctx{\Gamma}{l \ra r}{t}\sred u$ then $\exists \sigma,
    \applysubs{\sigma}{\mathit{l}} = t$ and
    $\applysubs{\sigma}{\mathit{r}}=u$.  Then,
    $\applysubs{\sigma}{\mathit{\varphi_{l\ra r}(l)}} = \varphi_{l\ra r}(t)$ and thus
    $\trsred{\varphi_{l \ra r}(l) \ra r}{\varphi_{l\ra r}(t)}{u}$.

    If $\stratappctx{\Gamma}{l \ra r}{t}\sred \failres$ then
    $\not\exists \sigma, \applysubs{\sigma}{\mathit{l}} = t$ and thus
    $\not\exists \sigma, \applysubs{\sigma}{\mathit{\varphi_{l\ra
          r}(l)}} = \varphi_{l \ra r}(t)$. The rewrite rule $\varphi_{l\ra r}(l)
    \ra r$ cannot be applied to $\varphi_{l\ra r}(t)$ at the head
    position. Since $\varphi_{l\ra r}$ is a fresh symbol ($\not\exists
    p$ s.t. $t(p)=\varphi_{l\ra r}$) the rule cannot be applied to
    another position of $\varphi_{l\ra r}(t)$. Since $\not\exists
    \sigma, \applysubs{\sigma}{\mathit{\varphi_{l\ra r}(l)}} =
    \varphi_{l\ra r}(t)$ then $ \varphi_{l\ra r}(t)$ is in the
    semantics of $\varphi_{l\ra r}(\ap l)$ and thus the rule
    $\varphi_{l\ra r}(x \at \ap l) \ra \bot(x)$ can be applied and the
    result is $\bot(t)$.

  \item $\TRctx{\Gamma}{S} \cup \TRg{\Gamma}{\Gamma}  \Rightarrow  S$

    If the rule $\varphi_{l\ra r}(l) \ra r$ is used then, since
    $\varphi_{l\ra r}$ does not occur in $t$, it can be only applied
    at the head position of $\varphi_{l\ra r}(t)$ and thus
    $\exists\sigma,\applysubs{\sigma}{\mathit{\varphi_{l\ra
          r}(l)}}=\varphi_{l\ra r}(t)$ and
    $\applysubs{\sigma}{\mathit{r}}=u$. Since $u$ contains no
    $\varphi_{l\ra r}$ (the codomain of $\sigma$ contains no
    $\varphi_{l\ra r}$ because it does not occur in $t$)
    and the only rules in $\TRg{\Gamma}{\Gamma}$ concerning
    $\varphi_{l\ra r}$ could be at most the same as those in
    $\TRctx{\Gamma}{l \ra r}$
    then $u$ is in normal form w.r.t. $\TRctx{\Gamma}{l \ra r} \cup
    \TRg{\Gamma}{\Gamma}$. We have then that
    $\exists\sigma,\applysubs{\sigma}{\mathit{l}}=t$ and thus $\stratapp{l\ra
      r}{t}\sred u$. Since $\varphi_{l\ra r}$ is a fresh symbol then there is no
    other way to apply the rewrite rule $\varphi_{l\ra r}(l) \ra r$ to
    $\varphi_{l\ra r}(t)$.

    If the rule $\varphi_{l\ra r}(x \at \ap l) \ra \bot(x)$ is applied
    then it is applied at the top position ($\varphi_{l\ra r}$ fresh)
    and $\exists\sigma=\{t/x\},\applysubs{\sigma}{\varphi_{l\ra
        r}(x\at\ap l)}=\varphi_{l\ra r}(t)$ and
    $\not\!\!\exists\mu,\applysubs{\mu}{\mathit{l}}=\varphi_{l\ra
      r}(t)$. Consequently, $\stratappctx{\Gamma}{l \ra r}{t}\sred \failres$. We
    have $\trsred{\varphi_{l\ra r}(x\at\ap l)\ra\bot(x)}{t}{\bot(t)}$,
    and because $\varphi$ symbols do not occur in $t$, then $\bot(t)$
    is in normal form w.r.t. $\TRctx{\Gamma}{l \ra r} \cup \TRg{\Gamma}{\Gamma}$.
    Because $\bot$ is a fresh symbol then $\not\exists \sigma,
    \applysubs{\sigma}{\bot(x)} = t$ and the rule $\varphi_{l\ra
      r}(\bot(x)) \ra \bot(x)$ cannot be applied to $\varphi_{l\ra
      r}(t)$.

  \end{enumerate}
\end{enumerate}

\emph{Induction:} $S  :=   \seq{S_1}{S_2} \mid  \choice{S_1}{S_2} \mid \one(S') \mid \all(S') \mid \mu X.{S'} $

\begin{enumerate}
\item $S:=\seq{S_1}{S_2}$

  \begin{enumerate}
  \item $S \Rightarrow  \TRctx{\vpg}{S}$

    Since $\stratappctx{\Gamma}{\seq{S_1}{S_2}}{t}\sred u$ then
    $\stratappctx{\Gamma}{S_1}{t}\sred v$ and
    $\stratappctx{\Gamma}{S_2}{v}\sred u$ with a shorter derivation
    tree. By induction,
    $\trsred{\TRgGG\cup\TRctx{{\varphi_1}}{S_1}}{\vpo(t)}{v}$ and
    $\trsred{\TRgGG\cup\TRctx{{\varphi_2}}{S_2}}{\vpt(v)}{u}$. We have
    $\trsred{\varphi_{\seqsub{S_1}{S_2}}(x\at\ap\dummy)\ra\phiseq(\vpt(\vpo(x)),x)}{\varphi_{\seqsub{S_1
       }{S_2}}(t)}{\phiseq(\vpt(\vpo(t)),t)}$.
    By the induction hypothesis,
    $\trsred{\TRgGG\cup\TRctx{{\varphi_1}}{S_1}}{\phiseq(\vpt(\vpo(t)),t)}{\phiseq(\vpt(v),t)}$
    and
    $\trsred{\TRgGG\cup\TRctx{{\varphi_2}}{S_2}}{\phiseq(\vpt(v),t)}{\phiseq(u,t)}$. Finally,
    $\trsred{\phiseq(x\at\ap\bot(\any),\any)\ra x}{\phiseq(u,t)}{u}$.

    If $\stratappctx{\Gamma}{\seq{S_1}{S_2}}{t}\sred \failres$ then
    $\stratappctx{\Gamma}{S_1}{t}\sred\failres$, or
    $\stratappctx{\Gamma}{S_1}{t}\sred v$ and
    $\stratappctx{\Gamma}{S_2}{v}\sred\failres$.
    For the first case, by induction,
    $\trsred{\TRgGG\cup\TRctx{{\varphi_1}}{S_1}}{\vpo(t)}{\bot(t)}$.  We have
    $\trsred{\varphi_{\seqsub{S_1}{S_2}}(x\at\ap\dummy)\ra\phiseq(\vpt(\vpo(x)),x)}{\varphi_{\seqsub{S_1}{S_2}}(t)}{\phiseq(\vpt(\vpo(t)),t)}$
    and, by induction,
    $\trsred{\TRgGG\cup\TRctx{{\varphi_1}}{S_1}}{\phiseq(\vpt(\vpo(t)),t)}{\phiseq(\vpt(\bot(t)),t)}$.
    By Lemma~\ref{lemma:failurePropagation},
    $\trsred{\TRgGG\cup\TRctx{{\varphi_2}}{S_2}}{\phiseq(\vpt(\bot(t)),t)}{\phiseq(\bot(t),t)}$
    and
    $\trsred{\phiseq(\bot(\any),x)\ra\bot(x)}{\phiseq(\bot(t),t)}{\bot(t)}$.
    For the second case, by induction,
    $\trsred{\TRgGG\cup\TRctx{{\varphi_1}}{S_1}}{\vpo(t)}{v}$ and
    $\trsred{\TRgGG\cup\TRctx{{\varphi_2}}{S_2}}{\vpt(v)}{\bot(v)}$.  We have,
    $\trsred{\varphi_{\seqsub{S_1}{S_2}}(x\at\ap\dummy)\ra\phiseq(\vpt(\vpo(x)),x)}{\varphi_{\seqsub{S_1}{S_2}}(t)}{\phiseq(\vpt(\vpo(t)),t)}$.
    By induction,
    $\trsred{\TRgGG\cup\TRctx{{\varphi_1}}{S_1}}{\phiseq(\vpt(\vpo(t)),t)}{\phiseq(\vpt(v),t)}$
    and
    $\trsred{\TRgGG\cup\TRctx{{\varphi_2}}{S_2}}{\phiseq(\vpt(v),t)}{\phiseq(\bot(v),t)}$. Finally,
    $\trsred{\phiseq(\bot(\any),x)\ra\bot(x)}{\phiseq(\bot(v),t)}{\bot(t)}$.

  \item $\TRctx{\vpg}{S}  \Rightarrow  S$

    If
    $\trsred{\TRgGG\cup\TRctx{\varphi_{\seqsub{S_1}{S_2}}}{\seq{S_1}{S_2}}}{\varphi_{\seqsub{S_1}{S_2}}(t)}{u}$,
    since $t\in\TF$ contains no $\varphi$ symbols, the first reduction is
    necessarily
    $\trsred{\varphi_{\seqsub{S_1}{S_2}}(x\at\ap\dummy)\ra\phiseq(\vpt(\vpo(x)),x)}{\varphi_{\seqsub{S_1}{S_2}}(t)}{\phiseq(\vpt(\vpo(t)),t)}$. According
    to Lemma~\ref{lemma:rigidTF}, $\vpo(t)$ can only be reduced by $\TRgGG\cup
    \TRctx{}{S_1}$, which gives
    $\trsred{\TRgGG\cup\TRctx{{\varphi_1}}{S_1}}{\phiseq(\vpt(\vpo(t)),t)}{\phiseq(\vpt(v),t)}$
    for some $v$. If we had
    $\trsred{\TRgGG\cup\TRctx{{\varphi_1}}{S_1}}{\vpo(t)}{\bot(t)}$, the $\bot$
    would have been propagated (by Lemma~\ref{lemma:failurePropagation}) and the
    final result would have been $\bot(t)$ which contradicts the initial
    hypothesis. By induction (the latter reduction needs less steps than the
    initial one), $\stratappctx{\Gamma}{S_1}{t}\sred v$.  Next, the only
    possible reduction is
    $\trsred{\TRgGG\cup\TRctx{{\varphi_2}}{S_2}}{\phiseq(\vpt(v),t)}{\phiseq(u,t)}$
    and then $\trsred{\phiseq(x \at \ap \bot(\any), \any) \ra
      x}{\phiseq(u,t)}{u}$.  By induction, $\stratappctx{\Gamma}{S_2}{v}\sred u$
    and since $\stratappctx{\Gamma}{S_1}{t}\sred v$ we can conclude that
    $\stratappctx{\Gamma}{\seq{S_1}{S_2}}{t}\sred u$.

    If $\trsred{\TRgGG\cup\TRctx{\varphi_{\seqsub{S_1}{S_2}}}{\seq{S_1}{S_2}}}{\varphi_{\seqsub{S_1}{S_2}}(t)}{\bot(t)}$,
    since $\varphi_{\seqsub{S_1}{S_2}}(t)$ is in normal form
    w.r.t. $\TRgGG\cup\TRctx{{\vpo}}{S_1}$ and $\TRgGG\cup\TRctx{{\vpt}}{S_2}$
    (according to Lemma~\ref{lemma:rigidTF}), the first reduction is necessarily
    $\trsred{\varphi_{\seqsub{S_1}{S_2}}(x\at\ap\dummy)\ra\phiseq(\vpt(\vpo(x)),x)}{\varphi_{\seqsub{S_1}{S_2}}(t)}{\phiseq(\vpt(\vpo(t)),t)}$. The
    only way to obtain $\bot(t)$ as a result is to have a reduction
    $\trsred{\TRgGG\cup\TRctx{\varphi_{\seqsub{S_1}{S_2}}}{\seq{S_1}{S_2}}}{\vpt(\vpo(t))}{\bot(v)}$.
    Thus, either $\trsred{\TRgGG\cup\TRctx{\varphi_{\seqsub{S_1}{S_2}}}{\seq{S_1}{S_2}}}{\vpo(t)}{v'}$ and
    $\trsred{\TRgGG\cup\TRctx{\varphi_{\seqsub{S_1}{S_2}}}{\seq{S_1}{S_2}}}{\vpt(v')}{\bot(v)}$ or
    $\trsred{\TRgGG\cup\TRctx{\varphi_{\seqsub{S_1}{S_2}}}{\seq{S_1}{S_2}}}{\vpo(t)}{\bot(v)}$ and
    $\trsred{\TRgGG\cup\TRctx{\varphi_{\seqsub{S_1}{S_2}}}{\seq{S_1}{S_2}}}{\vpt(\bot(v))}{\bot(v)}$ (by
    Lemma~\ref{lemma:failurePropagation}).
    In the first case, we have
    $\trsred{\TRgGG\cup\TRctx{\vpo}{S_1}}{\vpo(t)}{v'}$ since only the rules in
    $\TRgGG \cup \TRctx{\varphi_{\seqsub{S_1}{S_2}}}{S_1}$ can reduce $\vpo(t)$
    and, by induction, $\stratappctx{\Gamma}{S_1}{t}\sred v'$. Similarly,
    $\trsred{\TRgGG\cup\TRctx{\vpt}{S_2}}{\vpt(v')}{\bot(v)}$ and, by
    Lemma~\ref{lemma:sameFailure}, $v'=v$. Thus, by induction,
    $\stratappctx{\Gamma}{S_2}{v'}\sred \failres$. Consequently,
    $\stratappctx{\Gamma}{\seq{S_1}{S_2}}{t}\sred \failres$.
    For the second case, as previously, we have that
    $\trsred{\TRgGG\cup\TRctx{\vpo}{S_1}}{\vpo(t)}{\bot(v)}$ and, by
    Lemma~\ref{lemma:sameFailure}, $v=t$. By induction, 
    $\stratappctx{\Gamma}{S_1}{t}\sred \failres$ and consequently,
    $\stratappctx{\Gamma}{\seq{S_1}{S_2}}{t}\sred \failres$.

  \end{enumerate}

\item $S:=\mu X.{S'}$

  \begin{enumerate}
  \item $S \Rightarrow  \TRctx{\vpg}{S}$

    Since $\stratappctx{\Gamma}{\mu X.{S'}}{t}\sred u$ then
    $\stratappctx{\Gamma;{X\colon S'}}{S'}{t}\sred u$
    with the latter having a shorter derivation tree and thus, by
    induction,
    $\trsred{\TRg{\Gamma;X\colon{S'}}{\Gamma;X\colon{S'}}\cup\TRctx{{\varphi_1}}{S'}}{\varphi_{S'}(t)}{u}$.
    The only possible reduction of $\varphi_{\mu X.S'}(t)$ \wrt\
    $\TRgGG\cup\TRctx{{\varphi_1}}{\mu X.S'}$ is obtained by
    applying $\varphi_{\mu X.S'}(Y \at \ap \bot(\any)) \ra \vpn{S'}(Y)$
    which results in the term $\vpn{S'}(t)$. Since, by induction,
    $\trsred{\TRg{\Gamma;X\colon{S'}}{\Gamma;X\colon{S'}}\cup\TRctx{{\varphi_1}}{S'}}{\varphi_{S'}(t)}{u}$
    and since
    $\TRg{\Gamma;X\colon{S'}}{\Gamma;X\colon{S'}}\cup\TRctx{{\varphi_1}}{S'}\subseteq\TRgGG\cup\TRctx{{\varphi_1}}{\mu 
      X.S'}$ we eventually have
    $\trsred{\TRgGG\cup\TRctx{{\varphi_1}}{\mu
        X.S'}}{\varphi_{\mu X.S'}(t)}{u}$.
    
    If $\stratappctx{\Gamma}{\mu X.S'}{t}\sred \failres$ then
    $\stratappctx{\Gamma;{X\colon
        S'}}{S'}{t}\sred\failres$.  Using the same
    reasoning as for the successful case we obtain
    $\trsred{\TRgGG\cup\TRctx{{\varphi_1}}{\mu
        X.S'}}{\varphi_{\mu X.S'}(t)}{\bot(t)}$.

  \item $\TRctx{\vpg}{S}  \Rightarrow  S$

    If $\trsred{\TRgGG\cup\TRctx{{\varphi_1}}{\mu
        X.S'}}{\varphi_{\mu X.S'}(t)}{u}$ since $t\in\TF$
    contains no $\varphi$ symbols and $\TRgGG$ contains no rules
    potentially rewriting $\varphi_{\mu X.S'}(t)$, the first reduction
    is necessarily $\trsred{\varphi_{\mu X.S'}(Y \at \ap \bot(\any))
      \ra \vpn{S'}(Y)}{\varphi_{\mu X.S'}(t)}{\vpn{S'}(t)}$ and thus
    $\trsred{\TRgGG\cup\TRctx{{\varphi_1}}{\mu
        X.S'}}{\vpn{S'}(t)}{u}$ in strictly less steps than the
    original reduction.  We also have that
    $\TRg{\Gamma;X\colon{S'}}{\Gamma;X\colon{S'}}\cup\TRctx{{\varphi_1}}{S'}=\TRgGG\cup\TRctx{{\varphi_1}}{\mu
      X.S'} \setminus \{\varphi_{\mu X.S'}(\bot(Y))\ra\bot(Y),\varphi_{\mu
      X.S'}(Y\at\ap\bot(\any))\ra\vpn{S'}(Y)\} $ and since $\vpn{S'}(t)$ and
    all its reducts \wrt\ $\TRgGG\cup\TRctx{{\varphi_1}}{\mu
      X.S'}$ contain no $\varphi_{\mu X.S'}$ then
    $\trsred{\TRg{\Gamma;X\colon{S'}}{\Gamma;X\colon{S'}}\cup\TRctx{{\varphi_1}}{S'}}{\varphi_{S'}(t)}{u}$
    (in strictly less steps than the original reduction).  By
    induction, $\stratappctx{\Gamma;{X\colon
        S'}}{S'}{t}\sred u$ and thus
    $\stratappctx{\Gamma}{\mu X.S'}{t}\sred u$.
    
    If $\trsred{\TRgGG\cup\TRctx{{\varphi_1}}{\mu
        X.S'}}{\varphi_{\mu X.S'}(t)}{\bot(t)}$ the first
    reduction is still $\trsred{\varphi_{\mu X.S'}(Y \at \ap
      \bot(\any)) \ra \vpn{S'}(Y)}{\varphi_{\mu X.S'}(t)}{\vpn{S'}(t)}$ and
    thus $\trsred{\TRgGG\cup\TRctx{{\varphi_1}}{\mu
        X.S'}}{\vpn{S'}(t)}{\bot(u)}$ in strictly less steps than
    the original reduction. With the same arguments as before we
    obtain $\stratappctx{\Gamma;{X\colon
        S'}}{S'}{t}\sred \failres$ and thus
    $\stratappctx{\Gamma}{\mu X.S'}{t}\sred \failres$.

  \end{enumerate}

\item $S:=X$

  \begin{enumerate}
  \item $S \Rightarrow  \TRctx{\vpg}{S}$

    Since $\stratappctx{\Gamma;{X\colon S'}}{X}{t}\sred u$ then
    $\stratappctx{\Gamma;{X\colon S'}}{S'}{t}\sred u$ with the latter
    having a shorter derivation tree and thus, by induction,
    $\trsred{\TRg{\Gamma;X\colon{S'}}{\Gamma;X\colon{S'}}\cup\TRctx{{\varphi_1}}{S'}}{\varphi_{S'}(t)}{u}$.
    Since we supposed all bound variables (by a $\mu$ operator or a
    context assignment) to have different names, the only rewrite
    rules in $\TRg{\Gamma;X\colon{S'}}{\Gamma;X\colon{S'}}$ involving
    $\varphi_{X}$ are the ones generated by the context assignment:
    $\varphi_{X}(\bot(Y)) \ra \bot(Y)$ and $\varphi_{X}(Y \at \ap
    \bot(\any)) \ra \vpn{S'}(Y)$.  Consequently, the only possible
    reduction of $\varphi_{X}(t)$ \wrt\
    $\TRg{\Gamma;X\colon{S'}}{\Gamma;X\colon{S'}}$ is obtained by
    applying the latter rule which results in the term $\vpn{S'}(t)$. It
    is easy to check that
    $\TRg{\Gamma;X\colon{S'}}{\Gamma;X\colon{S'}}\cup\TRctx{{\varphi_1}}{S'}=\TRg{\Gamma;X\colon{S'}}{\Gamma;X\colon{S'}}$.
    Consequently
    $\trsred{\TRg{\Gamma;X\colon{S'}}{\Gamma;X\colon{S'}}}{\varphi_{S'}(t)}{u}$
    and thus
    $\trsred{\TRg{\Gamma;X\colon{S'}}{\Gamma;X\colon{S'}}}{\varphi_{X}(t)}{u}$.
    If $\stratappctx{\Gamma;{X\colon S'}}{X}{t}\sred\failres$ then
    $\stratappctx{\Gamma;{X\colon S'}}{S'}{t}\sred\failres$.  Using the
    same reasoning as for the successful case we obtain
    $\trsred{\TRg{\Gamma;X\colon{S'}}{\Gamma;X\colon{S'}}}{\varphi_{X}(t)}{\bot(t)}$.

  \item $\TRctx{\vpg}{S} \Rightarrow S$ 

    If
    $\trsred{\TRg{\Gamma;X\colon{S'}}{\Gamma;X\colon{S'}}}{\varphi_{X}(t)}{u}$
    since $t\in\TF$ contains no $\varphi$ symbols and since, as
    explained before the only rewrite rules in
    $\TRg{\Gamma;X\colon{S'}}{\Gamma;X\colon{S'}}$ involving
    $\varphi_{X}$ are the ones generated by the context assignment,
    the first reduction is necessarily
    $\trsred{\varphi_{X}(Y\at\ap\bot(\any))\ra\vpn{S'}(Y)}{\varphi_{X}(t)}{\vpn{S'}(t)}$
    and thus
    $\trsred{\TRg{\Gamma;X\colon{S'}}{\Gamma;X\colon{S'}}}{\vpn{S'}(t)}{u}$
    in strictly less steps than the original reduction. We have
    $\TRg{\Gamma;X\colon{S'}}{\Gamma;X\colon{S'}}\cup\TRctx{{\varphi_1}}{S'}=\TRg{\Gamma;X\colon{S'}}{\Gamma;X\colon{S'}}$
    and thus
    $\trsred{\TRg{\Gamma;X\colon{S'}}{\Gamma;X\colon{S'}}\cup\TRctx{{\varphi_1}}{S'}}{\vpn{S'}(t)}{u}$ 
    (in strictly less steps than the original reduction).  By
    induction, $\stratappctx{\Gamma;{X\colon S'}}{S'}{t}\sred u$ and
    thus $\stratappctx{\Gamma;{X\colon S'}}{X}{t}\sred u$.

    If
    $\trsred{\TRg{\Gamma;X\colon{S'}}{\Gamma;X\colon{S'}}}{\varphi_{X}(t)}{\bot(t)}$
    the first reduction is still
    $\trsred{\varphi_{X}(Y\at\ap\bot(\any))\ra\vpn{S'}(Y)}{\varphi_{X}(t)}{\vpn{S'}(t)}$
    and thus
    $\trsred{\TRg{\Gamma;X\colon{S'}}{\Gamma;X\colon{S'}}}{\vpn{S'}(t)}{\bot(u)}$
    in strictly less steps than the original reduction.  With the same
    arguments as before we obtain $\stratappctx{\Gamma;{X\colon
        S'}}{S'}{t}\sred \failres$ and thus
    $\stratappctx{\Gamma;{X\colon S'}}{X}{t}\sred\failres$.

  \end{enumerate}

\item $S:=\choice{S_1}{S_2}$

  \begin{enumerate}
  \item $S \Rightarrow  \TRctx{\vpg}{S}$

    If $\stratapp{\choice{S_1}{S_2}}{t}\sred u$ then
    $\stratapp{S_1}{t}\sred{u}$ or, $\stratapp{S_1}{t}\sred{\failres}$ and  $\stratapp{S_2}{t}\sred{r}$.
    In the first case, by induction,  
    $\trsred{\TRgGG \cup \TRctx{{\varphi_1}}{S_1}}{\vpo(t)}{v}$. 
    We have,
    $\trsred{\vpn{\choicesub{S_1}{S_2}}(x\at\ap\dummy)\ra\phich(\vpo(x))}{\vpn{\choice
     {S_1}{S_2}}(t)}{\phich(\vpo(t))}$
    and
    $\trsred{\TRgGG \cup \TR{{\varphi_1}}{S_1}}{\phich(\vpo(t))}{\phich(u)}$.
    Finally, 
    $\trsred{\phich(x \at \ap \bot(\any)) \ra  x}{\phich(u)}{u}$.
    For the second case, by induction,  
    $\trsred{\TRgGG \cup \TRctx{{\varphi_1}}{S_1}}{\vpo(t)}{\bot(t)}$
    and
    $\trsred{\TRgGG \cup \TRctx{{\varphi_2}}{S_2}}{\vpt(t)}{u}$.
    We have, as before,
    $\trsred{\vpn{\choicesub{S_1}{S_2}}(x\at\ap\dummy)\ra\phich(\vpo(x))}{\vpn{\choicesub
     {S_1}{S_2}}(t)}{\phich(\vpo(t))}$ 
    and
    $\trsred{\TRgGG \cup \TRctx{{\varphi_1}}{S_1}}{\phich(\vpo(t))}{\phich(\bot(u))}$. 
    Then,
    $\trsred{\phich(\bot(x))\ra\vpt(x)}{\phich(\bot(u))}{\vpt(u)}$ and finally,
    $\trsred{\TRgGG \cup \TRctx{{\varphi_2}}{S_2}}{\varphi_2(t)}{u}$.

    If $\stratapp{\choice{S_1}{S_2}}{t}\sred \failres$ then
    $\stratapp{S_1}{t}\sred{\failres}$ and $\stratapp{S_2}{t}\sred{\failres}$.
    By induction,
    $\trsred{\TRgGG \cup \TRctx{{\varphi_1}}{S_1}}{\vpo(t)}{\bot(t)}$ and
    $\trsred{\TRgGG \cup \TRctx{{\varphi_2}}{S_2}}{\vpt(t)}{\bot(t)}$.  We have,
    $\trsred{\vpn{\choice{S_1}{S_2}}(x\at\ap\dummy)\ra\phich(\vpo(x))}{\vpn{\choice{S_1}{S_2}}(t)}{\phich(\vpo(t))}$ 
    and
    $\trsred{\TRgGG \cup \TRctx{{\varphi_1}}{S_1}}{\phich(\vpo(t))}{\phich(\bot(u))}$.
    Then,
    $\trsred{\phich(\bot(x))\ra\vpt(x)}{\phich(\bot(u))}{\vpt(u)}$ and
    finally, $\trsred{\TRgGG \cup \TRctx{{\varphi_2}}{S_2}}{\varphi_2(t)}{\bot(u)}$.

  \item $\TRctx{\vpg}{S}  \Rightarrow  S$

    If $\trsred{\TRgGG \cup
      \TRctx{\vpn{\choice{S_1}{S_2}}}{\choice{S_1}{S_2}}}{\vpn{\choicesub{S_1}{S_2}}(t)}{u}$,
    since $t\in\TF$ contains no $\varphi$ symbols, the first reduction is
    necessarily
    $\trsred{\vpn{\choicesub{S_1}{S_2}}(x\at\ap\dummy)\ra\phich(\vpo(x))}{\vpn{\choicesub
     {S_1}{S_2}}(t)}{\phich(\vpo(t))}$.
    No rule can be applied at the top position until $\vpo(t)$ is
    reduced to a term in $\TF$ or to a term of the form
    $\bot(\any)$. In the former case, we can only have
    $\trsred{\TRgGG \cup \TRctx{{\varphi_1}}{S_1}}{\vpo(t)}{u}$ in less steps
    than the original reduction, so by
    induction, $\stratapp{S_1}{t}\sred{u}$. Consequently,
    $\stratapp{\choice{S_1}{S_2}}{t}\sred u$. In the latter case, we
    necessarily have (Lemma~\ref{lemma:sameFailure})
    $\trsred{\TRgGG \cup \TRctx{{\varphi_1}}{S_1}}{\vpo(t)}{\bot(t)}$ and
    $\trsred{\TRgGG \cup
      \TRctx{{\varphi_1}}{S_1}}{\phich(\vpo(t))}{\phich(\bot(t))}$. Then 
    we have 
    $\trsred{\phich(\bot(x))\ra\vpt(x)}{\phich(\bot(t))}{\vpt(t)}$, which can
    only be reduced as
    $\trsred{\TRgGG \cup \TRctx{{\varphi_2}}{S_2}}{\vpt(t)}{u}$. By induction,
    $\stratapp{S_1}{t}\sred{\failres}$ and $\stratapp{S_2}{t}\sred{u}$  and consequently,
    $\stratapp{\choice{S_1}{S_2}}{t}\sred u$.

    If $\trsred{\TRgGG \cup
      \TRctx{\vpn{\choicesub{S_1}{S_2}}}{\choice{S_1}{S_2}}}{\vpn{\choicesub{S_1}{S_2}}(t)}{\bot(t)}$,
    since $\vpn{\choicesub{S_1}{S_2}}(t)$ is in normal form w.r.t. $\TRgGG \cup
    \TRctx{{\varphi_1}}{S_1}$ and $\TRgGG \cup \TRctx{{\varphi_2}}{S_2}$
    (Lemma~\ref{lemma:rigidTF}), the first reduction is necessarily
    $\trsred{\vpn{\choicesub{S_1}{S_2}}(x\at\ap\dummy)\ra\phich(\vpo(x))}{\vpn{\choicesub
        {S_1}{S_2}}(t)}{\phich(\vpo(t))}$.  No rule can be applied at the top
    position until $\vpo(t)$ is reduced to a term in $\TF$ or to a term of the
    form $\bot(\any)$. If it is a term in $\TF$ then the rule
    $\phich(x\at\ap\bot(\any))\ra{x}$ is the only one that can be applied and
    the final result is not $\bot(t)$. Thus, $\trsred{\TRgGG \cup
      \TRctx{{\varphi_1}}{S_1}}{\vpo(t)}{\bot(t)}$ (in less steps than the
    original reduction) and $\trsred{\TRgGG \cup
      \TRctx{{\varphi_1}}{S_1}}{\phich(\vpo(t))}{\phich(\bot(t))}$.  The only
    possible reduction is
    $\trsred{\phich(\bot(x))\ra\vpt(x)}{\phich(\bot(t))}{\vpt(t)}$ and then
    $\trsred{\TRgGG \cup \TRctx{{\varphi_2}}{S_2}}{\vpt(t)}{\bot(t)}$ in less
    steps than the original reduction.  By induction,
    $\stratapp{S_1}{t}\sred{\failres}$ and $\stratapp{S_2}{t}\sred{\failres}$
    and consequently, $\stratapp{\choice{S_1}{S_2}}{t}\sred \failres$.

  \end{enumerate}

\item $S:= \all(S')$
  \begin{enumerate}
  \item $S \Rightarrow  \TRctx{\vpg}{S}$

    If $\stratapp{ \all(S')}{t}\sred u$ with $t=f(t_1,\ldots,t_n)$ then
    $\forall i \in {[}1{,}n{]}, \stratapp{S'}{t_i}\sred{u_i}$ and
    $u=f(u_1,\ldots,u_n)$.  Then, by induction,
    $\trsred{\TRgGG \cup \TRctx{{\vpS}}{S'}}{\vpn{S'}(t_i)}{u_i}$.  When we apply the
    rules corresponding to the encoding of $\all$ we obtain
    \begin{align*}
    \trsrednl {\vpn{\all(S')}(f(x_1,\ldots,x_n)) \ra 
      \psiF(\vpn{S'}(x_1),\ldots,\vpn{S'}(x_n),f(x_1,\ldots,x_n))}{\vpn{\all(S')}(f(t_1,\ldots,t_n))}
    { \psiF(\vpn{S'}(t_1),\ldots,\vpn{S'}(t_n),f(t_1,\ldots,t_n))}
    \end{align*}
    and no other rule can be applied at the top position for this
    latter term. By the induction hypothesis we have
    $\trsredstar{\TRgGG \cup \TRctx{{\vpn{S'}}}{S'}^{*}}{\psiF(\vpn{S'}(t_1),\ldots,\vpn{S'}(t_n),f(t_1,\ldots,t_n))}{\psiF(u_1,\ldots,u_n,f(t_1,\ldots,t_n))}$
    and subsequently
    \begin{align*}
      \trsrednl{\psiF(x_1\at\ap\bot(\any),\ldots,x_n\at\ap\bot(\any),\any)\ra
      f(x_1,\ldots,x_n)}
    {\psiF(u_1,\ldots,u_n,f(t_1,\ldots,t_n))}{f(u_1,\ldots,u_n)}
    \end{align*}
    If $\stratapp{ \all(S')}{t}\sred \failres$ with $t=f(t_1,\ldots,t_n)$
    then $\exists i\in{[}1{,}n{]},\step{\stratapp{S'}{t_i}}{\failres}$.
    By induction, $\trsred{\TRgGG \cup \TRctx{{\vpS}}{S'}}{\vpn{S'}(t_i)}{\bot(t_i)}$.  As
    before, when we apply the rules corresponding to the encoding of
    $\all$ we obtain
    \begin{align*}
      \trsrednl{\vpn{\all(S')}(f(x_1,\ldots,x_n))\ra\psiF(\vpn{S'}(x_1),\ldots,\vpn{S'}(x_n),f(x_1,\ldots,x_n))}{\vpn{\all(S')}(f(t_1,\ldots,t_n))}{\psiF(\vpn{S'}(t_1),\ldots,\vpn{S'}(t_n),f(t_1,\ldots,t_n))}
    \end{align*}
    and using induction
\begin{align*}
    \qquad\qquad\qquad\ \,\psiF(\vpn{S'}(t_1),\ldots,\vpn{S'}(t_i),\ldots,&\vpn{S'}(t_n),f(t_1,\ldots,t_n))\cr
    &\trsredstar
   {\TRgGG \cup\TRctx{{\vpS}}{S'}}
   {}
   {\psiF(u_1,\ldots,\bot(t_i),\ldots,u_n,f(t_1,\ldots,t_n))\,.}
\end{align*}
    We have then
    \begin{align*}
    \trsrednl{\psiF(\bot(\any),\ldots,\any,z)\ra\bot(z)}{\psiF(u_1,\ldots,\bot(t_i),\ldots,u_n,f(t_1,\ldots,t_n))}{\bot(f(t_1,\ldots,t_n))} 
    \end{align*}
    as required. 

    If $t$ is a constant $c$ then $\stratapp{ \all(S')}{c}\sred c$.  We
    have
    $\trsred{\vpn{\all(S')}(c)\ra\varphi_{c}(c)}{\vpn{\all(S')}(c)}{\varphi_{c}(c)}$
    and $\trsred{\varphi_{c}(\any)\ra c}{\varphi_{c}(c)}{c}$.

  \item $\TRctx{\vpg}{S}  \Rightarrow  S$

    We consider $t=f(t_1,\ldots,t_n)$ and handle the case of a
    constant later on.
    If $\trsred{\TRgGG \cup \TRctx{\vpg}{\all(S')}}{\vpn{\all(S')}(t)}{u}$, since $t\in\TF$
    contains no $\varphi$ symbols, we have first
    $\trsred{\vpn{\all(S')}(f(x_1,\ldots,x_n))\ra\psiF(\vpn{S'}(x_1),\ldots,\vpn{S'}(x_n),f(x_1,\ldots,x_n))}{\vpn{\all(S')}(f(t_1,\ldots,t_n))}{\psiF(\vpn{S'}(t_1),\ldots,\vpn{S'}(t_n),f(t_1,\ldots,t_n))}$.
    No rule can be applied at the top position until the $\vpn{S'}(t_i)$
    are reduced to terms in $\TF$ or at least one of them is reduced
    to a term of the form $\bot(\any)$.
    The former case holds only if 
    $\trsred{\TRgGG \cup \TRctx{{\vpS}}{S'}}{\vpn{S'}(t_i)}{u_i}$ (in less steps
    than the original reduction) and 
    \[
      \hspace{6em}\trsredstar{\TRgGG \cup
      \TRctx{{\vpS}}{S'}^{*}}{\psiF(\vpn{S'}(t_1),\ldots,\vpn{S'}(t_n),f(t_1,\ldots,t_n))}{\psiF(u_1,\ldots,u_n,f(t_1,\ldots,t_n))}.
    \]
    This term can be further reduced only by the rule
    $\psiF(x_1\at\ap\bot(\any),\ldots,x_n\at\ap\bot(\any),\any) \ra
    f(x_1,\ldots,x_n)$ to $f(u_1,\ldots,u_n)=u$. By induction,
    $\stratapp{S'}{t_i}\sred{u_i}$.  Consequently, $\stratapp{\all(S')}{t}\sred
    u$.
    In the latter case, we necessarily have (Lemmas~\ref{lemma:sameFailure})
    $\trsred{\TRgGG \cup \TRctx{{\vpS}}{S'}}{\vpn{S'}(t_i)}{\bot(t_i)}$ for some
    $i$, and then
    \begin{align*}
      \trsrednl{\hspace{1em}\psiF(\ldots,\bot(\any),\ldots,\any,z)\ra\bot(z)}{\psiF(\vpn{S'}(t_1),\ldots,
        \bot(t_i), \ldots,
        \vpn{S'}(t_n),f(t_1,\ldots,t_n))}{\bot(f(t_1,\ldots,t_n))}
    \end{align*}
    We obtain a term not in $\TF$, hence a contradiction with the original
    hypothesis.

    If $\trsred{\TRgGG \cup \TRctx{\vpg}{\all(S')}}{\vpn{\all(S')}(t)}{\bot(t)}$, since
    $\vpn{\all(S')}(t)$ is in normal form w.r.t. $\TRgGG \cup \TRctx{{\vpS}}{S'}$
    (Lemma~\ref{lemma:rigidTF}), the first reduction can only be,
    as before,
    \begin{align*}
      \trsrednl{\vpn{\all(S')}(f(x_1,\ldots,x_n))\ra\psiF(\vpn{S'}(x_1),\ldots,\vpn{S'}(x_n),f(x_1,\ldots,x_n))}{\vpn{\all(S')}(f(t_1,\ldots,t_n))}{\psiF(\vpn{S'}(t_1),\ldots,\vpn{S'}(t_n),f(t_1,\ldots,t_n))}
    \end{align*}
    and no rule can be applied at the top position until the $\vpn{S'}(t_i)$ are
    reduced to terms in $\TF$ or at least one of them is reduced to a term of
    the form $\bot(\any)$. We already handled the former case just before. As we
    have seen, for the latter case we have $\trsred{\TRgGG \cup
      \TRctx{{\vpS}}{S'}}{\vpn{S'}(t_i)}{\bot(t_i)}$ for some $i$ in less steps
    than the original reduction, and eventually
    $\trsred{\psiF(\ldots,\bot(\any),\ldots,\any,z)\ra\bot(z)}{\psiF(\vpn{S'}(t_1),\ldots,\vpn{S'}(t_n),f(t_1,\ldots,t_n))}{\bot(f(t_1,\ldots,t_n))}$.
    By induction, $\stratapp{S'}{t_i}\sred{\failres}$ and thus
    $\stratapp{\all(S')}{f(t_1,\ldots,t_n)}\sred{\failres}$.

    If $t$ is a constant $c$ then
    $\trsred{\vpn{\all(S')}(c)\ra\varphi_{c}(c)}{\vpn{\all(S')}(c)}{\varphi_{c}(c)}$
    and $\trsred{\varphi_{c}(\any)\ra c}{\varphi_{c}(c)}{c}$.  On the
    other hand we have $\stratapp{\all(S')}{c}\sred c$ which confirms the
    property.

  \end{enumerate}

\item $S:= \one(S')$

  \begin{enumerate}
  \item $S \Rightarrow  \TRctx{\vpg}{S}$

    If $\stratapp{ \one(S')}{t}\sred u$ with $t=f(t_1,\ldots,t_n)$ then
    $\exists i \in {[}1{,}n{]}, \stratapp{S}{t_i}\sred{u_i}$ and
    $u=f(t_1,\ldots,u_i\ldots,t_n)$.  Then, by induction,
    $\trsred{\TRgGG \cup \TRctx{{\vpS}}{S'}}{\vpn{S'}(t_i)}{u_i}$.
    The encoding implements a leftmost behaviour for $\one$, \ie it
    supposes that $\forall j<i, \stratapp{S'}{t_j}\sred{\failres}$ and
    if we suppose that this assumption holds in our case then, by induction,
    $\trsred{\TRgGG \cup \TRctx{{\vpS}}{S'}}{\vpn{S'}(t_j)}{\bot(t_j)}$ for all $j<i$.
    When we apply the rules corresponding to the encoding of $\one$ we
    obtain
    $\trsred{\vpn{\one(S')}(f(x_1,\ldots,x_n)) \ra
      \psif{1}(\vpn{S'}(x_1),x_2,\ldots,x_n)}{\vpn{\one(S')}(f(t_1,\ldots,t_n))}{\psif{1}(\vpn{S'}(t_1),t_2,\ldots,t_n)}$.
    and no other rule can be applied at the top position for this
    latter term until $\vpn{S'}(t_1)$ is reduced to a term in $\TF$ or to a term of the form
    $\bot(\any)$. 
    We have
\[\trsredstar{\TRgGG \cup
    \TRctx{{\vpS}}{S'}}{\psif{1}(\vpn{S'}(t_1),\ldots,t_n)}{\psif{1}(\bot(t_1),\ldots,t_n)}
\]
    and then 
\begin{align*}
  \trsrednl
  {\psif{1}(\bot(x_1),x_{2},\ldots,x_n) \ra \psif{2}(\bot(x_1),\vpn{S'}(x_{2}),\ldots,x_n)}
  {\psif{1}(\bot(t_1),\ldots,t_n)}
  {\psif{2}(\bot(t_1),\vpn{S'}(t_2)\ldots,t_n)\,.}
\end{align*}
    Repeating the reasoning for all $j < i$, we eventually get
\[\trsredstar{\TRgGG \cup \TRctx{{\vpS}}{S'}^{*}}{\psif{1}(\vpn{S'}(t_1),\ldots,t_n)}{\psif{i}(\bot(t_1),\ldots,\bot(t_{i-1}),u_i\ldots,t_n)}\]
    and then  
    \begin{align*}
    \trsrednl
    {\psif{i}(\bot(x_1),\ldots,\bot(x_{i-1}),x_i\at\ap\bot(\any),x_{i+1},\ldots,x_n) \ra f(x_1,\ldots,x_n)}
    {\psif{i}(\bot(t_1),\ldots,\bot(t_{i-1}),u_i\ldots,t_n)}
    {\psif{i}(t_1,\ldots,u_i,\ldots,t_n)}
    \end{align*}
    as required.

    If $\stratapp{\one(S')}{t}\sred \failres$ with $t=f(t_1,\ldots,t_n)$
    then $\forall i\in{[}1{,}n{]},\step{\stratapp{S'}{t_i}}{\failres}$.
    By induction, $\trsred{\TRgGG \cup \TRctx{{\vpS}}{S'}}{\vpn{S'}(t_i)}{\bot(t_i)}$.  As
    before, applying the rules corresponding to the encoding of
    $\one$ gives
    $\trsredstar{\TRgGG \cup \TRctx{{\vpS}}{S'}^{*}}{\vpn{\one(S')}(f(x_1,\ldots,x_n))}{\psif{n}(\bot(t_1),\ldots,\bot(t_n))}$
    and then 
    \begin{align*}
      \trsrednl{\psif{n}(\bot(x_1),\ldots,\bot(x_n)) \ra
        \bot(f(x_1,\ldots,x_n))}{\psif{n}(\bot(t_1),\ldots,\bot(t_n))}{\bot(f(t_1,\ldots,t_n))}
    \end{align*}
    as wished. 

    If $t$ is a constant $c$ then $\stratapp{\one(S')}{c}\sred \failres$, and we
    have $\trsred{\vpn{\one(S')}(c) \ra \bot(c)}{\vpn{(\one(S'))}(c)}{\bot(c)}$
    as well.

  \item $\TRctx{\vpg}{S}  \Rightarrow  S$

    If $\trsred{\TRgGG \cup \TRctx{\vpg}{\one(S')}}{\vpn{\one(S')}(t)}{u}$, then
    $t=f(t_1,\ldots,t_n)$, because if $t$ is a constant $c$ then we would
    necessarily have $\trsred{\vpn{\one(S')}(c) \ra
      \bot(c)}{\vpn{\one(S')}(c)}{\bot(c)}$, which contradicts the hypothesis.
    Because $t\in\TF$ contains no $\varphi$ symbols, the first reduction is
    necessarily 
\[\trsrednl{\vpn{\one(S')}(f(x_1,\ldots,x_n)) \ra
      \psif{1}(\vpn{S'}(x_1),x_2,\ldots,x_n)}{\vpn{\one(S')}(f(t_1,\ldots,t_n))}{\psif{1}(\vpn{S'}(t_1),t_2,\ldots,t_n)\,.}
\]
    No rule can be applied at the top position until the $\vpn{S'}(t_1)$
    is reduced to a term in $\TF$ or to a term of the form
    $\bot(\any)$.
    In the former case, we have
    $\trsred{\TRgGG \cup \TRctx{{\vpS}}{S'}}{\vpn{S'}(t_1)}{u_1}$ in less steps
    than the original reduction, and also
    $\trsredstar{\TRgGG \cup
      \TRctx{{\vpS}}{S'}}{\psif{1}(\vpn{S'}(t_1),\ldots,t_n)}{\psif{1}(u_1,\ldots,t_n)}$ 
    which can be further reduced only by the rule
    $\psif{1}(x_1\at\ap\bot(\any),x_{2},\ldots,x_n) \ra
    f(x_1,\ldots,x_n)$ to $f(u_1,\ldots,t_n)=u$.  By induction,
    $\stratapp{S'}{t_1}\sred{u_1}$, and consequently, $\stratapp{\one(S')}{t}\sred
    u$.
    In the latter case, we necessarily have
    (Lemma~\ref{lemma:sameFailure})
    $\trsred{\TRgGG \cup \TRctx{{\vpS}}{S'}}{\vpn{S'}(t_1)}{\bot(t_1)}$ and
    $\trsredstar{\TRgGG \cup \TRctx{{\vpS}}{S'}}{\psif{1}(\vpn{S'}(t_1),\ldots,t_n)}{\psif{1}(\bot(t_1),\ldots,t_n)}$
    which can be further reduced only by
    $\psif{1}(\bot(x_1),x_{2},\ldots,x_n) \ra
    \psif{2}(\bot(x_1),\vpn{S'}(x_{2}),\ldots,x_n)$ to
    $f(\bot(t_1),\vpn{S'}(t_{2}),\ldots,t_n)$.  Once again, no rule can be
    applied at the top position until the $\vpn{S'}(t_2)$ is reduced to a
    term in $\TF$, in which case we conclude as before, or to a term
    of the form $\bot(\any)$, in which case we continue the same way
    and we eventually get a term of the form
    $\psif{i}(\bot(t_1),\ldots,\bot(t_{i-1}),u_i,t_{i+1},\ldots,t_n)$,
    then reduced by
\[\psif{i}(\bot(x_1),\ldots,\bot(x_{i-1}),x_i\at\ap\bot(\any),x_{i+1},\ldots,x_n)
    \ra f(x_1,\ldots,x_n)
\]
    to
    $f(t_1,\ldots,u_i,t_{i+1},\ldots,t_n)=u$.  In this case we have,
    by induction, $\stratapp{S'}{t_j}\sred{\bot(t_j)}$ for all $j<i$ and
    $\stratapp{S'}{t_i}\sred{u_i}$, and thus, $\stratapp{\one(S')}{t}\sred{u}$.
    Note that we can always get an $u_i\in\TF$ at some point since
    otherwise we would eventually obtain the term
    $\psif{n}(\bot(t_1),\ldots,\bot(t_n))$ which can be reduce only to
    $\bot(f(t_1,\ldots,t_n))$ which is not a term in $\TF$ and thus
    contradicts the original hypothesis.

    If $\trsred{\TRgGG \cup
      \TRctx{\vpg}{\one(S')}}{\vpn{\one(S')}(t)}{\bot(t)}$ and $t$  is not a
    constant, because
    $\vpn{\one(S')}(t)$ is in normal form w.r.t. $\TRgGG \cup
    \TRctx{{\vpS}}{S'}$ (Lemma~\ref{lemma:rigidTF}), the first reduction can
    only be 
\begin{align*}
\trsrednl{\vpn{\one(S')}(f(x_1,\ldots,x_n)) \ra
      \psif{1}(\vpn{S'}(x_1),x_2,\ldots,x_n)}{\vpn{\one(S')}(f(t_1,\ldots,t_n))}{\psif{1}(\vpn{S'}(t_1),t_2,\ldots,t_n)}
\end{align*}
    and no rule can be applied at the top position until the $\vpn{S'}(t_1)$ is
    reduced to a term in $\TF$ or to a term of the form $\bot(\any)$.  As we
    have seen just before, the former case leads to a term in $\TF$
    which does not correspond to our hypothesis. We have also already handled
    the second case but we supposed that at some point we have $\trsred{\TRgGG
      \cup \TRctx{{\vpS}}{S'}}{\vpn{S'}(t_i)}{u_i}$ which would lead to an
    eventual reduction to a term in $\TF$ which, once again, does not correspond
    to our hypothesis. The only remaining possibility is $\trsred{\TRgGG \cup
      \TRctx{{\vpS}}{S'}}{\vpn{S'}(t_i)}{\bot(\any)}$ for all $i\leq n$ and
    thus, by Lemma~\ref{lemma:sameFailure}, $\trsred{\TRgGG \cup
      \TRctx{{\vpS}}{S'}}{\vpn{S'}(t_i)}{\bot(t_i)}$ (in less steps than the
    original reduction) for all $i\leq n$. In this
    case, we obtain $\trsred{\psif{n}(\bot(x_1),\ldots,\bot(x_n)) \ra
      \bot(f(x_1,\ldots,x_n))}{\psif{n}(\bot(t_1),\ldots,\bot(t_n))}{\bot(f(t_1,\ldots,t_n))}$.
    By induction, $\stratapp{S'}{t_i}\sred{\failres}$ and thus
    $\stratapp{\one(S')}{f(t_1,\ldots,t_n)}\sred{\failres}$.

    If $t$ is a constant $c$ then $\trsred{\vpn{\one(S')}(c) \ra
      \bot(c)}{\vpn{\one(S')}(c)}{\bot(c)}$.  We also have
    $\stratapp{\one(S')}{c}\sred \failres$.\qedhere
  \end{enumerate}
\end{enumerate}
\end{proof}

\confluence*
\begin{proof}
  We consider two subsets of $\TRctx{\Gamma}{S}\cup\TRg{\Gamma}{\Gamma}$: a set $T_1$
  consisting of the rewrite rules in $\TRctx{\Gamma}{S}\cup\TRg{\Gamma}{\Gamma}$ obtained by
  expanding the rule schemas of the form
  $\varphi_{l\ra{r}}(\Xx\at\ap{l})\ra\bot(\Xx)$ and a set $T_2$
  consisting of the remaining rules in $\TRctx{\Gamma}{S}\cup\TRg{\Gamma}{\Gamma}$. Keeping in
  mind that all $\varphi$ symbols are freshly generated we can easily
  notice that $T_2$ is orthogonal and thus confluent. $T_1$ is linear
  but not orthogonal because of possible critical pairs between rules
  generated for a rewrite rule $l\ra{r}$ by
  $\varphi_{l\ra{r}}(\Xx\at\ap{l})\ra\bot(\Xx)$; there are no critical
  pairs between the rules generated for different rewrite rules
  because of the different $\varphi$ head symbol in the left-hand
  sides of the corresponding rules. All these critical pairs are
  trivially joinable and since $T_1$ is also obviously terminating
  then it is also confluent.
  Since $T_1$ and $T_2$ are orthogonal to each other, {\ie} there
  is no overlap between a rule from $T_1$ and one from $T_2$, then $T_1\cup
  T_2=\TRctx{\Gamma}{S}\cup\TRg{\Gamma}{\Gamma}$ is confluent~\cite{Ohlebusch2002}.  
\end{proof}

\section{Proofs for the Meta-Encoding}
\label{se:appendixMeta}

\begin{figure}[!tp]
\begin{mathpar}
\begin{array}{lr@{\hspace{5pt}}c@{\hspace{5pt}}l}
  \mathbf{(ME1)}&
  \TRm{\id} & =  & \{  ~ \varphi_{\id}(\Xx \at \appl{\any}{\any}) \ra \Xx, 
                                  \quad
                                \varphi_{\id}(\bot(\Xx)) \ra \bot(\Xx)~\} \\[+4pt]
\mathbf{(ME2)}&
  \TRm{\fail} & = & \{ ~ \varphi_{\fail}(\Xx \at \appl{\any}{\any}) \ra \bot(\Xx),
                                  \quad
                                \varphi_{\fail}(\bot(\Xx)) \ra \bot(\Xx) ~\} \\ [+4pt]
\mathbf{(ME3)}&
\TRm{l \ra r} & = & \{ ~   \varphi_{l\ra r}(\symb{l}) \ra \symb{r}, 
                                \quad
                                \varphi_{l\ra r}(\Xx \at \symb{\ap l}) \ra  \bot(\Xx),\\[+2pt]
                          &&&  ~~~~~ 
                              \varphi_{l\ra r}(\bot(\Xx)) \ra \bot(\Xx)~~\} 
                              \\[+4pt]
\mathbf{(ME4)}&
  \TRm{\seq{S_1}{S_2}} & = &  \TRm{S_1} ~\cup~ \TRm{S_2}  \\[+2pt]
                          &\bigcup & \{ &  \varphi_{\seqsub{S_1}{S_2}}(\Xx \at \appl{\any}{\any}) \ra \phiseq( \vpt(\vpo(\Xx)), \Xx) ,\\[+2pt] 
                          &&& \varphi_{\seqsub{S_1}{S_2}}(\bot(\Xx)) \ra \bot(\Xx) , \\[+2pt]
                          &&& \phiseq(\Xx \at \appl{\any}{\any}, \any) \ra  \Xx, 
                          \quad
                          \phiseq(\bot(\any), \Xx) \ra  \bot(\Xx) ~~\} 
                          \\[+4pt]
\mathbf{(ME5)}&
  \TRm{\choice{S_1}{S_2}} & =  & \TRm{S_1} ~\cup~ \TRm{S_2}  \\[+2pt]
                           &\bigcup & \{ &  \varphi_{\choicesub{S_1}{S_2}}(\Xx \at \appl{\any}{\any}) \ra  \phich(\vpo(\Xx)) , \\[+2pt]
                           &&&
                           \varphi_{\choicesub{S_1}{S_2}}(\bot(\Xx)) \ra \bot(\Xx) , \\[+2pt]
                           &&&  \phich(\bot(\Xx)) \ra  \vpt(\Xx) , 
                           \quad
                           \phich(\Xx \at \appl{\any}{\any}) \ra  \Xx ~~\}
                           \\[+4pt]
\mathbf{(ME6)}&
  \TRm{\mu X\dotsym S} & =  & \TRm{S} \\[+2pt]
                 &\bigcup & \{ &  \varphi_{\mu X\dotsym S}(\Xx \at \appl{\any}{\any}) \ra  \vpS(\Xx) ,
                 \quad
                 \varphi_{\mu X\dotsym S}(\bot(\Xx)) \ra \bot(\Xx),  \\[+2pt]
                 &&&  \varphi_{X}(\Xx \at \appl{\any}{\any}) \ra  \vpS(\Xx) , 
                 \quad
                 \varphi_{X}(\bot(\Xx)) \ra \bot(\Xx)  ~~\} 
                 \\[+4pt]
\mathbf{(ME7)}&
  \TRm{X} & =  & \emptyset
  \\[+4pt]
\\[+1pt]
\mathbf{(ME8)}&
\TRm{\all(S)} & =  & \TRm{S} ~\cup~ \TRlists  \\[+2pt]
&\bigcup & \{ &  \varphi_{\all(S)}(\bot(\Xx)) \ra \bot(\Xx) , \\[+2pt]
&&& \varphi_{\all(S)}(\appl{\f}{\args}) \ra \propag{\appl{\f}{\varphi^{list}_{\all(S)}(\args)}} , \\[+2pt]
&&& \varphi^{list}_{\all(S)}(\cons{\head}{\tail}) \ra \varphi'_{\all(S)}(\vpS(\head),\tail,\cons{\head}{\nil},\nil), \\[+2pt]
&&& \varphi^{list}_{\all(S)}(\nil) \ra \nil, \\[+2pt]
&&& \varphi'_{\all(S)}(\bot(\any),\todo,\revtried,\any) \ra \botlist{\rconcat{\revtried}{\todo}}, \\[+2pt]
&&& \varphi'_{\all(S)}(\Xx \at \appl{\any}{\any},\nil,\any,\revdone) \ra \reverse{\cons{\Xx}{\revdone}}, \\[+2pt]
&&& \varphi'_{\all(S)}(\Xx \at \appl{\any}{\any},\cons{\head}{\tail},\revtried,\revdone) \ra \\[+2pt]
&&& \multicolumn{1}{r}{\varphi'_{\all(S)}(\vpS(\head), \tail, \cons{\head}{\revtried}, \cons{\Xx}{\revdone}) ~~\}} \\[+2pt]
    \\[+4pt]
\mathbf{(ME9)}&
  \TRm{\one(S)} & = & \TRm{S}~\cup~ \TRlists \\[+2pt]
&\bigcup & \{ &  \varphi_{\one(S)}(\bot(\Xx)) \ra \bot(\Xx) , \\[+2pt]
&&& \varphi_{\one(S)}(\appl{\f}{\args}) \ra \propag{\appl{\f}{\varphi^{list}_{\one(S)}(\args)}} , \\[+2pt]
&&& \varphi^{list}_{\one(S)}(\cons{\head}{\tail}) \ra  \varphi'_{\one(S)}(\vpS(\head),\tail,\cons{\head}{\nil}), \\[+2pt]
&&& \varphi^{list}_{\one(S)}(\nil) \ra \botlist{\nil}, \\[+2pt]
&&& \varphi'_{\one(S)}(\bot(\any),\nil, \revtried) \ra \botlist{\reverse{\revtried}}, \\[+2pt]
&&& \varphi'_{\one(S)}(\bot(\any),\cons{\head}{\tail}, \revtried) \ra \varphi'_{\one(S)}(\vpS(\head),\tail,\cons{\head}{\revtried}), \\[+2pt]
&&& \varphi'_{\one(S)}(\Xx \at \appl{\any}{\any},\todo,\cons{\any}{\revtried})
\ra \\[+2pt]
&&& \multicolumn{1}{r}{\rconcat{\revtried}{\cons{\Xx}{\todo}}}
~~\} \\[+2pt]
  \\[+4pt]
  \mathbf{(ME10)}&
  \TRmg{\Gamma; {X\colon S}} & =  &  \TRmg{\Gamma} ~\cup~ \TRm{S}             \\[+2pt]
                           &\bigcup &  \{  & \varphi_{X}(\Xx \at \appl{\any}{\any}) \ra  \vpS(\Xx) , 
                           \quad
                           \varphi_{X}(\bot(\Xx)) \ra \bot(\Xx)   ~~\}
  \\[+4pt]
  \mathbf{(ME11)}&
  \TRmg{\emptyctx} & =  &  \emptyset
\end{array}
\end{mathpar}
\caption{\label{fig:metaencoding-complete}Strategy translation for meta-encoded terms;
  $\TRlists$ is defined in Figure~\ref{fig:aux-symbols}.}
\end{figure}

\begin{lem}
  \label{lemma:all-success}
  Given terms $t_1 \ldots t_n \in \TF$, $S$, and $\Gamma$
  such that $\varS S \subseteq \domS \Gamma$, $\trsred{\TRm{S}
    \cup \TRmg{\Gamma}}{\vpn{S}(\symb {t_i})}{\symb{t'_i}}$ with $\symb{t'_i}
  \neq \bot(\symb{t_i})$ for all $i$ iff
  \begin{multline*}
    \trsrednl{\TRm{\all(S)}
      \cup \TRmg{\Gamma}}{\varphi'_{\all(S)}(\varphi_{S}(\symb{t_1}), \symb{t_2} \conss \ldots \conss
      \symb{t_n} \conss \nil, \symb{t_1} \conss \nil, \nil)}{\symb{t'_1} \conss \ldots \conss
      \symb{t'_n} \conss \nil}.
  \end{multline*}
\end{lem}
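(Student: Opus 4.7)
The plan is to prove the equivalence by induction on $n$, after suitably generalizing the statement so that the accumulators $\revtried$ and $\revdone$ may carry arbitrary content, which is what the induction hypothesis will require. Concretely, I will state and prove that for any meta-lists $R, D$, we have $\trsred{\TRm{S} \cup \TRmg{\Gamma}}{\vpn{S}(\symb{t_i})}{\symb{t'_i}}$ with $\symb{t'_i} \neq \bot(\symb{t_i})$ for all $i \in [1,n]$ iff $\varphi'_{\all(S)}(\varphi_S(\symb{t_1}), \symb{t_2} \conss \ldots \conss \symb{t_n} \conss \nil, \symb{t_1} \conss R, D)$ reduces to $\reverse{\symb{t'_n} \conss \ldots \conss \symb{t'_1} \conss D}$, which via $\TRlists$ then evaluates to the reversed-and-appended list. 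Specializing to $R = D = \nil$ recovers the stated lemma (using the $\rawrev$/$\rawappend$ rules of Figure~\ref{fig:aux-symbols}).

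For the forward direction (base case $n=1$), by confluence (Lemma~\ref{lemma:confluentEncoding}) I may first reduce $\vpn S(\symb{t_1})$ to $\symb{t'_1}$; since $\symb{t'_1} \neq \bot(\symb{t_1})$ and $t'_1 \in \TF$ (by Lemma~\ref{lem:meta}), necessarily $\symb{t'_1}$ has the form $\appl{\any}{\any}$, so the rule $\varphi'_{\all(S)}(\Xx \at \appl{\any}{\any}, \nil, \any, \revdone) \ra \reverse{\cons{\Xx}{\revdone}}$ applies. In the inductive step, I similarly reduce $\vpn S(\symb{t_1})$ to $\symb{t'_1}$ of the form $\appl{\any}{\any}$, fire the step rule $\varphi'_{\all(S)}(\Xx \at \appl{\any}{\any}, \cons{\head}{\tail}, \revtried, \revdone) \ra \varphi'_{\all(S)}(\vpS(\head), \tail, \cons{\head}{\revtried}, \cons{\Xx}{\revdone})$, and invoke the induction hypothesis on the tail with updated accumulators.

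For the backward direction, I argue by induction on $n$ that the final reduct has the shape $\symb{t'_1} \conss \ldots \conss \symb{t'_n} \conss \nil$ only if the rule $\varphi'_{\all(S)}(\bot(\any), \todo, \revtried, \any) \ra \botlist{\rconcat{\revtried}{\todo}}$ never fires. Indeed, once $\botlist{\cdot}$ appears, Lemma~\ref{lemma:rigidTF} (transposed to the meta-setting) together with the $\rawpropag$ rules of Figure~\ref{fig:aux-symbols} shows that it propagates inevitably and the outer reduction of $\varphi_{\all(S)}(\appl f \args)$ ultimately produces $\bot(\appl f \args)$, not a proper list. Hence each $\vpn S(\symb{t_i})$ must reduce to a meta-term $\symb{t'_i}$ of the form $\appl{\any}{\any}$, yielding in particular $\symb{t'_i} \neq \bot(\symb{t_i})$; the fact that this $\symb{t'_i}$ is the $i$-th component of the final list follows by tracking the accumulators through the induction.

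The main obstacle will be the backward direction: I must show that each intermediate term $\vpn{S}(\symb{t_i})$ really does reach a non-failure normal form, rather than contributing a failure that is later masked. This relies crucially on the observation that the only $\varphi'_{\all(S)}$ rule applicable to a $\bot$-headed first argument is the failure rule, and that $\botlist{\cdot}$ cannot be consumed by any $\TRlists$ or strategy rule except via $\rawpropag$, which converts it into a head-$\bot$ term; therefore a $\bot$ anywhere in the trace is fatal for reaching the required meta-term list, forcing every $\vpn S(\symb{t_i})$ to succeed.
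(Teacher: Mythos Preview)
Your overall strategy---tracking the $\varphi'_{\all(S)}$ machine step by step, with the forward direction constructing an explicit reduction and the backward direction arguing that the failure rule cannot have fired---matches the paper's (terse) argument, and your added generalization of the accumulators is a reasonable way to make the induction go through.

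There are, however, two citation problems that you should fix. First, your appeal to Lemma~\ref{lem:meta} to conclude $t'_1 \in \TF$ is circular: Lemma~\ref{lem:meta} is precisely the result whose proof invokes the present lemma (and its siblings \ref{lemma:all-failure}--\ref{lemma:one-success}). Fortunately you do not need it: the hypothesis is already written as $\vpn{S}(\symb{t_i}) \multieval \symb{t'_i}$, and by definition of the meta-encoding $\symb{\cdot}$ any ground term $t'_i \in \TF$ satisfies $\symb{t'_i} = \appl{\symb{f}}{\ldots}$, so the shape $\appl{\any}{\any}$ is immediate. Second, Lemma~\ref{lemma:confluentEncoding} is stated for the plain encoding $\TRctx{\Gamma}{S}$, not for $\TRm{S}$, so it does not literally apply. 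But in the forward direction you do not need confluence at all: you only have to \emph{exhibit} one reduction sequence, and rewriting is closed under context, so you are free to reduce the first argument first. Similarly, your use of ``Lemma~\ref{lemma:rigidTF} transposed to the meta-setting'' is not something the paper proves; for the backward direction it is cleaner to argue directly from the shape of the $\varphi'_{\all(S)}$ rules (only three of them, with mutually exclusive patterns on the first argument) that the step rule must fire at each stage, which is essentially what the paper does.
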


\begin{proof}
  Because $\TRm{\all(S)}$ contains $\TRm{S}$, and by definition of
  $\TRm{\all(S)}$, $\trsred{\TRm{S} \cup \TRmg{\Gamma}}{\vpn{S'}(\symb
    {t_i})}{\symb{t'_i}}$ with $\symb{t'_i} \neq \bot(\symb{t_i})$ for all $i$
  iff
  \begin{multline*}
    \varphi'_{\all(S)}(\varphi_{S}(\symb{t_i}), \symb{t_{i+1}} \conss \ldots
    \conss \symb{t_n} \conss \nil, \symb{t_i} \conss \ldots \symb{t_1} \conss
    \nil, \symb{t'_{i-1}} \conss \ldots \conss
    \symb{t'_1} \conss \nil) \\
    \multieval{\TRm{\all(S)} \cup \TRmg{\Gamma}} \\
    \varphi'_{\all(S)}(\varphi_{S}(\symb{t_{i+1}}), \symb{t_{i+2}} \conss \ldots
    \conss \symb{t_n} \conss \nil, \symb{t_{i+1}} \conss \ldots \symb{t_1} \conss
    \nil, \symb{t'_{i}} \conss \ldots \conss
    \symb{t'_1} \conss \nil)
  \end{multline*}
  for all $i$. As a result, we deduce
  \begin{align*}  
    & \varphi'_{\all(S)}(\varphi_{S}(\symb{t_1}), \symb{t_2} \conss \ldots
    \conss
    \symb{t_n} \conss \nil, \symb{t_1} \conss \nil, \nil) \\
    \multieval{\TRm{\all(S)} \cup \TRmg{\Gamma}}~&
    \varphi'_{\all(S)}(\varphi_{S}(\symb{t_{n}}), \nil, \symb{t_{n}} \conss
    \ldots \symb{t_1} \conss \nil, \symb{t'_{n-1}} \conss \ldots \conss
    \symb{t'_1} \conss \nil) \\
    \multieval{\TRm{\all(S)} \cup \TRmg{\Gamma}}~&
    \varphi'_{\all(S)}(\symb{t'_{n}}, \nil, \symb{t_{n}} \conss \ldots
    \symb{t_1} \conss \nil, \symb{t'_{n-1}} \conss \ldots \conss
    \symb{t'_1} \conss \nil)\\
    \trs_{\TRm{\all(S)} \cup \TRmg{\Gamma}}~& \reverse{\symb{t'_{n}} \conss
      \symb{t'_{n-1}} \conss \ldots \conss \symb{t'_1} \conss \nil}\\
    \multieval{\TRm{\all(S)} \cup \TRmg{\Gamma}}~& \symb{t'_1} \conss \ldots
    \conss \symb{t'_n} \conss \nil
  \end{align*}
  as wished.
\end{proof}

\begin{lem}
  \label{lemma:all-failure}
  Given terms $t_1 \ldots t_n \in \TF$, $S$, and $\Gamma$
  such that $\varS S \subseteq \domS \Gamma$, $\trsred{\TRm{S}
    \cup \TRmg{\Gamma}}{\vpn{S}(\symb {t_j})}{\symb{t'_j}}$ with $\symb{t'_j}
  \neq \bot(\symb{t_j})$ for all $j < i_0$ and $\trsred{\TRm{S}
    \cup \TRmg{\Gamma}}{\vpn{S}(\symb {t_{i_0}})}{\bot(\symb{t_{i_0}})}$ iff 
  \begin{multline*}
    \trsrednl{\TRm{\all(S)} \cup
      \TRmg{\Gamma}}{\varphi'_{\all(S)}(\varphi_{S}(\symb{t_1}), \symb{t_2}
      \conss \ldots \conss \symb{t_n} \conss \nil, \symb{t_1} \conss \nil,
      \nil)}{\botlist{\symb{t_1} \conss \ldots \conss \symb{t_n} \conss \nil}}.
  \end{multline*}
\end{lem}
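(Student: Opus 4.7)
The plan is to mirror the structure of Lemma~\ref{lemma:all-success}, splitting the reduction traced by $\varphi'_{\all(S)}$ into a ``success prefix'' corresponding to the $t_j$ with $j < i_0$, followed by one ``failure step'' at $t_{i_0}$, and finally a deterministic cleanup phase. For the forward direction ($\Rightarrow$), I first observe that $\TRm{\all(S)}$ contains $\TRm{S}$, so the hypotheses on the $t_j$ give reductions in $\TRm{\all(S)} \cup \TRmg{\Gamma}$. Then, reusing the key step of Lemma~\ref{lemma:all-success}, for every $j < i_0$ we have
\begin{align*}
  & \varphi'_{\all(S)}(\varphi_{S}(\symb{t_j}), \symb{t_{j+1}} \conss \ldots \conss \symb{t_n} \conss \nil, \symb{t_j} \conss \ldots \conss \symb{t_1} \conss \nil, \symb{t'_{j-1}} \conss \ldots \conss \symb{t'_1} \conss \nil) \\
  \multieval{\TRm{\all(S)} \cup \TRmg{\Gamma}}~& \varphi'_{\all(S)}(\varphi_{S}(\symb{t_{j+1}}), \symb{t_{j+2}} \conss \ldots \conss \symb{t_n} \conss \nil, \symb{t_{j+1}} \conss \ldots \conss \symb{t_1} \conss \nil, \symb{t'_{j}} \conss \ldots \conss \symb{t'_1} \conss \nil),
\end{align*}
by first reducing $\varphi_S(\symb{t_j})$ to $\symb{t'_j}$ and then applying the last rule of $\mathbf{(ME8)}$. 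Chaining these $i_0 - 1$ steps leads us to a term of the form $\varphi'_{\all(S)}(\varphi_{S}(\symb{t_{i_0}}),\mathit{td}, \mathit{rt}, \mathit{rd})$ with $\mathit{td}=\symb{t_{i_0+1}} \conss \ldots \conss \symb{t_n} \conss \nil$ and $\mathit{rt}=\symb{t_{i_0}} \conss \ldots \conss \symb{t_1} \conss \nil$.

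Next, I apply the failure hypothesis, reducing $\varphi_S(\symb{t_{i_0}})$ to $\bot(\symb{t_{i_0}})$, and then the rule $\varphi'_{\all(S)}(\bot(\any),\todo,\revtried,\any) \ra \botlist{\rconcat{\revtried}{\todo}}$ to obtain $\botlist{\rconcat{\mathit{rt}}{\mathit{td}}}$. A routine induction on the length of $\mathit{rt}$, using the two rewrite rules for $\rconcat$ from Figure~\ref{fig:aux-symbols}, shows that $\rconcat{\mathit{rt}}{\mathit{td}}$ reduces to $\symb{t_1} \conss \ldots \conss \symb{t_n} \conss \nil$, producing the required $\botlist{\symb{t_1} \conss \ldots \conss \symb{t_n} \conss \nil}$.

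For the backward direction ($\Leftarrow$), the main obstacle is to show that any reduction ending in $\botlist{\symb{t_1} \conss \ldots \conss \symb{t_n} \conss \nil}$ has to go through exactly this shape, \ie\ that there is a unique $i_0$ playing the role above and that the prefix of successes actually succeeds. To this end, I would first prove an analogue of Lemma~\ref{lemma:failurePropagation} and Lemma~\ref{lemma:sameFailure} at the meta-level (which is itself what Lemma~\ref{lem:meta}(3) is intended to capture for the induction hypothesis at hand). Using these, plus the fact that the only rules that can reduce a term of the form $\varphi'_{\all(S)}(\varphi_S(u), \ldots)$ at the head are those from $\mathbf{(ME8)}$, I can perform a case analysis on the first rule applied to the initial $\varphi'_{\all(S)}$-term. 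The rule $\varphi'_{\all(S)}(\Xx \at \appl{\any}{\any}, \nil, \any, \revdone) \ra \reverse{\cons{\Xx}{\revdone}}$ is excluded because it would eventually yield a term in $\TFmeta$ rather than one headed by $\bot$; the rule advancing to the next argument forces a successful reduct $\symb{t'_1}$ of $\varphi_S(\symb{t_1})$; and the failure rule forces $i_0 = 1$ with $\varphi_S(\symb{t_1}) \multieval{} \bot(\symb{t_1})$. Iterating this case analysis (either by induction on $n$ or on the length of the reduction) reconstructs the required $i_0$ and the $t'_j$ for $j < i_0$, and the injectivity of $\symb{\cdot}$ on list constructors ensures that the final $\botlist{\cdot}$ indeed corresponds to $\symb{t_1} \conss \ldots \conss \symb{t_n} \conss \nil$. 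The main obstacle is therefore this head-rigidity bookkeeping: making sure that no alternative sequence of rule applications can produce the same end term in a way that does not fit the $i_0$-pattern, so that the backward direction yields a unique witnessing failure index.
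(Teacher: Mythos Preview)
Your proposal is correct and follows essentially the same route as the paper: reuse the ``iff'' step from Lemma~\ref{lemma:all-success} to reach the $i_0$-configuration, then fire the failure rule and simplify the $\rawrconcat$. The paper's proof is terser on the backward direction, simply asserting that the $i_0$-configuration ``can only reduce to'' the $\bot$-headed term; your explicit head-rigidity case analysis spells out what the paper leaves implicit, but the underlying argument is the same.
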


\begin{proof}
  As in the proof of Lemma~\ref{lemma:all-success}, $\trsred{\TRm{S} \cup
    \TRmg{\Gamma}}{\vpn{S}(\symb {t_j})}{\symb{t'_j}}$ with $\symb{t'_j} \neq
  \bot(\symb{t_j})$ for all $j < i_0$ iff 
  \begin{multline*}
    \varphi'_{\all(S)}(\varphi_{S}(\symb{t_1}), \symb{t_2} \conss \ldots
    \conss
    \symb{t_n} \conss \nil, \symb{t_1} \conss \nil, \nil) \\
    \multieval{\TRm{\all(S)} \cup \TRmg{\Gamma}} \\
    \varphi'_{\all(S)}(\varphi_{S}(\symb{t_{i_0}}), \symb{t_{i_0+1}} \conss \ldots
    \conss \symb{t_n} \conss \nil, \symb{t_{i_0}} \conss \ldots \symb{t_1} \conss
    \nil, \symb{t'_{i_0-1}} \conss \ldots \conss
    \symb{t'_1} \conss \nil)
  \end{multline*}
  This latter term can only reduce to 
  \[
  \varphi'_{\all(S)}(\bot{\symb{t_{i_0}}}, \symb{t_{i_0+1}} \conss \ldots \conss
  \symb{t_n} \conss \nil, \symb{t_{i_0}} \conss \ldots \symb{t_1} \conss \nil,
  \symb{t'_{i_0-1}} \conss \ldots \conss \symb{t'_1} \conss \nil)
  \]
  which in turn reduces to $\botlist{\rconcat{\symb{t_{i_0}} \conss \ldots
      \symb{t_1} \conss \nil}{\symb{t_{i_0+1}} \conss \ldots \conss \symb{t_n}
      \conss \nil}}$, and then to $\botlist{\symb{t_1} \conss \ldots \conss
    \symb{t_n} \conss \nil}$, as wished.
\end{proof}

\begin{lem}
  \label{lemma:one-failure}
  Given terms $t_1 \ldots t_n \in \TF$, $S$, and $\Gamma$
  such that $\varS S \subseteq \domS \Gamma$, $\trsred{\TRm{S}
    \cup \TRmg{\Gamma}}{\vpn{S}(\symb {t_i})}{\bot(\symb{t_i})}$ for all $i$ iff 
  \begin{multline*}
    \trsrednl{\TRm{\one(S)} \cup
      \TRmg{\Gamma}}{\varphi'_{\all(S)}(\varphi_{S}(\symb{t_1}), \symb{t_2}
      \conss \ldots \conss \symb{t_n} \conss \nil, \symb{t_1} \conss
      \nil)}{\botlist{\symb{t_1} \conss \ldots \conss \symb{t_n} \conss \nil}}.
  \end{multline*}
\end{lem}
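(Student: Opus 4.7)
The plan is to mirror the argument used for Lemmas~\ref{lemma:all-success} and~\ref{lemma:all-failure}, but exploiting the specific rules of $\TRm{\one(S)}$. First I would prove the following \emph{iteration invariant}: for every $i < n$, the reduction
$\trsred{\TRm{S} \cup \TRmg{\Gamma}}{\varphi_S(\symb{t_i})}{\bot(\symb{t_i})}$ holds iff
\[
\varphi'_{\one(S)}(\varphi_S(\symb{t_i}), \symb{t_{i+1}} \conss \ldots \conss \symb{t_n} \conss \nil, \symb{t_i} \conss \ldots \conss \symb{t_1} \conss \nil)
\]
reduces in $\TRm{\one(S)} \cup \TRmg{\Gamma}$ to
\[
\varphi'_{\one(S)}(\varphi_S(\symb{t_{i+1}}), \symb{t_{i+2}} \conss \ldots \conss \symb{t_n} \conss \nil, \symb{t_{i+1}} \conss \symb{t_i} \conss \ldots \conss \symb{t_1} \conss \nil).
\]
Since $\TRm{S} \subseteq \TRm{\one(S)}$, the $\varphi_S(\symb{t_i})$ inside can be rewritten to $\bot(\symb{t_i})$; the head is then $\bot(\any)$ with a non-empty $\todo$, so the only applicable rule at the root is $\varphi'_{\one(S)}(\bot(\any),\cons{\head}{\tail},\revtried) \ra \varphi'_{\one(S)}(\vpS(\head),\tail,\cons{\head}{\revtried})$, producing exactly the right-hand side above.

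Iterating the invariant from $i=1$ to $i=n-1$, we obtain
\[
\varphi'_{\one(S)}(\varphi_S(\symb{t_1}), \symb{t_2} \conss \ldots \conss \symb{t_n} \conss \nil, \symb{t_1} \conss \nil) \multieval{} \varphi'_{\one(S)}(\varphi_S(\symb{t_n}), \nil, \symb{t_n} \conss \ldots \conss \symb{t_1} \conss \nil).
\]
A final reduction of $\varphi_S(\symb{t_n})$ to $\bot(\symb{t_n})$ followed by the rule $\varphi'_{\one(S)}(\bot(\any),\nil,\revtried) \ra \botlist{\reverse{\revtried}}$ gives $\botlist{\reverse{\symb{t_n} \conss \ldots \conss \symb{t_1} \conss \nil}}$, which rewrites via the $\reverse$/$\append$ rules of $\TRlists$ to $\botlist{\symb{t_1} \conss \ldots \conss \symb{t_n} \conss \nil}$, establishing the ``only if'' direction.

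For the converse, I would analyse the shape of any reduction producing $\botlist{\symb{t_1} \conss \ldots \conss \symb{t_n} \conss \nil}$. The key observation is that the only rules with $\rawbotlist$ on the right-hand side are the two $\varphi'_{\one(S)}$ rules $\varphi'_{\one(S)}(\bot(\any),\nil,\revtried) \ra \botlist{\reverse{\revtried}}$ and $\varphi^{list}_{\one(S)}(\nil) \ra \botlist{\nil}$; the shape of $\revtried$ in the former is forced to be $\symb{t_n} \conss \ldots \conss \symb{t_1} \conss \nil$ by the meta-encoding analogue of Lemma~\ref{lemma:sameFailure}. Working backwards, the only way to accumulate this $\revtried$ without firing the ``success'' rule $\varphi'_{\one(S)}(\Xx \at \appl{\any}{\any},\todo,\cons{\any}{\revtried}) \ra \rconcat{\revtried}{\cons{\Xx}{\todo}}$ (which would yield a meta-term rather than a $\botlist$) is for each intermediate $\varphi_S(\symb{t_i})$ to reduce to a $\bot$; by the origin-preservation lemma, that $\bot$ must carry $\symb{t_i}$, giving the desired hypotheses. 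The main obstacle here is purely bookkeeping: ruling out interleavings in which a $\varphi_S(\symb{t_i})$ reduces partially to a meta-term before being attacked by the outer rules, which is handled by the analogues of Lemmas~\ref{lemma:rigidTF} and~\ref{lemma:sameFailure} adapted to the meta-encoding (already invoked in the proof sketch of Lemma~\ref{lem:meta}), ensuring head-rigidity of $\varphi_S(\cdot)$ and uniqueness of the failure origin.
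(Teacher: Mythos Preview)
Your proposal is correct and follows essentially the same approach as the paper: establish the per-step iteration invariant (the ``iff'' between $\varphi_S(\symb{t_i})$ failing and the $\varphi'_{\one(S)}$ term advancing from index $i$ to $i+1$), chain these steps, and finish with the $\botlist{\reverse{\cdot}}$ rule and the $\TRlists$ rules. The paper's proof is in fact terser than yours---it simply asserts the per-step iff and displays the resulting reduction chain---whereas you spell out the converse direction separately via a backward analysis of which rules can produce $\rawbotlist$; this extra care is sound but not strictly necessary once the per-step iff is granted. One small misattribution: the shape of $\revtried$ being $\symb{t_n}\conss\ldots\conss\symb{t_1}\conss\nil$ is forced directly by the form of the $\varphi'_{\one(S)}$ rules (each step conses the current head of $\todo$ onto $\revtried$), not by the failure-origin lemma; the analogue of Lemma~\ref{lemma:sameFailure} is what ensures each $\varphi_S(\symb{t_i})$ that reduces to a $\bot$ carries $\symb{t_i}$ as argument.
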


\begin{proof}
  Because $\TRm{\one(S)}$ contains $\TRm{S}$, and by definition of
  $\TRm{\one(S)}$, $\trsred{\TRm{S} \cup \TRmg{\Gamma}}{\vpn{S'}(\symb
    {t_i})}{\bot(\symb{t_i})}$ for all $i$
  iff
  \begin{multline*}
    \varphi'_{\one(S)}(\varphi_{S}(\symb{t_i}), \symb{t_{i+1}} \conss \ldots
    \conss \symb{t_n} \conss \nil, \symb{t_i} \conss \ldots \symb{t_1} \conss
    \nil) \\
    \multieval{\TRm{\one(S)} \cup \TRmg{\Gamma}} 
    \varphi'_{\one(S)}(\varphi_{S}(\symb{t_{i+1}}), \symb{t_{i+2}} \conss \ldots
    \conss \symb{t_n} \conss \nil, \symb{t_{i+1}} \conss \ldots \symb{t_1} \conss
    \nil)
  \end{multline*}
  for all $i$. As a result, we deduce
  \begin{align*}  
    & \varphi'_{\one(S)}(\varphi_{S}(\symb{t_1}), \symb{t_2} \conss \ldots
    \conss
    \symb{t_n} \conss \nil, \symb{t_1} \conss \nil) \\
    \multieval{\TRm{\one(S)} \cup \TRmg{\Gamma}}~&
    \varphi'_{\one(S)}(\varphi_{S}(\symb{t_{n}}), \nil, \symb{t_{n}} \conss
    \ldots \symb{t_1} \conss \nil) \\
    \multieval{\TRm{\one(S)} \cup \TRmg{\Gamma}}~&
    \varphi'_{\one(S)}(\bot(\symb{t_{n}}), \nil, \symb{t_{n}} \conss \ldots
    \symb{t_1} \conss \nil)\\
    \trs_{\TRm{\one(S)} \cup \TRmg{\Gamma}}~& \botlist{\reverse{\symb{t_{n}} \conss
      \ldots \conss \symb{t_1} \conss \nil}}\\
    \multieval{\TRm{\one(S)} \cup \TRmg{\Gamma}}~& \botlist{\symb{t_1} \conss \ldots
    \conss \symb{t_n} \conss \nil}
  \end{align*}
  as wished.
\end{proof}

\begin{lem}
  \label{lemma:one-success}
  Given terms $t_1 \ldots t_n \in \TF$, $S$, and $\Gamma$
  such that $\varS S \subseteq \domS \Gamma$, $\trsred{\TRm{S}
    \cup \TRmg{\Gamma}}{\vpn{S}(\symb {t_j})}{\bot(\symb{t_j})}$ for all $j <
  i_0$ and $\trsred{\TRm{S} \cup \TRmg{\Gamma}}{\vpn{S}(\symb
    {t_{i_0}})}{\symb{t'_{i_0}}}$ with $\symb{t'_{i_0}} \neq
  \bot(\symb{t_{i_0}})$ iff 
  \begin{multline*}
    \trsrednl{\TRm{\one(S)} \cup
    \TRmg{\Gamma}}{\varphi'_{\one(S)}(\varphi_{S}(\symb{t_1}), \symb{t_2} \conss
    \ldots \conss \symb{t_n} \conss \nil, \symb{t_1} \conss \nil)}{\symb{t_1}
    \conss \ldots \conss \symb{t'_{i_0}} \conss \ldots \conss \symb{t_n} \conss
    \nil}.
  \end{multline*}
\end{lem}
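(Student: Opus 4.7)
My plan is to prove Lemma~\ref{lemma:one-success} in the same style as the three preceding lemmas, exploiting the inclusion $\TRm{S} \subseteq \TRm{\one(S)}$ and tracking how the auxiliary symbol $\varphi'_{\one(S)}$ steps through the argument list. The key combinatorial observation is that the iterative failure phase in $\TRm{\one(S)}$ is shared with the encoding used in Lemma~\ref{lemma:one-failure}, so the first $i_0-1$ steps are essentially the same reduction; the novelty is what happens when the encoding of $S$ finally succeeds on the $i_0$-th argument.

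In the forward direction, assume $\trsred{\TRm{S} \cup \TRmg{\Gamma}}{\vpn{S}(\symb{t_j})}{\bot(\symb{t_j})}$ for every $j < i_0$ and $\trsred{\TRm{S} \cup \TRmg{\Gamma}}{\vpn{S}(\symb{t_{i_0}})}{\symb{t'_{i_0}}}$ with $\symb{t'_{i_0}} \neq \bot(\symb{t_{i_0}})$. As in the proof of Lemma~\ref{lemma:one-failure}, I iterate the two $\TRm{\one(S)}$-rules for $\varphi'_{\one(S)}$ whose first argument is a $\bot$: for each $j<i_0$ we rewrite $\varphi_S(\symb{t_j})$ to $\bot(\symb{t_j})$ inside the context and then transition to the next argument, accumulating $\revtried = \symb{t_j} \conss \ldots \conss \symb{t_1} \conss \nil$. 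After $i_0-1$ such iterations, the term reached is $\varphi'_{\one(S)}(\varphi_S(\symb{t_{i_0}}), \symb{t_{i_0+1}} \conss \ldots \conss \symb{t_n} \conss \nil, \symb{t_{i_0}} \conss \ldots \conss \symb{t_1} \conss \nil)$. Now the assumption on $t_{i_0}$ lets us rewrite $\varphi_S(\symb{t_{i_0}})$ to $\symb{t'_{i_0}}$, which is of the form $\appl \any \any$, so the rule $\varphi'_{\one(S)}(\Xx \at \appl{\any}{\any}, \todo, \cons{\any}{\revtried}) \ra \rconcat{\revtried}{\cons{\Xx}{\todo}}$ fires and produces $\rconcat{\symb{t_{i_0-1}} \conss \ldots \conss \symb{t_1} \conss \nil}{\symb{t'_{i_0}} \conss \symb{t_{i_0+1}} \conss \ldots \conss \symb{t_n} \conss \nil}$. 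Finally, the $\rawrconcat$ rules from $\TRlists$ reduce this to $\symb{t_1} \conss \ldots \conss \symb{t'_{i_0}} \conss \ldots \conss \symb{t_n} \conss \nil$, as required.

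The reverse direction proceeds analogously: the only rules applicable at the root of $\varphi'_{\one(S)}(\ldots)$ are the three rules of $\mathbf{(ME9)}$ for that symbol, and the shape of the first argument determines which one fires. Since the final reduct is a list whose head is in $\TFmeta$ (not $\rawbotlist$), the reduction cannot have gone through the $\botlist{\reverse{\revtried}}$ branch; therefore at some index $i_0$ we must have taken the $\appl{\any}{\any}$ branch, and for every earlier $j < i_0$ the $\bot$ branch must have been taken, forcing $\trsred{\TRm{S} \cup \TRmg{\Gamma}}{\vpn{S}(\symb{t_j})}{\bot(\symb{t_j})}$ and $\trsred{\TRm{S} \cup \TRmg{\Gamma}}{\vpn{S}(\symb{t_{i_0}})}{\symb{t'_{i_0}}}$. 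The index $i_0$ and the value $\symb{t'_{i_0}}$ are uniquely determined by matching the result list against the positions inherited from $\revtried$ and $\todo$.

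The main obstacle will be justifying the determinism of this case analysis in the reverse direction, i.e. arguing that no other reduction path through $\TRm{\one(S)} \cup \TRmg{\Gamma}$ could yield a non-$\rawbotlist$-headed list. This requires a head-rigidity argument analogous to Lemma~\ref{lemma:rigidTF}, showing that $\varphi'_{\one(S)}$-terms can only be reduced by the rules of $\mathbf{(ME9)}$, together with the propagation Lemma~\ref{lemma:failurePropagation} to rule out spurious reductions that would turn a success into a failure midstream. Once these rigidity facts are in hand, the reasoning collapses to the bookkeeping sketched above.
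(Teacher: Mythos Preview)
Your proposal is correct and follows essentially the same approach as the paper: iterate the failure transitions of $\varphi'_{\one(S)}$ as in Lemma~\ref{lemma:one-failure} to reach the $i_0$-th slot, then fire the $\appl{\any}{\any}$ success rule and reduce the resulting $\rawrconcat$ term via $\TRlists$. The paper's proof is terser---it packages both directions into a single ``iff'' chain and the phrase ``this latter term can only reduce to''---whereas you separate the directions and make the head-rigidity concern for the reverse direction explicit; that extra care is reasonable but not a different argument.
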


\begin{proof}
  As in the proof of Lemma~\ref{lemma:one-failure}, $\trsred{\TRm{S} \cup
    \TRmg{\Gamma}}{\vpn{S}(\symb {t_j})}{\bot(\symb{t_j})}$ for all $j < i_0$ iff 
  \begin{multline*}
    \varphi'_{\one(S)}(\varphi_{S}(\symb{t_1}), \symb{t_2} \conss \ldots
    \conss
    \symb{t_n} \conss \nil, \symb{t_1} \conss \nil) \\
    \multieval{\TRm{\one(S)} \cup \TRmg{\Gamma}} 
    \varphi'_{\one(S)}(\varphi_{S}(\symb{t_{i_0}}), \symb{t_{i_0+1}} \conss \ldots
    \conss \symb{t_n} \conss \nil, \symb{t_{i_0}} \conss \ldots \symb{t_1} \conss
    \nil)
  \end{multline*}
  This latter term can only reduce to 
  \[
  \varphi'_{\one(S)}(\symb{t'_{i_0}}, \symb{t_{i_0+1}} \conss \ldots \conss
  \symb{t_n} \conss \nil, \symb{t_{i_0}} \conss \ldots \symb{t_1} \conss \nil)
  \]
  which in turn reduces to $\rconcat{\symb{t_{i_0}-1} \conss \ldots \symb{t_1}
    \conss \nil}{\symb{t'_{i_0}} \conss \symb{t_{i_0+1}} \conss \ldots \conss
    \symb{t_n} \conss \nil}$, and then to $\symb{t_1} \conss \ldots \conss
  \symb{t'_{i_0}} \conss \ldots \conss \symb{t_n} \conss \nil$, as wished.
\end{proof}

\meta*

\begin{proof}
  The proof is by induction on $S$. Most cases are straightforward, as the
  meta-encoding is the same as the encoding in these cases. We only detail the
  ones with significant changes. In what follows, we write $\RRM$ for $\TRm{S}
  \cup \TRmg{\Gamma}$.
  \begin{enumerate}
  \item $S = \all(S')$. 
    \begin{enumerate}
    \item $\TRr{S} \Rightarrow \TRm{S}$.  
      
      Suppose $t$ is a constant $c$; then $\trsred{\TRctx{\Gamma}{S} \cup
        \TRg{\Gamma}{\Gamma}}{\vpS(c)}{c}$. With the meta-encoding, we have
      \begin{align*}
        \vpn{\all(S')}(\appl{c}{\nil}) & \trs_\RRM
        \propag{\appl{c}{\varphi^{list}_{S}(\nil)}} \\
        & \trs_\RRM \propag{\appl{c}{\nil}} \\
        & \trs_\RRM \appl{c} \nil = \symb c
      \end{align*}
      The result therefore holds.
      
      We now suppose that $t = f(t_1, \ldots, t_n)$. First, we assume $t' \neq
      \bot(t)$. By the definition of $\TRr{S}$, it is possible iff the rule
      \[
    \varphi_{\all(S')}(f(\Xx_1,\ldots,\Xx_n)) \ra
      \psiF(\vpn{S'}(\Xx_1),\ldots,\vpn{S'}(\Xx_n),f(\Xx_1,\ldots,\Xx_n))
      \]
      has been applied, meaning that $\trsred{\TRr{S'} \cup
        \TRgg{\Gamma}}{\vpn{S'}(t_i)}{t'_i}$ with $t'_i \neq \bot(t_i)$ for
      all~$i$. By induction, we have $\trsred{\TRm{S'} \cup
        \TRmg{\Gamma}}{\vpn{S'}(\symb {t_i})}{\symb{t'_i}}$, and $\symb{t'_i}
      \neq \bot(\symb{t_i})$. As a result, we have
      \begin{align*}
        & \vpn{\all(S')}(\appl{f}{\symb {t_1} \conss \ldots \conss \symb{t_n}
          \conss \nil}) \\
        \trs_\RRM~ & 
        \propag{\appl{f}{\varphi^{list}_{S}(\symb{t_1} \conss \ldots \conss
            \symb{t_n} \conss \nil)}}
        \\
        \trs_\RRM~ &  \propag{\appl{f}{\varphi'_{S}(\varphi_{S'}(\symb{t_1}), \symb{t_2} \conss
            \ldots \conss \symb{t_n}, \symb{t_1} \conss \nil, \nil)}}
      \end{align*}
      By Lemma~\ref{lemma:all-success}, we have 
      \begin{align*}
        & \propag{\appl{f}{\varphi'_{S}(\varphi_{S'}(\symb{t_1}), \symb{t_2}
          \conss
          \ldots \conss \symb{t_n}, \symb{t_1} \conss \nil, \nil)}} \\
      \multieval{\RRM}~ & \propag{\appl{f}{\symb{t'_1} \conss \ldots \conss
          \symb{t'_n}}} \\
      \trs_\RRM~ & \appl{f}{\symb{t'_1} \conss \ldots \conss \symb{t'_n}} =
      \symb{t'}
    \end{align*}

    Assume now $t' = \bot(t)$. By the definition of $\TRr S$, it is possible iff
    there exists~$i$ such that $\trsred{\TRr{S'} \cup
      \TRgg{\Gamma}}{\vpn{S'}(t_i)}{\bot{(t_i)}}$. Let $i_0$ be the smallest of
    such $i$; then for all $j < i_0$, we have $\trsred{\TRr{S'} \cup
      \TRgg{\Gamma}}{\vpn{S'}(t_j)}{t'_j}$ with $t'_j \neq \bot(t_j)$. By
    induction, we have $\trsred{\TRm{S'} \cup \TRmg{\Gamma}}{\vpn{S'}(\symb
      {t_j})}{\symb{t'_j}}$ with $\symb{t'_j} \neq \bot(\symb{t_j})$ for $j <
    i_0$, and $\trsred{\TRm{S'} \cup
      \TRmg{\Gamma}}{\vpn{S'}(\symb{t_{i_0}})}{\bot{(\symb{t_{i_0}})}}$. Using
    Lemma~\ref{lemma:all-failure}, we have
    \begin{align*}
      & \vpn{\all(S')}(\appl{f}{\symb {t_1} \conss \ldots \conss \symb{t_n}
        \conss \nil}) \\
      \multieval{\RRM}~ & \propag{\appl{f}{\varphi'_{S}(\varphi_{S'}(\symb{t_1}), \symb{t_2} \conss
          \ldots \conss \symb{t_n}, \symb{t_1} \conss \nil, \nil)}} \\
      \multieval{\RRM}~ & \propag{\appl{f}{\botlist{\symb{t_1} \conss \ldots
            \conss \symb{t_n} \conss \nil}}} \\
      \trs_{\RRM}~ & \bot(\appl f {\symb{t_1} \conss \ldots \conss \symb{t_n}
        \conss \nil}) = \symb{\bot(t)}
    \end{align*}
    \item $\TRr{S} \Rightarrow \TRm{S}$

      If $t$ is a constant $c$, then it is easy to check that $t'' = \trmeta
      c$. Suppose $t = f(t_1, \ldots, t_n)$. First, we assume $t'' \in
      \TFmeta$. It is possible only if the rule 
      \[
      \varphi'_{\all(S')}(\Xx \at \appl{\any}{\any},\nil,\any,\revdone) \ra
      \reverse{\cons{\Xx}{\revdone}}
      \]
      has been applied. It is possible only if rewriting
      $\varphi_{S'}(\trmeta{t_i})$ produces a term $t''_i$ of the form $\appl
      \any \any$ for all $i$, so that $t'' = \appl {\trmeta f}{t''_1 \conss
        t''_2 \conss \ldots \conss t''_n \conss \nil}$. Only the rules of $\TRm
      {S'}$ may apply to $\varphi_{S'}(\trmeta{t_i})$, so we have in fact
      $\trsred{\TRm{S'} \cup \TRmg{\Gamma}}{\varphi_{S'}(\symb {t_i})}{t''_i}$
      with $t''_i \in \TFmeta$. By induction, there exists $t'_i$ such that
      $\symb{t'_i} = t''_i$ and $\trsred{\TRctx{\Gamma}{S'} \cup
        \TRg{\Gamma}{\Gamma}}{\varphi_{S'}(t_i)}{t'_i}$ for all $i$. From that,
      we can deduce that $\trsred{\TRctx{\Gamma}{\all(S')} \cup
        \TRg{\Gamma}{\Gamma}}{\varphi_{\all(S')}(f(t_1, \ldots, t_n))}{f(t'_1, \ldots,
        t'_n)}$, as wished.

      Now assume we obtain $\bot(t'')$ with $t'' \in \TFmeta$. Then the rule
      \[
      \varphi'_{\all(S')}(\bot(\any),\todo,\revtried,\any) \ra \botlist{\rconcat{\revtried}{\todo}}
      \]
      has been applied, meaning that there exists $i$ such that
      $\trsred{\TRm{\all(S')} \cup \TRmg{\Gamma}}{\varphi_{S'}(\symb
        {t_i})}{\bot(t''_i)}$ for some $t''_i$. Only the rules of $\TRm {S'}$
      may apply to $\varphi_{S'}(\trmeta{t_i})$, so we have in fact
      $\trsred{\TRm{S'} \cup \TRmg{\Gamma}}{\varphi_{S'}(\symb
        {t_i})}{\bot(t''_i)}$. By induction, $t''_i = \symb{t_i}$ and
      $\trsred{\TRctx{\Gamma}{S'} \cup
        \TRg{\Gamma}{\Gamma}}{\varphi_{S'}(t_i)}{\bot(t_i)}$. From there, we can
      show that $\trsred{\TRctx{\Gamma}{\all(S')} \cup
        \TRg{\Gamma}{\Gamma}}{\varphi_{\all(S')}(f(t_1, \ldots, t_n))}{\bot(f(t_1,
        \ldots, t_n))}$, as wished.
    \end{enumerate}
  \item $S = \one(S')$. 
        \begin{enumerate}
        \item $\TRr{S} \Rightarrow \TRm{S}$.  
          
          Suppose $t$ is a constant $c$; then
          $\trsred{\TRctx{\Gamma}{S} \cup \TRg{\Gamma}{\Gamma}}{\vpS(c)}{\bot(c)}$. With the
          meta-encoding, we have 
          \begin{align*}
            \vpn{\one(S')}(\appl{c}{\nil}) & \trs_\RRM
            \propag{\appl{c}{\varphi^{list}_{S}(\nil)}} \\
            & \trs_\RRM \propag{\appl{c}{\botlist{\nil}}} \\
            & \trs_\RRM \bot(\appl{c} \nil) = \symb{\bot(c)}
          \end{align*}
          The result therefore holds.
          
          We now suppose that $t = f(t_1, \ldots, t_n)$. First, we assume $t' =
          \bot(t)$. By the definition of $\TRr{S}$, it is possible if
          $\trsred{\TRr{S'} \cup \TRgg{\Gamma}}{\vpn{S'}(t_i)}{\bot(t_i)}$ for
          all~$i$. By induction, we have $\trsred{\TRm{S'} \cup
            \TRmg{\Gamma}}{\vpn{S'}(\symb {t_i})}{\symb{\bot(t_i)}}$, which in
          turn implies 
          \begin{align*}
            & \vpn{\one(S')}(\appl{f}{\symb {t_1} \conss \ldots \conss \symb{t_n}
              \conss \nil}) \\
            \trs_\RRM~ & 
            \propag{\appl{f}{\varphi^{list}_{S}(\symb{t_1} \conss \ldots \conss
                \symb{t_n} \conss \nil)}}
            \\
            \trs_\RRM~ &  \propag{\appl{f}{\varphi'_{S}(\varphi_{S'}(\symb{t_1}), \symb{t_2} \conss
                \ldots \conss \symb{t_n}, \symb{t_1} \conss \nil)}}
          \end{align*}
    By Lemma~\ref{lemma:one-failure}, we have 
    \begin{align*}
      & \propag{\appl{f}{\varphi'_{S}(\varphi_{S'}(\symb{t_1}), \symb{t_2}
          \conss
          \ldots \conss \symb{t_n}, \symb{t_1} \conss \nil)}} \\
      \multieval{\RRM}~ & \propag{\appl{f}{\botlist{\symb{t_1} \conss \ldots \conss
          \symb{t_n}}}} \\
      \trs_\RRM~ & \bot(\appl{f}{\symb{t_1} \conss \ldots \conss \symb{t_n}}) =
      \symb{\bot(t)}
    \end{align*}

    Assume now $t' \neq \bot(t)$. By the definition of $\TRr S$, it is possible
    if there exists~$i_0$ such that $\trsred{\TRr{S'} \cup
      \TRgg{\Gamma}}{\vpn{S'}(t_i)}{t'_{i_0}}$ with $t'_{i_0} \neq
    \bot(t_{i_0})$, and for all $j < i_0$, we have $\trsred{\TRr{S'} \cup
      \TRgg{\Gamma}}{\vpn{S'}(t_j)}{\bot(t_j)}$. By induction, it means that
    $\trsred{\TRm{S'} \cup \TRmg{\Gamma}}{\vpn{S'}(\symb
      {t_j})}{\bot(\symb{t_j})}$ and $\trsred{\TRm{S'} \cup
      \TRmg{\Gamma}}{\vpn{S'}(\symb{t_{i_0}})}{t'_{i_0}}$ with $\symb{t'_{i_0}}
    \neq \bot(\symb{t_{i_0}})$. Using Lemma~\ref{lemma:one-success}, we have
    \begin{align*}
      & \vpn{\one(S')}(\appl{f}{\symb {t_1} \conss \ldots \conss \symb{t_n}
        \conss \nil}) \\
      \multieval{\RRM}~ &
      \propag{\appl{f}{\varphi'_{S}(\varphi_{S'}(\symb{t_1}), \symb{t_2} \conss
          \ldots \conss \symb{t_n}, \symb{t_1} \conss \nil)}} \\
      \multieval{\RRM}~ & \propag{\appl{f}{\symb{t_1} \conss \ldots \conss
          \symb{t'_{i_0}} \conss \ldots    \conss \symb{t_n} \conss \nil}} \\
    \trs_{\RRM}~ & \appl{f}{\symb{t_1} \conss \ldots \conss \symb{t'_{i_0}}
      \conss \ldots \conss \symb{t_n} \conss \nil} = \symb{t'}
    \end{align*}
    \item $\TRr{S} \Rightarrow \TRm{S}$

      If $t$ is a constant $c$, then it is easy to check that the meta-encoding
      fails with $\bot(\trmeta c)$. Suppose $t = f(t_1, \ldots, t_n)$.  First,
      we assume we obtain $\bot(t'')$ with $t'' \in \TFmeta$. Then the rule
      \[
      \varphi'_{\one(S')}(\bot(\any),\nil, \revtried) \ra \botlist{\reverse{\revtried}}
      \]
      has been applied, meaning that for all $i$, we have
      $\trsred{\TRm{\one(S')} \cup \TRmg{\Gamma}}{\varphi_{S'}(\symb
        {t_i})}{\bot(t''_i)}$ for some $t''_i$. Only the rules of $\TRm {S'}$
      may apply to $\varphi_{S'}(\trmeta{t_i})$, so we have in fact
      $\trsred{\TRm{S'} \cup \TRmg{\Gamma}}{\varphi_{S'}(\symb
        {t_i})}{\bot(t''_i)}$ for all $i$. By induction, $t''_i = \symb{t_i}$ and
      $\trsred{\TRctx{\Gamma}{S'} \cup
        \TRg{\Gamma}{\Gamma}}{\varphi_{S'}(t_i)}{\bot(t_i)}$. From there, we can
      show that $\trsred{\TRctx{\Gamma}{\one(S')} \cup
        \TRg{\Gamma}{\Gamma}}{\varphi_{\one(S')}(f(t_1, \ldots, t_n))}{\bot(f(t_1,
        \ldots, t_n))}$, as wished.

      Now we assume $t'' \in \TFmeta$. It is possible only if the rule
      \[
      \varphi'_{\one(S')}(\Xx \at
      \appl{\any}{\any},\todo,\cons{\any}{\revtried}) \ra
      \rconcat{\revtried}{\cons{\Xx}{\todo}}
      \]
      has been applied. It is possible only if there exists $i$ such that
      $\varphi_{S'}(\trmeta{t_i})$ produces a term $t''_i$ of the form $\appl
      \any \any$, and for all $j < i$, we obtain $\bot(t''_j)$. As a result, we
      have $t'' = \appl {\trmeta f}{\trmeta{t_1} \conss \trmeta{t_2} \conss
        \ldots \conss t''_i \conss \ldots \conss \trmeta{t_n} \conss \nil}$. Only the
      rules of $\TRm {S'}$ may apply to $\varphi_{S'}(\trmeta{t_i})$, so we have
      in fact $\trsred{\TRm{S'} \cup \TRmg{\Gamma}}{\varphi_{S'}(\symb
        {t_i})}{t''_i}$ with $t''_i \in \TFmeta$ and $\trsred{\TRm{S'} \cup
        \TRmg{\Gamma}}{\varphi_{S'}(\symb {t_j})}{\bot(t''_j)}$ for all
      $j<i$. By induction, there exists $t'_i$ such that $\symb{t'_i} = t''_i$
      and $\trsred{\TRctx{\Gamma}{S'} \cup
        \TRg{\Gamma}{\Gamma}}{\varphi_{S'}(t_i)}{t'_i}$, and also $t''_j =
      \bot(\trmeta{t_j})$, $\trsred{\TRctx{\Gamma}{S'} \cup
        \TRg{\Gamma}{\Gamma}}{\varphi_{S'}(t_j)}{\bot(t_j)}$ for all $j<i$. From
      that, we can deduce that $\trsred{\TRctx{\Gamma}{\one(S')} \cup
        \TRg{\Gamma}{\Gamma}}{\varphi_{\one(S')}(f(t_1, \ldots, t_n))}{f(t_1,
        \ldots, t'_i, \ldots, t'_n)}$, as wished.\qedhere
    \end{enumerate}
  \end{enumerate}
\end{proof}

\section{Properties of the many-sorted encoding}
\label{se:propertiesSortedAppendix}

\begin{figure}[!tp]
\begin{mathpar}
\begin{array}{lr@{\hspace{5pt}}c@{\hspace{5pt}}l}
\mathbf{(SE1)}&
  \TRctxS{\Gamma}{\id} & =  & \bigcup\limits_{\sst\in\SO} \{  ~  \varphi_{\id}(\Xx \at \apt{\sst} \bott{\sst}(\any)  ) \ra \Xx,
  \quad
  \varphi_{\id}(\bot(\Xxs{}{\sst})) \ra \bot(\Xxs{}{\sst})~\} \\[+4pt]
\mathbf{(SE2)}&
  \TRctxS{\Gamma}{\fail} & = &   \bigcup\limits_{\sst\in\SO} 
  \{ ~ \varphi_{\fail}(\Xx \at \apt{\sst} \bott{\sst}(\any)  ) \ra \bot(\Xx),
  \quad
  \varphi_{\fail}(\bot(\Xxs{}{\sst})) \ra \bot(\Xxs{}{\sst}) ~\} \\ [+4pt]
\mathbf{(SE3)}&
  \TRctxS{\Gamma}{l \ra r} & = & \{ ~ \varphi_{l\ra r}(l) \ra r ~\} \\[+2pt]
                            &\bigcup\limits_{\sst\in\SO} & \{&    ~
                            \varphi_{l\ra r}(\Xx \at \apt{\sst} l) \ra \bot(\Xx),
                            \quad 
                            \varphi_{l\ra r}(\bot(\Xxs{}{\sst})) \ra \bot(\Xxs{}{\sst}) ~\} 
                            \\[+4pt]
\mathbf{(SE4)}&
  \TRctxS{\Gamma}{\seq{S_1}{S_2}} & = &  \TRctxS{\Gamma}{S_1} ~\cup~ \TRctxS{\Gamma}{S_2}  \\[+2pt]
                          & \bigcup\limits_{\sst\in\SO} & \{ &  
                          \varphi_{\seqsub{S_1}{S_2}}(\Xx \at \apt{\sst}
                          \bott{\sst}(\any)  ) \ra \phiseq( \vpt(\vpo(\Xx)),
                          \Xx),
                          \\[+2pt]
                          & & &                                                    
                          \varphi_{\seqsub{S_1}{S_2}}(\bot(\Xxs{}{\sst})) \ra \bot(\Xxs{}{\sst}),
                          \\[+2pt]
                          &&&\phiseq(\Xx \at \apt{\sst} \bott{\sst}(\any), \Yys{}{\sst}) \ra  \Xx, 
                           \quad
                          \phiseq(\bot(\Yys{}{\sst}), \Xxs{}{\sst}) \ra  \bot(\Xxs{}{\sst}) ~\} 
                          \\[+4pt]
\mathbf{(SE5)}&
  \TRctxS{\Gamma}{\choice{S_1}{S_2}} & =  & \TRctxS{\Gamma}{S_1} ~\cup~ \TRctxS{\Gamma}{S_2}  \\[+2pt]
                           &\bigcup\limits_{\sst\in\SO}  & \{ &  
                           \varphi_{\choicesub{S_1}{S_2}}(\Xx \at \apt{\sst} \bott{\sst}(\any)  ) \ra  \phich(\vpo(\Xx)),
                           \quad
                           \varphi_{\choicesub{S_1}{S_2}}(\bot(\Xxs{}{\sst})) \ra \bot(\Xxs{}{\sst}),
                           \\[+2pt]
                           &&&
                           \phich(\bot(\Xxs{}{\sst})) \ra  \vpt(\Xxs{}{\sst}),
                           \quad
                           \phich(\Xx \at \apt{\sst} \bott{\sst}(\any)) \ra  \Xx ~\}
                           \\[+4pt]
\mathbf{(SE6)}&
  \TRctxS{\Gamma}{\mu X\dotsym S} & =  & \TRctxS{\Gamma; {\Xx\colon S}}{S} \\[+2pt]
                 &\bigcup\limits_{\sst\in\SO}  & \{ &  
                 \varphi_{\mu X\dotsym S}(\Xx \at \apt{\sst} \bott{\sst}(\any)) \ra  \vpS(\Xx),
                 \quad
                 \varphi_{\mu X\dotsym S}(\bot(\Xxs{}{\sst})) \ra \bot(\Xxs{}{\sst}),
                 \\[+2pt]
                 & & &  
                 \varphi_{X}(\Xx \at \apt{\sst} \bott{\sst}(\any)) \ra  \vpS(\Xx),
                 \quad
                 \varphi_{X}(\bot(\Xxs{}{\sst})) \ra \bot(\Xxs{}{\sst})  ~\} 
                 \\[+4pt]
\mathbf{(SE7)}&
  \TRctxS{\Gamma; {\Xx\colon S}}{X} & =  & \emptyset
  \\[+4pt]
\mathbf{(SE8)}&
  \TRctxS{\Gamma}{\all(S)} & =  & \TRctxS{\Gamma}{S}  \\[+2pt]
                 &\bigcup\limits_{\sst\in\SO} & \{ &  
                  \varphi_{\all(S)}(\bot(\Xxs{}{\sst})) \ra \bot(\Xxs{}{\sst})  ~~\} 
                 \bigcup\limits_{c\in \FF^{0}}  \{ \varphi_{\all(S)}(c) \ra  c  ~~\}
                  \\[+2pt]
                  \multicolumn{2}{r}{\bigcup\limits_{f\in\FF^{\supzero}}} &\{ &  
                  \varphi_{\all(S)}(f(\Xxs{1}{\sst_1},\ldots,\Xxs{n}{\sst_n})) \ra  \\[-7pt]
                  &&&
                  \multicolumn{1}{r}{\psiF(\vpS(\Xxs{1}{\sst_1}),\ldots,\vpS(\Xxs{n}{\sst_n}),f(\Xxs{1}{\sst_1},\ldots,\Xxs{n}{\sst_n})),}
                  \\[+2pt]
                  &&&  \psiF(\Xx_1\at\apt{\sst_1}\bott{\sst_1}(\any),\ldots,\Xx_n\at\apt{\sst_n}\bott{\sst_n}(\any),\Yys{}{\sst}) \ra  f(\Xx_1,\ldots,\Xx_n), \\[+2pt]
                  &&&  \psiF(\bot(\Xxs{1}{\sst_1}),\Xxs{2}{\sst_2},\ldots,\Xxs{n}{\sst_n},\Xxs{}{\sst}) \ra  \bot(\Xxs{}{\sst}) , \\[+2pt]
                  &&&  \vdots  \\[+2pt]
                  &&&  \psiF(\Xxs{1}{\sst_1},\Xxs{2}{\sst_2},\ldots,\bot(\Xxs{n}{\sst_n}),\Xxs{}{\sst}) \ra  \bot(\Xxs{}{\sst})  ~~\}
                  \\[+4pt]
\mathbf{(SE9)}&
  \TRctxS{\Gamma}{\one(S)} & = & \TRctxS{\Gamma}{S} \\[+2pt]
                &\bigcup\limits_{\sst\in\SO} & \{ &  
                \varphi_{\one(S)}(\bot(\Xxs{}{\sst})) \ra \bot(\Xxs{}{\sst}) ~~\}
                                                    \bigcup\limits_{c\in \FF^{0}}  \{  
                                                    \varphi_{\one(S)}(c) \ra  \bot(c)  ~~\}\\[+2pt]
                &\bigcup\limits_{f\in\FF^{\supzero}} & \{ &  
                \varphi_{\one(S)}(f(\Xxs{1}{\sst_1},\ldots,\Xxs{n}{\sst_n})) \ra  \psif{1}(\vpS(\Xxs{1}{\sst_1}),\Xxs{2}{\sst_2},\ldots,\Xxs{n}{\sst_n}) ~~\} \\[+2pt]
                \multicolumn{2}{r}{\bigcup\limits_{f\in\FF^{\supzero}}~\bigcup\limits_{1\leq i\leq n}} & \{ &  
                \psif{i}(\bot(\Xxs{1}{\sst_1}),\ldots,\bot(\Xxs{i-1}{\sst_{i-1}}),\Xx_i\at\apt{\sst_i}\bott{\sst_i}(\any),\Xxs{i+1}{\sst_{i+1}},\ldots,\Xxs{n}{\sst_n}) \ra\\[-7pt] 
                &&&
                \multicolumn{1}{r}{f(\Xxs{1}{\sst_1},\ldots,\Xxs{n}{\sst_n}) ~~\} } \\[+2pt]
                \multicolumn{2}{r}{\bigcup\limits_{f\in\FF^{\supzero}}~\bigcup\limits_{1\leq i<n}} &  \{ &  
                \psif{i}(\bot(\Xxs{1}{\sst_1}),\ldots,\bot(\Xxs{i}{\sst_{i}}),\Xxs{i+1}{\sst_{i+1}},\ldots,\Xxs{n}{\sst_{n}}) \ra\\[-7pt] 
                &&&\multicolumn{1}{r}{\psif{i+1}(\bot(\Xxs{1}{\sst_1}),\ldots,\bot(\Xxs{i}{\sst_i}),\vpS(\Xxs{i+1}{\sst_{i+1}}),\Xxs{i+2}{\sst_{i+2}},\ldots,\Xxs{n}{\sst_n})  ~~\}} \\[+2pt]
                \multicolumn{2}{r}{\bigcup\limits_{f\in\FF^{\supzero}}} & \{ &  \psif{n}(\bot(\Xxs{1}{\sst_1}),\ldots,\bot(\Xxs{n}{\sst_n})) \ra  \bot(f(\Xxs{1}{\sst_1},\ldots,\Xxs{n}{\sst_n})) ~~\}
                \\[+4pt]
                 \\[+4pt]
  \mathbf{(SE10)}&
  \TRgS{\Gamma'}{\Gamma; {X\colon S}} & =  &  \TRgS{\Gamma'}{\Gamma} ~\cup~ \TRctxS{\Gamma'}{S}             \\[+2pt]
                           &\bigcup\limits_{\sst\in\SO}  &  \{  & \varphi_{X}(\Xx \at \apt{\sst} \bott{\sst}(\any)) \ra  \vpS(\Xx),
                           \quad
                           \varphi_{X}(\bot(\Xxs{}{\sst})) \ra \bot(\Xxs{}{\sst})   ~\}
  \\[+4pt]
  \mathbf{(SE11)}&
  \TRgS{\Gamma'}{\emptyctx} & =  &  \emptyset
\end{array}
\end{mathpar}
\caption{\label{fig:encodingTyped}Sorted strategy translation. We
  consider that
  $f{\sigd}\sst_1\prosep\ldots\prosep\sst_n{\sarrow}\sst\in\FFs{\sst}^{\supzero}$.
The original signature is enriched with
$\bot,$
$\varphi_{\id}, \varphi_{\fail},$
$\varphi_{l\ra r},$
$\varphi_{\seqsub{S_1}{S_2}},$
$\phiseq,$
$\varphi_{\choicesub{S_1}{S_2}},$
$\phich,$
$\varphi_{\mu X\dotsym S},$
$\varphi_{X},$
$\varphi_{\all(S)},$
$\varphi_{\one(S)}{\sigd}\sst{\sarrow}\sst$, 
and 
$\psiF{\sigd}\sst_1\prosep\ldots\prosep\sst_n\prosep\sst {\sarrow} \sst$, and 
$\psif{i+1}{\sigd}\sst_1\prosep\ldots\prosep\sst_n {\sarrow} \sst$.
}
\end{figure}

\subRed*
\begin{proof}
  By induction on the structure of the strategy $S$ (and the context
  $\Gamma$) and by cases on the rewrite rule applied in the reduction.
  We have that $t \longrightarrow_{\RR} t'$ iff there exist a rule
  $l\ra r\in\RR$, $\omega\in\PPos(t)$, and a well-sorted substitution
  $\sigma$ such that $\stt t \omega = \sigma(l)$ and $t' = \rmp t
  \omega {\sigma(r)}$; we write $t \longrightarrow_{l\ra r} t'$ to
  explicit the applied rule. We first show by cases on the rewrite
  rule applied in the reduction that the property holds for
  $\omega=\varepsilon$. It is then easy to conclude by observing that
  for any term $\rmp{t}{\omega}{u}{\sigd}s$ with $u{\sigd}s'$ we also have
  $\rmp{t}{\omega}{u'}{\sigd}s$ whenever $u'{\sigd}s'$.

  ~\\
  \emph{Base case:} the applied rewrite rule is one the  rules
  in $\TRctxS{\Gamma}{\id}$, $\TRctxS{\Gamma}{\id}$ or
  $\TRctxS{\Gamma}{l \ra r}$.

\begin{enumerate}
\item \label{case:id}
  $t\longrightarrow_{\rho}t'$  with $\rho\in\TRctxS{\Gamma}{\id}$
\begin{enumerate} 
  \item \label{case:apBot}
    $\rho$ corresponds to a rule schema of the form $\varphi_{\id}(\Xx \at \apt{\sst} \bott{\sst}(\any)  ) \ra \Xx$.
    In this case $t=\varphi_{\id}(u)$ with $u{\sigd}\sst$ and there exist
    $v\in\apt{\sst} \bott{\sst}(\any)$ and $\sigma$ s.t. $\sigma(v)=u$. Then
    $t'=u$ and thus $t'{\sigd}\sst$.
  \item  \label{case:bot}
    $\rho$ is a rule of the form  $\varphi_{\id}(\bot(\Xxs{}{\sst})) \ra \bot(\Xxs{}{\sst})$.
    In this case $t=\varphi_{\id}(\bot(u))$ with $u{\sigd}\sst$ and there
    exists $\sigma$ s.t. $\sigma(\Xxs{}{\sst})=u$.  We have thus
    $t'=\bot(u)$ and $t'{\sigd}\sst$.
\end{enumerate}

\item $t\longrightarrow_{\rho}t'$  with $\rho\in\TRctxS{\Gamma}{\fail}$
\begin{enumerate} 
  \item $\rho$ corresponds to a rule schema of the form $\varphi_{\fail}(\Xx \at \apt{\sst} \bott{\sst}(\any)  ) \ra \bot(\Xx)$.
    In this case $t=\varphi_{\fail}(u)$ with $u{\sigd}\sst$ and there exist
    $v\in\apt{\sst} \bott{\sst}(\any)$ and $\sigma$ s.t. $\sigma(v)=u$. Then
    $t'=\bot(u)$ and $t'{\sigd}\sst$.
  \item $\rho$ is a rule of the form  $\varphi_{\fail}(\bot(\Xxs{}{\sst})) \ra \bot(\Xxs{}{\sst})$.
    Similar to case~\ref{case:bot}.
\end{enumerate}

\item $t\longrightarrow_{\rho}t'$  with $\rho\in\TRctxS{\Gamma}{l \ra r}$
\begin{enumerate} 
  \item $\rho$ is a rule  of the form $\varphi_{l\ra r}(l) \ra r$.
    In this case $t=\varphi_{l\ra r}(u)$ and $u{\sigd}\sst$. Then $l,r{\sigd}\sst$ and
    there exists $\sigma$ s.t. $\sigma(l)=u$. Then $t'=\sigma(r)$ and
    $t'{\sigd}\sst$.
  \item $\rho$ corresponds to a rule schema of the form  $\varphi_{l\ra r}(\Xx \at \apt{\sst} l) \ra  \bot(\Xx)$.
    In this case $t=\varphi_{l\ra r}(u)$ with $u{\sigd}\sst$ and there exist
    $v\in\apt{\sst} l$, $v{\sigd}\sst$ and $\sigma$ s.t. $\sigma(v)=u$. Then
    $t'=\bot(u)$ and thus $t'{\sigd}\sst$.

  \item $\rho$ is a rule of the form  $\varphi_{l\ra r}(\bot(\Xxs{}{\sst})) \ra \bot(\Xxs{}{\sst})$.
    Similar to case~\ref{case:bot}.
\end{enumerate}

\end{enumerate}
~\\
\emph{Induction case:} the applied rewrite rule is one the rules in
the encoding of a strategy other than $\id$, $\fail$ or rewrite rule.

\begin{enumerate}\addtocounter{enumi}{3}
\item  \label{case:seq}
$t\longrightarrow_{\rho}t'$  with $\rho\in\TRctxS{\Gamma}{\seq{S_1}{S_2}}$
\begin{enumerate} 
  \item \label{case:seqOK}
    $\rho$ corresponds to a rule schema of the form $\varphi_{\seqsub{S_1}{S_2}}(\Xx \at \apt{\sst} \bott{\sst}(\any)  ) \ra \phiseq( \vpt(\vpo(\Xx)), \Xx)$.
    In this case $t=\varphi_{\seqsub{S_1}{S_2}}(u)$ with $u{\sigd}\sst$ and
    there exist $v\in\apt{\sst} \bott{\sst}(\any)$ and $\sigma$
    s.t. $\sigma(v)=u$. Then $t'=\phiseq(\vpt(\vpo(u)),u)$ and since
    $\vpt,\vpo{\sigd}\sst{\sarrow}\sst$ and $\phiseq{\sigd}\sst,\sst{\sarrow}\sst$ we
    have $t'{\sigd}\sst$.
  \item $\rho$ corresponds to a rule schema of the form  $\phiseq(\Xx \at \apt{\sst} \bott{\sst}(\any), \Yys{}{\sst}) \ra  \Xx$.
    In this case $t=\phiseq(u',u'')$ with $u',u''{\sigd}\sst$. Then $t'=u'$
    and thus $t'{\sigd}\sst$ using the same reasoning as in
    case~\ref{case:apBot}.
  \item $\rho$ is a rule of the form
    $\varphi_{\seqsub{S_1}{S_2}}(\bot(\Xxs{}{\sst})) \ra
    \bot(\Xxs{}{\sst})$.
    Similar to case~\ref{case:bot}.

  \item $\rho$ is a rule of the form or $\phiseq(\bot(\Yys{}{\sst}),
    \Xxs{}{\sst}) \ra \bot(\Xxs{}{\sst})$.
    In this case $t=\phiseq(\bot(u'),u'')$ with $u',u''{\sigd}\sst$. Then
    $t'=\bot(u'')$ and thus $t'{\sigd}\sst$ using the same reasoning as in
    case~\ref{case:apBot}.

  \item $\rho$ is one of the rules in $\TRctxS{\Gamma}{S_1} ~\cup~
    \TRctxS{\Gamma}{S_2}$. Apply th induction hypothesis.
\end{enumerate}

\item $t\longrightarrow_{\rho}t'$  with $\rho\in\TRctxS{\Gamma}{\choice{S_1}{S_2}}$.
  Each of the possible cases is similar to one of the cases for~\ref{case:seq}.

\item   \label{case:mu}
  $t\longrightarrow_{\rho}t'$  with $\rho\in\TRctxS{\Gamma}{\mu X\dotsym S}$.
  Each of the possible cases is similar to one of the cases for~\ref{case:seq}.

\item  \label{case:all}
$t\longrightarrow_{\rho}t'$  with $\rho\in\TRctxS{\Gamma}{\all(S)}$
\begin{enumerate} 
  \item $\rho$ is a rule of the form $\varphi_{\all(S)}(\bot(\Xxs{}{\sst})) \ra \bot(\Xxs{}{\sst})$.
    Similar to case~\ref{case:bot}.
  \item $\rho$ is a rule of the form $\varphi_{\all(S)}(c) \ra  c$.
    In this case $t=\varphi_{\all(S)}(c)$ with $c{\sigd}\sst$ and $t'=c$
    and $t'{\sigd}\sst$.
  \item $\rho$ has the form  $\varphi_{\all(S)}(f(\Xxs{1}{\sst_1},\ldots,\Xxs{n}{\sst_n})) \ra  \psiF(\vpS(\Xxs{1}{\sst_1}),\ldots,\vpS(\Xxs{n}{\sst_n}),f(\Xxs{1}{\sst_1},\ldots,\Xxs{n}{\sst_n}))$.
    In this case $t=\varphi_{\all(S)}(f(u_1,\ldots,u_n))$ with
    $u_1{\sigd}\sst_1,\ldots,u_n{\sigd}\sst_n$ if
    $f{\sigd}\sst_1\prosep\ldots\prosep\sst_n{\sarrow}\sst$.  Since
    $\vpS{\sigd}\sst'{\sarrow}\sst'$ for any $\sst'\in\SO$ and $\psiF {\sigd}
    \sst_1\prosep\ldots\prosep\sst_n\prosep\sst {\sarrow} \sst$ we have $t'{\sigd}\sst$.
  \item $\rho$ corresponds to a rule schema of the form
    $\psiF(\Xx_1\at\apt{\sst_1}\bott{\sst_1}(\any),\ldots,\Xx_n\at\apt{\sst_n}\bott{\sst_n}(\any),\Yys{}{\sst}) \ra  f(\Xx_1,\ldots,\Xx_n)$.
    In this case $t=\psiF(u_1,\ldots,u_n,u)$ with
    $u_1{\sigd}\sst_1,\ldots,u_n{\sigd}\sst_n,u{\sigd}\sst$ if
    $f{\sigd}\sst_1\prosep\ldots\prosep\sst_n{\sarrow}\sst$ and for all $i=1\ldots n$
    there exist $v_i\in\apt{\sst_i} \bott{\sst_i}(\any)$ and $\sigma$
    s.t. $\sigma(v_i)=u_i$.  Then $t'=f(u_1,\ldots,u_n)$ and thus
    $t'{\sigd}\sst$.
  \item $\rho$ is a rule  of the form or $\psiF(\Xxs{1}{\sst_1},\ldots,\bot(\Xxs{i}{\sst_i}),\ldots,\Xxs{n}{\sst_n},\Xxs{}{\sst}) \ra  \bot(\Xxs{}{\sst})$.
    In this case $t=\psiF(u_1,\ldots,\bot(u_i),\ldots,u_n,u)$ with
    $u_1{\sigd}\sst_1,\ldots,u_n{\sigd}\sst_n,u{\sigd}\sst$ if
    $f{\sigd}\sst_1\prosep\ldots\prosep\sst_n{\sarrow}\sst$.
    Then $t'=\bot(u)$ and thus $t'{\sigd}\sst$.

  \item $\rho$ is one of the rules in $\TRctxS{\Gamma}{S}$. Apply th induction hypothesis.
\end{enumerate}

\item $t\longrightarrow_{\rho}t'$  with $\rho\in\TRctxS{\Gamma}{\one(S)}$.
  Each of the  possible cases is similar to one of the cases for~\ref{case:all}.

\item $t\longrightarrow_{\rho}t'$  with $\rho\in\TRgS{\Gamma'}{\Gamma; {X\colon S}}$.
  Each of the possible cases is similar to one of the cases
  for~\ref{case:mu}, for example.\qedhere
\end{enumerate}
\end{proof}

\equivalenceSorted*
\begin{proof}
  Every rewrite rule in $\RR_{\SO}$ is also included in $\RR$ and
  thus, if $t\longrightarrow_{\RR_{\SO}}t'$ then
  $t\longrightarrow_{\RR}t'$.

  For the other direction we proceed by induction on the structure of
  the strategy $S$ (and the context $\Gamma$) and by cases on the
  rewrite rule applied in the reduction.  For simplicity, we consider
  reduction at the top position but as shown in the proof of
  Lemma~\ref{th:subRed} this generalises immediately for any
  application position. We consider in what follows that
  $t\in\TFsf{\sst}{{\FFgen{}}}$ for some $\sst\in\SO$.

  ~\\
  \emph{Base case:} the applied rewrite rule is one the  rules
  in $\TRctx{\Gamma}{\id}$, $\TRctx{\Gamma}{\id}$ or
  $\TRctx{\Gamma}{l \ra r}$.

  \begin{enumerate}
  \item \label{caseEq:id}
    $t\longrightarrow_{\rho}t'$  with $\rho\in\TRctx{\Gamma}{\id}$
    \begin{enumerate} 
    \item $\rho$ is one of the rules corresponding to the rule schema
      $\varphi_{\id}(\Xx \at \ap \bot(\any) ) \ra \Xx$ and thus of the
      form $\varphi_{\id}(f(\Xx_1,\ldots,\Xx_n)) \ra
      f(\Xx_1,\ldots,\Xx_n)$ with $f\in\FF$.  In this case
      $t=\varphi_{\id}(f(t_1,\ldots,t_n))$ and thus $f\in\FFs{\sst}$.
      This rule also corresponds to the rule schema $\varphi_{\id}(\Xx
      \at \apt{\sst} \bott{\sst}(\any) ) \ra \Xx$ from the sorted translation
      and thus $t\longrightarrow_{\RR_{\SO}}t'$.
    \item \label{caseEq:idBot} $\rho$ is the rule
      $\varphi_{\id}(\bot(\Xx)) \ra \bot(\Xx)$. The rule
      $\varphi_{\id}(\bot(\Xxs{}{\sst})) \ra \bot(\Xxs{}{\sst})$ from
      the sorted translation applies also to $t$ which has necessarily
      the form $\varphi_{\id}(\bot(u))$ with $u{\sigd}\sst$ and thus
      $t\longrightarrow_{\RR_{\SO}}t'$.
    \end{enumerate}

  \item $t\longrightarrow_{\rho}t'$ with
    $\rho\in\TRctx{\Gamma}{\fail}$. We proceed similarly as for the
    case~\ref{caseEq:id}.

  \item $t\longrightarrow_{\rho}t'$  with $\rho\in\TRctx{\Gamma}{l \ra r}$
    \begin{enumerate} 
    \item $\rho$ is one of the rules corresponding to the rule schema
      $\varphi_{l\ra r}(\Xx \at \ap l) \ra \bot(\Xx)$ and thus of the
      form $\varphi_{l\ra r}(f(u_1,\ldots,u_n)) \ra
      \bot(f(u_1,\ldots,u_n))$ with $f\in\FF$.  In this case
      $t=\varphi_{l\ra r}(f(t_1,\ldots,t_n))$ and thus
      $f\in\FFs{\sst}$. Consequently, the rule $\rho$ corresponds also
      to the rule schema $\varphi_{l\ra r}(\Xx \at \apt{\sst} l) \ra
      \bot(\Xx)$ from the sorted translation and thus
      $t\longrightarrow_{\RR_{\SO}}t'$.
    \item $\rho$ is the rule $\varphi_{l\ra r}(l) \ra r$.  Since the
      rule is also present in the sorted translation then
      $t\longrightarrow_{\RR_{\SO}}t'$.
    \item $\rho$ is the rule $\varphi_{l\ra r}(\bot(\Xx)) \ra
      \bot(\Xx)$. Similar to the case~\ref{caseEq:idBot}.
    \end{enumerate}

\end{enumerate}
~\\
\emph{Induction case:} the applied rewrite rule is one the rules in
the encoding of a strategy other than $\id$, $\fail$ or rewrite rule.
When the applied rule is not in $\TRctx{\Gamma}{\all(S)}$ or
$\TRctx{\Gamma}{\one(S)}$
we can conclude either by induction or similarly to one of the above
cases.

\begin{enumerate}\addtocounter{enumi}{3}
\item \label{caseEq:all} $t\longrightarrow_{\rho}t'$ with
  $\rho\in\TRctx{\Gamma}{\all(S)}$. Once again we can proceed either
  by induction or similarly to one of the above cases except for one
  case:
  \begin{enumerate} 
  \item 
    $\rho$ corresponds to a rule schema
    $\psiF(\Xx_1\at\ap\bot(\any),\ldots,\Xx_n\at\ap\bot(\any),\any)
    \ra f(\Xx_1,\ldots,\Xx_n)$ and thus is a rewrite rule of the form
    $\psiF(f_1(\Xx_1^1,\ldots,\Xx_m^1),\ldots,f_n(\Xx_1^n,\ldots,\Xx_p^n),\Xx)\ra$
    $f(f_1(\Xx_1^1,\ldots,\Xx_m^1),\ldots,f_n(\Xx_1^n,\ldots,\Xx_p^n))$
    with $f_1,\ldots,f_n,f\in\FF$. Since the rewrite rule can be applied to
    the well-sorted term $t$ then
    $f_1\in\FFs{\sst_1},\ldots,f_n\in\FFs{\sst_n},$ $f\in\FFs{\sst}$ and
    consequently, this rewrite rule also corresponds to the rule schema
    $\psiF(\Xx_1\at\apt{\sst_1}\bott{\sst_1}(\any),\ldots,\Xx_n\at\apt{\sst_n}\bott{\sst_n}(\any),\Yys{}{\sst})
    \ra f(\Xx_1,\ldots,\Xx_n)$ from the sorted translation. Thus
    $t\longrightarrow_{\RR_{\SO}}t'$.
  \end{enumerate}
\item $t\longrightarrow_{\rho}t'$ with
  $\rho\in\TRctxS{\Gamma}{\one(S)}$. Similar to the cases for~\ref{caseEq:all}.\qedhere
\end{enumerate}
\end{proof}

\simulationSorted*
\begin{proof}
Follows immediately from Lemma~\ref{th:equivalenceSorted} and Theorem~\ref{th:simulation}.
\end{proof}

\end{document}